\RequirePackage[bookmarksnumbered,unicode]{hyperref}
\documentclass[sigplan,10pt]{acmart}
\usepackage{xspace}
\usepackage{threeparttable}
\usepackage{color}
\usepackage{url}
\usepackage{subcaption}
\usepackage{paralist}
\usepackage{wrapfig}
\usepackage{multirow}
\usepackage{listings}
\usepackage{comment}
\usepackage{booktabs}
\usepackage{makecell}
\usepackage{boxedminipage}
\usepackage{amsmath}
\usepackage{graphicx}
\usepackage{minibox}
\usepackage{xcolor}
\usepackage{subfiles}
\usepackage[cachedir=.]{minted}
\usepackage{ulem}
\usepackage{adjustbox}
\usepackage{pifont}
\usepackage{amsthm}
\usepackage{natbib}
\usepackage{algorithm}
\usepackage{float}
\usepackage{multicol}
\usepackage{diagbox}
\usepackage{slashbox}
\usepackage{balance}
\usepackage{algpseudocode}
\usepackage{tikz}
\usetikzlibrary{positioning,arrows.meta,shapes}

\renewcommand\footnotetextcopyrightpermission[1]{} 
\usepackage{xcolor}

\definecolor{purple}{RGB}{105,33,106} 
\definecolor{darkred}{RGB}{154,48,53} 
\definecolor{paramBlue}{RGB}{39,78,105}
\definecolor{darkgreen}{rgb}{0.0, 0.5, 0.0}

\newcommand{\bheading}[1]{{\vspace{4pt}\noindent{\textbf{#1}}}}
\newcommand{\blackding}[1]{\ding{\numexpr181+#1\relax}}
\newcommand{\iheading}[1]{{\vspace{2pt} \noindent{\textit{#1}}}}

\newcounter{note}[section]

\usepackage{algorithm}
\usepackage{algpseudocode}
\usepackage{xspace}

\newcommand{\ssecref}[1]{\mbox{\S\ref{#1}}\xspace}

\newcommand{\figref}[1]{\mbox{Fig.~\ref{#1}}}

\newcommand{\ignore}[1]{}

\newcommand{\ie}{\textit{i.e.}\xspace}
\newcommand{\eg}{\textit{e.g.}\xspace}

\newcommand{\etal}{\textit{et al.}\xspace}

\newcommand{\TEE}{\textsc{TEE}\xspace}
\newcommand{\sysname}{\textsc{Raftel}\xspace}
\newcommand{\csysname}{\textsc{Chained-\sysname}\xspace}
\newcommand{\sysnameout}{\textsc{Engraft-\uppercase\expandafter{\romannumeral2}}\xspace}

\newcommand{\counter}{\textsc{Checker+}\xspace}

\newcommand{\damysus}{{Damysus}\xspace}

\newcommand{\hotstuff}{{HotStuff}\xspace}
\newcommand{\achilles}{{Achilles}\xspace}

\newcommand{\nv}{\textcolor{paramBlue}{\textsf{nv}}\xspace}
\newcommand{\prep}{\textcolor{paramBlue} {\textsf{prep}}\xspace}
\newcommand{\preco}{\textcolor{paramBlue}{\textsf{prec}}\xspace}
\newcommand{\com}{\textcolor{paramBlue}{\textsf{com}}\xspace}

\newcommand{\quorum}{{\textit{quorum}}\xspace}
\newcommand{\prepare}{{\textit{prepare}}\xspace}
\newcommand{\precommit}{{\textit{pre-commit}}\xspace}
\newcommand{\commit}{{\textit{commit}}\xspace}

\newcommand{\newview}{{\textit{new view}}\xspace}

\newcommand{\aif}{{\textbf{if}}\xspace}
\newcommand{\aelse}{{\textbf{else}}\xspace}
\newcommand{\athen}{{\textbf{then}}\xspace}
\newcommand{\aendif}{{\textbf{endif}}\xspace}
\newcommand{\return}{{\textbf{return}}\xspace}

\newcommand{\verify}{\textcolor{purple}{\textsf{VERIFY}}\xspace}

\newcommand{\hash}{\textcolor{purple}{\textsf{H}}\xspace}
\newcommand{\leader}{\textcolor{purple}{\textsf{ldr}}\xspace}
\newcommand{\istee}{\textcolor{purple}{\textsf{isTEE}}\xspace}
\newcommand{\tmatch}{\textcolor{purple}{\textsf{T-match}}\xspace}
\newcommand{\mmatch}{\textcolor{purple}{\textsf{M-match}}\xspace}
\newcommand{\combine}{\textcolor{purple}{\textsf{Combine}}\xspace}
\newcommand{\teeprepare}{\textcolor{purple}{\textsf{TEEprepare}}\xspace}
\newcommand{\teeview}{\textcolor{purple}{\textsf{TEEview}}\xspace}
\newcommand{\teestore}{\textcolor{purple}{\textsf{TEEstore}}\xspace}

\newcommand{\teesign}{\textcolor{purple}{\textsf{TEEsign}}\xspace}
\newcommand{\priobroadcast}{\textcolor{purple}{\textsf{PrioBroadcast}}\xspace}

\newcommand{\creatleaf}{\textcolor{purple}{\textsf{createLeaf}}\xspace}
\newcommand{\creatchain}{\textcolor{purple}{\textsf{createChain}}\xspace}
 
\newcommand{\msecs}{\ensuremath{\mathrm{ms}}\xspace}

\newcounter{packednmbr}

\newenvironment{packeditemize}{
\begin{list}{$\bullet$}{
\setlength{\labelwidth}{0pt}
\setlength{\itemsep}{2pt}
\setlength{\leftmargin}{\labelwidth}
\addtolength{\leftmargin}{\labelsep}
\setlength{\parindent}{0pt}
\setlength{\listparindent}{\parindent}
\setlength{\parsep}{1pt}
\setlength{\topsep}{1pt}}}{\end{list}}

\begin{document}

\title{Breaking Fault Lines: Unifying TEE-Assisted BFT Consensus in Partially Trusted Worlds}


\author{Xiaoqing Wen}
\affiliation{
  \institution{University of British Columbia}
  \city{Kelowna}
  \country{Canada}
}

\author{Tong Liu}
\affiliation{
  \institution{Southern University of Science and Technology}
  \city{Shenzhen}
  \country{China}
}

\author{Jianyu Niu}
\affiliation{
  \institution{City University of Hong Kong}
  \city{Hong Kong}
  \country{Hong Kong}
}

\author{Jialin Li}
\affiliation{
  \institution{National University of Singapore}
  \city{Singapore}
  \country{Singapore}
}

\author{Cong Wang}
\affiliation{
  \institution{City University of Hong Kong}
  \city{Hong Kong}
  \country{Hong Kong}
}

\author{Yinqian Zhang}
\affiliation{
  \institution{Southern University of Science and Technology}
  \city{Shenzhen}
  \country{China}
}

\author{Chen Feng}
\affiliation{
  \institution{University of British Columbia}
  \city{Kelowna}
  \country{Canada}
}
\begin{abstract}

This paper revisits TEE-assisted BFT under a \textit{universal partial-TEE model}, where an arbitrary subset of replicas execute inside TEEs while the remaining replicas operate without hardware trust guarantees.
We show that heterogeneous trust changes the structure of quorum formation and fault tolerance. In particular, we derive a tight resilience bound $f < \max \{\frac{n}{3}, \frac{m}{2} \}$, 
where $n$ is the total number of replicas and $m$ is the number of TEE-enabled replicas. The result reveals a sharp threshold phenomenon: TEEs improve fault tolerance only once they exceed two-thirds of the deployment.

Guided by this characterization, we introduce two protocol principles: (1) a dual-quorum construction that safely combines TEE-only and mixed quorums, and (2) a TEE-leader fast path that leverages hardware-enforced non-equivocation to reduce both consensus and view-change latency.
We realize these ideas in \sysname, which is, to our knowledge, the first HotStuff-style BFT protocol designed explicitly for arbitrary partial-TEE deployments, and in \csysname, a pipelined variant that further accelerates mixed-trust execution.
We implement both protocols atop Intel SGX and evaluate them in LAN and WAN environments. Our results show that \sysname achieves up to 625 TPS with sub-670\,ms latency in WAN settings, outperforming HotStuff by up to 308 TPS in throughput while approaching the performance of fully TEE-assisted protocols.

\end{abstract}

\maketitle
 
\section{Introduction}\label{sec:intro}
Byzantine fault-tolerant (BFT) consensus protocols lie at the heart of distributed systems such as blockchains~\cite{algorand, hanke2018dfinity}.
They allow a network of replicas to agree on a sequence of transactions despite adversarial behaviors by a subset of replicas~\cite{pbft, HotStuffYin2019}. 
This strong fault model provides robustness under adversarial behavior, but incurs substantial replication and communication overhead. 
To tolerate $f$ Byzantine faults, classical BFT requires at least $3f + 1$ replicas—compared to $2f + 1$ for crash fault tolerance—and incurs heavy communication overhead, often involving three or more phases per decision~\cite{fab}.
These costs limit the scalability of BFT protocols in practice.

Trusted Execution Environments (TEEs), such as Intel SGX~\cite{sgx2013}, offer a promising way to bridge this gap.
By preventing equivocation through hardware guarantees, TEEs can reduce the quorum size (as low as $2f + 1$) and communication rounds of BFT protocols.
In fact, prior TEE-assisted BFT protocols~\cite{yandamuri, damysus, achilles} have demonstrated near-CFT performance while preserving Byzantine resilience.  

Unfortunately, most existing designs in the literature rely on a strong and increasingly unrealistic assumption: \textit{all} replicas are equipped with TEEs. 
In practice, modern deployments are inherently \textit{heterogeneous}. Cloud providers expose different trusted-computing technologies with varying capabilities and trust assumptions. TEE-enabled instances are often restricted to specific hardware types, regions, or pricing tiers. Large distributed systems evolve incrementally, making simultaneous migration of all replicas impractical. In permissionless and decentralized settings, requiring universal TEE provisioning further undermines openness and raises participation barriers.

Consequently, emerging systems increasingly operate in a \textit{mixed-trust regime}, where only a subset of replicas can provide hardware-enforced guarantees. This setting fundamentally challenges existing consensus designs. Classical BFT assumes that no replicas are trusted, while prior TEE-assisted protocols assume homogeneous trust across all replicas. Real deployments satisfy neither assumption. Instead, consensus protocols must operate under \textit{heterogeneous trust}, where TEE-enabled and non-TEE replicas coexist within the same protocol execution.
This raises a fundamental question: \textit{How should BFT consensus protocols exploit heterogeneous trust under arbitrary mixtures of TEE and non-TEE replicas?}

In this paper, we revisit TEE-assisted BFT under a \textit{universal partial-TEE model} that generalizes both classical BFT~\cite{HotStuffYin2019, pbft} and fully TEE-assisted consensus~\cite{Kapitza:2012:CheapBFT, achilles, damysus}. We consider a deployment of $n$ replicas, among which only $m$ replicas are TEE-enabled. Within this model, we derive a tight fault-tolerance bound:
\[
f < \max \!\left\{ \tfrac{n}{3}, \tfrac{m}{2}\right\},
\]
where $f$ denotes the number of Byzantine replicas.

This characterization reveals an important and previously overlooked phenomenon: partial TEE deployment does not improve resilience monotonically. Instead, the system exhibits a sharp threshold effect.
When TEE-enabled replicas remain below two-thirds of the deployment (\ie, ${m}/{2} \le {n}/{3}$), fault tolerance is still fundamentally bounded by classical BFT limits; TEEs can improve efficiency, but not resilience.
Only once TEE-enabled replicas exceed this threshold (\ie ${m}/{2} > {n}/{3}$) does hardware-enforced non-equivocation directly increase fault tolerance.

Guided by this characterization, we identify two protocol principles for consensus under heterogeneous trust:
\begin{packeditemize}

\item \textbf{Dual-quorum construction.} Because TEE-enabled replicas cannot equivocate, safety can be established using either a TEE-only quorum or a classical mixed quorum. We therefore introduce a dual-quorum design that safely combines:
(i) a TEE-Quorum consisting exclusively of TEE votes, and
(ii) a Mixed-Quorum consisting of arbitrary replicas.
When sufficient TEE replicas are available, the protocol commits using smaller and faster TEE-Quorums; otherwise, it safely falls back to classical BFT quorum formation.

\item \textbf{TEE-leader fast path.} 
Classical BFT protocols require additional communication phases to guard against leader equivocation during proposal and view-change.
Under a TEE-enabled leader, however, equivocation is prevented by a TEE-protected trusted component, authenticated via hardware attestation~\cite{johnson2016intel, MAGE}.  We exploit this property to bypass two phases and accelerate view synchronization, reducing both consensus latency and leader-change overhead while preserving compatibility with non-TEE execution paths.

\end{packeditemize}

We realize these ideas in \sysname\footnote{\sysname is a mythical island in One Piece~\cite{OnePiece}, where all voyages converge and the ultimate truth uniting the world is said to reside.}, to the best of our knowledge, the first HotStuff-style BFT protocol designed explicitly for arbitrary partial-TEE deployments.
\sysname builds on the chained structure and rapid leader rotation of HotStuff~\cite{HotStuffYin2019} and \damysus~\cite{damysus}, while introducing heterogeneous quorum formation and TEE-aware fast paths.
To support these mechanisms efficiently, we consolidate \damysus’s trusted components into a unified trusted module, \counter, which minimizes enclave transitions while supporting TEE-specific protocol logic.

We further propose \csysname, a pipelined variant that extends the chained execution model. In particular, \csysname introduces a \textit{pipelined commit rule} in which blocks proposed by non-TEE leaders can be committed earlier once extended by TEE-led proposals. This optimization reduces latency even in mixed-trust deployments where only a subset of leaders are TEE-enabled.

We implement \sysname and \csysname atop \damysus using Intel SGX and evaluate them in both LAN and WAN environments. We compare against \hotstuff, \damysus, and \achilles across deployments with up to $f=32$ Byzantine faults. Our experimental results show that \sysname achieves up to 625 TPS with sub-670\,ms latency in WAN settings. Compared with \hotstuff, \sysname improves throughput by up to $1.02\times$ while reducing latency by 61\%, approaching the performance of fully TEE-assisted protocols.

\bheading{Contributions.} The main contributions are as follows:

\begin{packeditemize}
\item Universal partial-TEE model. We formalize a universal model for TEE-assisted BFT consensus under arbitrary mixtures of TEE and non-TEE replicas, and derive a tight fault-tolerance bound characterizing consensus under heterogeneous trust.

\item Design principles for heterogeneous trust. We introduce a dual-quorum construction and TEE-aware fast paths that exploit hardware-enforced non-equivocation while remaining safe under arbitrary partial deployment.

\item Protocol design and implementation. We design and implement \sysname and \csysname, the first HotStuff-style protocols supporting arbitrary partial-TEE deployments.

\item Evaluation. We evaluate our prototype on Intel SGX in LAN and WAN settings, demonstrating substantial performance improvements over classical BFT protocols while approaching the performance of fully TEE-assisted systems.

\end{packeditemize}

\section{Related Work and Motivation}\label{sec:related}

\subsection{Related work}
We survey prior work on BFT consensus, focusing on TEE-assisted BFT protocols and modern Hybrid BFT protocols.

\bheading{TEE-assisted BFT consensus.} 
TEE-assisted protocols exploit hardware-enforced non-equivocation to reduce quorum sizes and communication phases. 
Early work, such as Hybster~\cite{Behl:2017:HSS}, uses trusted counters within TEEs to parallelize consensus instances, while FastBFT~\cite{fastBFT} leverages TEEs to implement secret sharing and achieve $O(n)$ communication complexity. 
\damysus~\cite{damysus} builds atop modern BFT designs such as HotStuff~\cite{HotStuffYin2019}, enabling three-phase commit with linear communication complexity.
Subsequent works, including OneShot~\cite{decouchant2024oneshot}, FlexiBFT~\cite{gupta2022dissecting}, and \achilles~\cite{achilles}, further reduce the number of communication phases or improve protocol efficiency.
Another line of work executes the entire transaction processing pipeline inside TEEs to provide confidentiality guarantees~\cite{russinovich:2019:ccf, ENGRAFT, HyperSrds}. 
Our work focuses on protocols that leverage the integrity guarantees of TEEs, although the theoretical insights derived here can also apply to confidentiality-oriented systems.

{\bheading{Hybrid fault models.}
Prior hybrid-fault models classify replicas according to failure semantics (\eg, crash versus Byzantine behavior). Early work~\cite{meyer2002consensus, thambidurai1988interactive} introduced dual-failure models and proposed protocols tolerating bounded combinations of crash and Byzantine faults. UpRight~\cite{upright} further showed that distinguishing crash faults from Byzantine faults can improve fault tolerance compared to traditional BFT models. Scrooge~\cite{scrooge} explored mixed crash/Byzantine settings to reduce the cost of fast Byzantine replication in the presence of unresponsive replicas through replier quorums and message histories. 
Subsequent systems investigated additional hybrid assumptions. VFT~\cite{vft}, XFT~\cite{xft}, and RR~\cite{rr} study orthogonal dimensions, including correlated failures, network synchrony, partition tolerance, and rollback attacks. However, these systems assume homogeneous replicas under a uniform fault model, where all replicas are treated identically and may potentially exhibit the same classes of faults. As a result, protocol design cannot exploit replica-specific identities or hardware-constrained behaviors to optimize consensus execution.

SeeMoRe~\cite{seemore} adopts a mixed-trust deployment model in which private-cloud replicas are trusted (crash-only), while public-cloud replicas may behave Byzantine, and leverages trusted leaders to reduce communication phases. However, its optimization mainly relies on trusted leadership and does not fundamentally change quorum construction.}

\bheading{Hybrid TEE systems.} 
Recent works have recognized the practicality of partial and incremental adoption~\cite{gao2022mixed, king2023parteetor, sinha2019luciditee}. 
ParTEETor~\cite{king2023parteetor} demonstrates that even limited TEE penetration in Tor can enhance resistance against deanonymization without degrading performance.
In secure multiparty computation, LucidiTEE~\cite{sinha2019luciditee} shows that fairness exchange can be achieved if only $t$ out of $n$ participants are TEE-enabled (instead of all in~\cite{fairMPC}), while tolerating up to $t$ ($t < n$) malicious parties. 
Inspired by this line of work, we study BFT consensus under partial TEE deployment; whereas prior efforts primarily target security, our work also emphasizes efficiency.
Closer to our setting, Mixed Fault Tolerance (MFT) protocol~\cite{gao2022mixed} studies BFT consensus with partially TEE-enabled replicas, in which $n = 3f+2$ replicas are required to tolerate $f$ faults for leader election safety---a stricter bound than the conventional $3f+1$ bound. 
However, MFT does not provide a systematic analysis of how the number of TEE-enabled replicas affects the fault-tolerance bound. 
Besides, MFT does not fully exploit the availability of TEEs to optimize protocol efficiency; in particular, it overlooks opportunities such as dual-quorum formation and fast commit rules. 

\subsection{Why a Universal Partial-TEE Model?} \label{subsec:motivation}

Existing TEE-assisted BFT protocols assume that every replica is equipped with a TEE. 
This assumption is often unrealistic in practice, where deployments are heterogeneous and only a subset of replicas may have access to TEE support. 

\begin{packeditemize}{
    \item \textit{TEE-enabled devices are not yet universally deployable.} 
    TEEs are increasingly available across both cloud and edge platforms, including server-grade TEEs such as Intel SGX/TDX and AMD SEV-SNP, as well as Arm TrustZone on smartphones and mobile devices\footnote{Intel SGX has been deprecated on client-class processors~\cite{sgx-intel}.}. However, TEE support is still limited to specific processor generations, server models, cloud instance types, or deployment configurations. 
    Major cloud providers, including AWS, Alibaba Cloud, Microsoft Azure, and Google Cloud, provide limited TEE-enabled offerings. For example, AWS currently offers AMD SEV-SNP only on a few instance families (\texttt{M6a}, \texttt{C6a}, \texttt{R6a}) in select regions, often with additional cost overhead~\cite{AWS}.

    \item \textit{Incremental deployment is necessary.} Real-world systems evolve gradually. Migrating from non-TEE to fully TEE-enabled infrastructures takes time, requiring both hardware replacement and software redesign. This process naturally creates transitional stages where only a subset of replicas are TEE-enabled, making partial deployments both common and realistic~\cite{king2023parteetor}.

    \item \textit{Hybrid infrastructure naturally arises in practical distributed deployments.} 
    Many consortium, partially decentralized, and large-scale distributed systems operate across organizations and infrastructures with different hardware capabilities and operational constraints~\cite{king2023parteetor,buford2009p2p}. In practice, some replicas may provision TEEs while others rely on commodity hardware due to differences in cost, deployment policies, hardware availability, or operational requirements. Consequently, these systems naturally form mixed-replica environments rather than uniformly TEE-enabled deployments.

  }

\end{packeditemize}

These observations motivate a \textit{universal} TEE-assisted BFT protocol that generalizes both classical BFT and fully TEE-assisted BFT, while capturing the mixed deployments common in practice.

\section{System Model and Goals}

\subsection{System Model}\label{subsec:model}
We consider a system of $n$ replicas, among which $m$ replicas are equipped with TEEs and the remaining $n-m$ replicas execute without TEEs. 
We denote the set of TEE-enabled replicas by $\mathcal{S}_{\TEE}$, where $|\mathcal{S}_{\TEE}|=m$.
We refer to replicas in $\mathcal{S}_{\TEE}$ as \textit{TEE replicas} and all others as \textit{non-TEE replicas}.
TEE replicas provide hardware-enforced execution integrity and non-equivocation guarantees, whereas non-TEE replicas may behave arbitrarily.
We assume that each replica's TEE status remains fixed during a protocol execution; dynamic reconfiguration is discussed in Appendix F.

The model captures both classical BFT and fully TEE-assisted BFT as special cases. When $m=0$, no replica has TEEs, and the setting reduces to classical BFT. When $m=n$, all replicas are TEE-enabled and the setting reduces to fully TEE-assisted BFT. Our focus is the general case $0<m<n$, where replicas operate under heterogeneous trust.

\bheading{Cryptographic setup.}
We assume a standard public-key infrastructure (PKI) for authenticating replicas and messages.
Each replica $p_i$ has a public/private key pair $(pk_i,sk_i)$.
For TEE replicas, the signing (private) key used by the trusted component is generated and stored inside TEEs and cannot be extracted by the untrusted host.
TEE replicas also support remote attestation~\cite{johnson2016intel, MAGE},
allowing other replicas to verify both the identity of a TEE replica and the code executed inside its trusted component. Thus, all replicas can reliably determine whether a message was produced by a TEE-backed trusted component or by ordinary replica software.

\bheading{Network model.}
We adopt the standard partially synchronous network model~\cite{dwork1988consensus} used by many BFT protocols~\cite{pbft, HotStuffYin2019, fastHotStuff}. 
Before the unknown Global Stabilization Time (\textsf{GST}), messages may be delayed arbitrarily.
After GST, there exists a known bound $\Delta$ such that every message sent between honest replicas is delivered within $\Delta$.
Replicas communicate over authenticated channels, and the adversary may delay, drop, reorder, or inject messages subject to cryptographic unforgeability.

\bheading{Fault model.}
The system tolerates up to $f$ Byzantine replicas, chosen from both TEE and non-TEE ones.
Byzantine replicas may collude. 
For ease of presentation, we consider a Byzantine adversary that controls all Byzantine replicas.
We distinguish three cases.

\begin{packeditemize}
    \item \textbf{Honest replicas.} Honest replicas, whether TEE-enabled or not, follow the protocol.

    \item \textbf{Byzantine non-TEE replicas.}  A corrupted non-TEE repli-ca may equivocate, forge local state, send conflicting protocol messages, omit messages, or otherwise behave arbitrarily, subject only to standard cryptographic assumptions~\cite{pbft, HotStuffYin2019}.

    \item \textbf{Byzantine TEE replicas.} A corrupted TEE replica has an adversarial host operating system and untrusted application environment.
    The adversary may schedule, delay, replay, or reorder inputs and outputs to trusted components, and may invoke them with arbitrary inputs.
    However, it cannot extract secrets from the TEE, forge TEE-generated signatures, or cause the trusted component to execute code other than the attested protocol logic~\cite{damysus, achilles, gai2021dissecting, decouchant2024oneshot}. 

\end{packeditemize}

{
\bheading{TEE assumptions and non-goals.}
Our protocol relies on the integrity of the trusted component and the confidentiality of keys stored inside it.
We do not attempt to defend against transient-execution attacks~\cite{Schwarz:2019:zombieload, van:2019:ridl, chen:2019:sgxpectre}, micro-architectural side-channel attacks~\cite{xu2015controlled, shinde2016preventing, wang2017leaky, van2017telling, werner2019severest}, or rollback/forking attacks~\cite{Rote, narrator}, as these are largely orthogonal to our protocol design. 
For example, rollback and forking attacks can be addressed using state-continuity mechanisms~\cite{Rote, 2023ccf, 2025recipe}, which ensure monotonic evolution of trusted state.
Such mechanisms may add local storage or cryptographic overhead, but do not change the quorum structure or communication pattern analyzed in this paper.
}

\subsection{System Goals}

Clients submit transactions to replicas, which order them into a sequence of blocks. Each block extends a parent block, forming a hash-linked chain. The protocol must satisfy:

\begin{packeditemize}
    \item \textsl{Safety:} If two honest replicas commit two blocks $b$ and $b^{\prime}$ at the same height, then $b = b^{\prime}$. 

    \item \textsl{Liveness:} After \textsf{GST}, every transaction submitted by an honest client is eventually included in a block committed by honest replicas, assuming the client retransmits to honest replicas as needed.
    
\end{packeditemize} 

\section{Pushing the Limits: Fault Tolerance Bound and Design Principles} \label{sec:bounds}

We now characterize the resilience limits of consensus under heterogeneous trust and derive the protocol principles that guide \sysname. The key question is how the availability of $m$ TEE replicas (among $n$ total replicas) affects both fault tolerance and efficient quorum formation.

\subsection{Fault-Tolerance Bound}\label{subsec:bound}

We first recall the two homogeneous extremes. In classical BFT, where no replica is trusted, consensus requires $n \ge 3f+1$ replicas to tolerate $f$ Byzantine faults~\cite{lamport1982byzantine}. That is, $f < n/3$ when $m=0$. In fully TEE-assisted BFT, where all replicas are equipped with trusted components that prevent equivocation, consensus can tolerate up to a minority of Byzantine replicas, requiring only $n \ge 2f+1$ replicas~\cite{clement2012limited}. That is, $f < n/2$ when $m = n$.

The partial-TEE setting lies between these extremes but is not obtained by simply interpolating between them. We will show that consensus under heterogeneous trust has the following \textbf{tight resilience bound}:
\[
f < \max \!\left\{ \tfrac{n}{3}, \tfrac{m}{2}\right\}.
\]

This bound says that, to tolerate $f$ Byzantine faults, it suffices that either
\[
n \ge 3f+1 \quad \text{or} \quad m \ge 2f+1.
\]
The first condition corresponds to classical BFT: even without enough TEE replicas, safety can be maintained using mixed quorums over the full replica set. The second condition corresponds to a \textit{TEE-dominated regime}: if sufficiently many replicas are TEE-enabled, their non-equivocation guarantees alone can support safe quorum formation.

This sharp transition is central to our design. It implies that TEEs should be used in two distinct ways: to form smaller quorums when enough TEE replicas are available, and to accelerate protocol phases whenever a TEE leader can prevent equivocation.

The proof for the above resilience bound extends the classical indistinguishability argument of Byzantine agreement~\cite{lamport1982byzantine} to the heterogeneous trust setting. 
By partitioning replicas into multiple groups and exploiting equivocation among non-TEE replicas, two honest groups can be forced to commit conflicting values, violating Safety. 
The full proof appears in Appendix~\ref{app:proof}.
The bound is also achievable, as demonstrated by the protocol presented later in this paper.

\subsection{Principle 1: Dual-Quorum Construction}\label{subsec:dualqc}

The resilience bound suggests that a protocol for heterogeneous trust should not rely on a single quorum rule. Instead, it should support two quorum types.

\begin{packeditemize}
    \item \textbf{TEE-Quorum}: A TEE-Quorum consists of exclusively at least $Q_T$ votes from TEE replicas, where $Q_T = \max \{ \left\lfloor \tfrac{m}{2} \right\rfloor+1, f+1\}$. 
    The majority term ensures that any two TEE-Quorums intersect in at least one TEE replica. The $f+1$ term ensures that the quorum contains at least one honest TEE replica. Since TEE-backed votes are non-equivocating even when the host is Byzantine, the intersection is sufficient to prevent conflicting certificates.
    \item \textbf{Mixed-Quorum}: A Mixed-Quorum consists of at least $Q_M$ votes from arbitrary replicas,
    where $Q_M = n-f $. This is the standard BFT quorum size. Any two Mixed-Quorums intersect in at least $n-2f$ replicas, which contains an honest replica when $n \ge 3f+1$.
\end{packeditemize}

\bheading{Safety across quorum types.}
The key requirement is not merely that each quorum type is safe in isolation, but that different quorum types remain safe when used interchangeably. Under the bound above, any two valid quorums---TEE/TEE, mixed/mixed, or TEE/mixed---intersect in at least one replica that cannot equivocate. This property allows the protocol to form certificates using whichever quorum becomes available first, while preserving a single global safety argument.

This dual-quorum construction lets the protocol adapt to heterogeneous deployments. When enough TEE replicas respond quickly, the leader can form a smaller TEE certificate. When TEE replicas are slow, unavailable, or insufficient, the protocol falls back to a classical Mixed-Quorum without changing the safety rule.
The formal quorum construction and correctness proof are presented in Section~\ref{sec:protocol} and Appendix~\ref{app:correctproof}, respectively.

\subsection{Principle 2: TEE-Leader Fast Path}\label{subsec:tee-leader}

The second design principle exploits a different consequence of trusted hardware: a TEE-enabled leader cannot equivocate.

\iheading{1) TEE-leader fast path.}
Classical BFT protocols require additional phases to protect against a Byzantine leader that proposes conflicting blocks in the same view. For example, prepare-style phases ensure that replicas observe a consistent proposal before committing.
By contrast, a TEE-enabled leader can produce at most one valid proposal for a given view, and the proposal is bound to the attested protocol logic. Thus, the protocol can safely bypass phases whose sole purpose is to defend against leader equivocation.
In \sysname, this yields a fast path for TEE leaders: consensus can complete with fewer communication phases, while non-TEE leaders use the standard multi-phase path. Thus, the protocol remains safe under arbitrary leader schedules but obtains lower latency whenever the current leader is TEE-enabled.

\iheading{2) Fast view-change.}
TEE-based non-equivocation also simplifies view change.  
In classical BFT, a new leader typically collects new-view messages or the highest prepared certificate to determine which block is safe to extend, because the previous leader may have equivocated. 
If the previous leader was TEE-enabled, however, there can be at most one valid proposal from that view.

Therefore, the next leader can safely inherit the unique TEE-certified proposal rather than reconstructing safety solely from a full set of new-view messages. This reduces leader-change latency and improves responsiveness under frequent leader rotation. Importantly, this optimization is conditional: when the previous leader is not TEE-enabled, the protocol falls back to the standard view-change rule.

\subsection{Implications}

Together, these principles translate the fault-tolerance characterization into protocol structure. The dual-quorum construction exploits heterogeneous trust at the quorum level, while the TEE-leader fast path and fast view change exploit non-equivocation at the leader level. This separation is important: even when TEEs are too sparse to improve resilience, they can still improve performance by reducing quorum size, shortening the commit path, or accelerating leader transitions.
The next sections instantiate these principles in \sysname and its pipelined variant, \csysname.

\section{\sysname Design}\label{sec:protocol}

\begin{figure}[t]
\centering
\includegraphics[width=8.5cm]{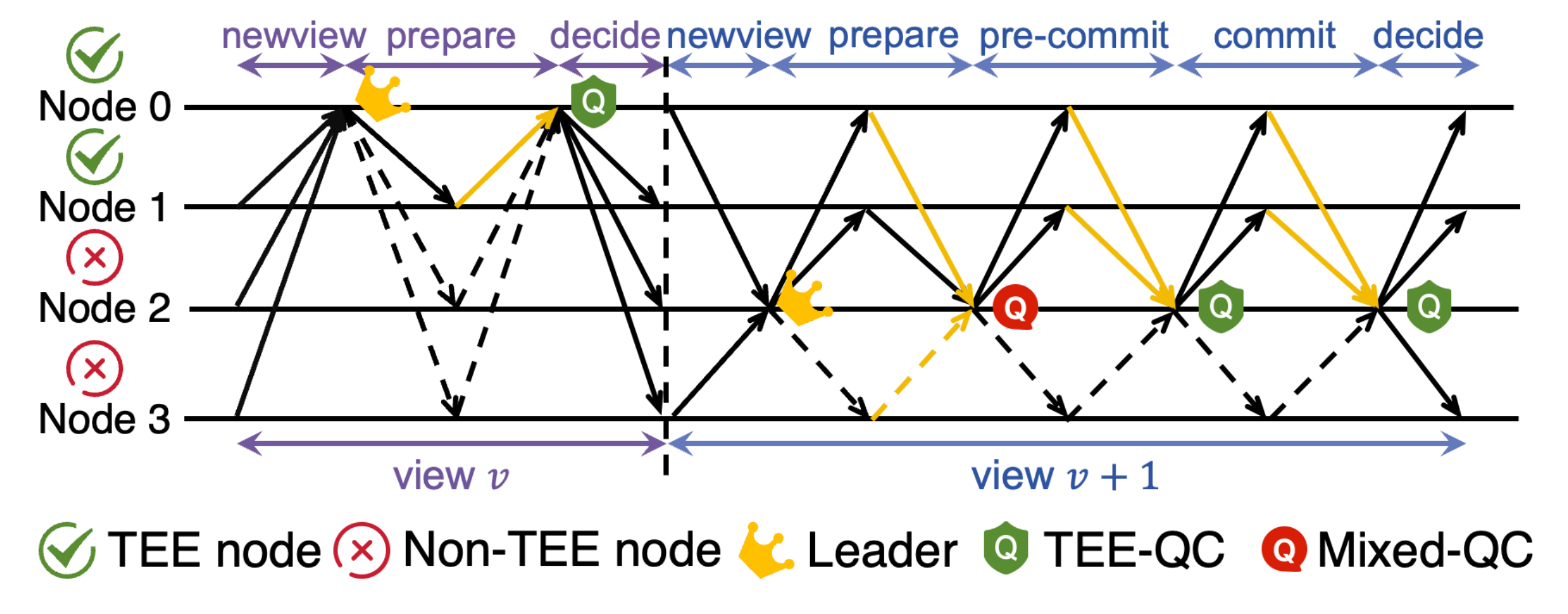}
\caption{\sysname Overview. Solid arrows denote prioritized broadcasts to TEE replicas, while dashed arrows denote subsequent broadcasts to non-TEE replicas. Yellow arrows highlight the messages that arrive at the leader first.}
\label{fig:overview}
\end{figure}

\subsection{Overview}\label{sec:overview}
\sysname is a mixed-trust TEE-assisted BFT protocol that operates under arbitrary mixtures of TEE-enabled and non-TEE replicas. Rather than relying on a binary fallback between a TEE-based protocol and a non-TEE protocol, \sysname is designed as a unified protocol for mixed-replica settings that can continuously adapt to different levels of TEE availability.

At a high level, all replicas follow the same protocol structure, but differ in how they handle security-critical steps. 
TEE replicas invoke a trusted component called \counter\ on critical paths to generate certified messages and enforce non-equivocation constraints across protocol steps. 
Replicas without TEEs execute these steps entirely outside trusted hardware, without the additional guarantees provided by \counter.
As illustrated in \figref{fig:overview}, \sysname combines two key ideas introduced in the previous section: (1) a dual-quorum, and (2) TEE-leader fast path.

The dual-quorum design (\ssecref{subsec:dualqc}) enables \sysname to dynamically switch between two quorum formation modes. 
When sufficient TEE-backed votes are available, the protocol forms a TEE-Quorum using only TEE replicas, allowing smaller quorum certificates and faster progress. 
Otherwise, the protocol safely falls back to a Mixed-Quorum formed from arbitrary replicas, preserving compatibility with classical BFT quorum formation under mixed-trust deployments.

Hybrid trust also enables leader-dependent commit paths (\ssecref{subsec:tee-leader}). 
When the leader is TEE-enabled, the trusted component constrains certified proposals generated by the leader, substantially reducing the coordination overhead required to tolerate equivocation. 
This allows the protocol to shorten the critical commit path and accelerate leader transitions. 
When the leader is not TEE-enabled, \sysname safely falls back to the standard BFT commit path.

\bheading{Pacemaker.}
\sysname adopts a standard partially synchronous pacemaker similar to those used in HotStuff-style BFT protocols~\cite{HotStuffYin2019,diembft2021}. 
The pacemaker is responsible for ensuring liveness after GST, while safety is guaranteed independently through quorum intersection and TEE-enforced non-equivocation. 
Our protocol design is orthogonal to the pacemaker and does not modify its underlying mechanism.

\subsection{Definitions}
\bheading{Views, phases, and steps.}
\sysname proceeds in views, each coordinated by a unique leader known to all replicas. 
We denote the leader of view $v$ by $\leader(v)$. 
Replicas may be either TEE-enabled or non-TEE replicas. 
We use $\istee(id)$ to denote whether replica $id$ is TEE-enabled, which can be verified through remote attestation~\cite{johnson2016intel}. 
Each view consists of multiple phases identified by a phase tag 
$ph \in \{\nv, \prep, \preco, \com\}$, corresponding to 
\textsf{new-view}, \textsf{prepare}, \textsf{pre-commit}, and \textsf{commit}, respectively. 
We define a \textit{step} as a pair $(v, ph)$ representing phase $ph$ in view $v$.

For a non-TEE leader, a view progresses sequentially throu-gh all phases:
$(v, \nv) \rightarrow (v, \prep) \rightarrow (v, \preco) \rightarrow (v, \com) \rightarrow (v+1, \nv)$.
If the leader of view $v$ is a TEE replica, the protocol skips the $\preco$ and $\com$ phases, and the progression becomes:
$(v, \nv) \rightarrow (v, \prep) \rightarrow (v+1, \nv)$.
We use $(v, ph)++$ to denote the transition to the next phase.

\bheading{Blocks and chains.} 
A block contains client transactions and the hash of its parent block. 
Blocks are linked through hash references to form a chain rooted at the genesis block $\mathcal{G}$. 
We write $b_1 \succ b_2$ if block $b_1$ extends block $b_2$. 
Similarly, $b_1 \succ h$ denotes that $b_1$ extends the block with hash value $h$. 
Two blocks conflict if neither extends the other. 
The height of a block is its distance from the genesis block. 
We compare block freshness using their associated views, where blocks proposed in higher views are considered more recent. 
We assume a function \creatleaf($h_p$) that creates a new block extending the parent block with hash $h_p$.

\bheading{Certificates.}
Replicas authenticate protocol messages using digital signatures. 
We use $\langle msg \rangle_{\sigma}$ to denote a signed message carrying signature $\sigma$, and $\langle msg \rangle_{\vec{\sigma}}$ to denote a message carrying a set of signatures $\vec{\sigma}$. 

A \textit{certificate} is a signed statement of the form
$\phi = \langle h, v, h^\prime, v^\prime,\\ ph \rangle_{\vec{\sigma}}$,
where $h$ and $v$ denote the hash and view of the certified block, $h^\prime$ and $v^\prime$ denote the hash and view of the justified block (\ie, block can be safely extended), and $ph$ denotes the phase. 
Given a certificate $\phi$ of the form $\langle h, v, h^{\prime}, v^{\prime} ph\rangle_{\vec{\sigma}}$, let $\phi._{Hprep}$ be $h$; $\phi._{Vprep} $  be $v$; $\phi._{Hjust}$ be $h^{\prime}$; $\phi._{Vjust} $  be $v^{\prime}$; and $\phi._{sign}$ be $\vec{\sigma}$. We use $\vec{\phi}$ to denote a list of certificates, and $\vec{\phi}^n$ to indicate that the list has length $n$. 
We use $\perp$ for unused fields when a value is not required.
Certificates may be generated either by the trusted component of a TEE-enabled replica or by the local software logic of a non-TEE replica.

\bheading{Quorum certificates.}
A \textit{Quorum Certificate} (QC) is an aggregated proof that a quorum of replicas voted for the same block in the same view and phase. 
Following \damysus~\cite{damysus}, QCs are constructed by aggregating partial certificates using multi-signatures:
$\combine\big([\langle h, v, h^\prime, v^\prime, ph \rangle_{\sigma_1}, \ldots, \langle h, v, h^\prime, v^\prime,\\ ph \rangle_{\sigma_n}]\big) = \langle h, v, h^\prime, v^\prime, ph \rangle_{[\sigma_1, \ldots, \sigma_n]}$.

\sysname supports two types of quorum certificates corresponding to the dual-quorum design:

\begin{packeditemize}
    \item \textbf{TEE-QC:} $qc \leftarrow \combine(\vec{\phi}^{Q_T})$, a list of $Q_T$ certificates issued exclusively by TEE replicas. 
    \item \textbf{Mixed-QC:} $qc \leftarrow  \combine(\vec{\phi}^{Q_M})$, a list of $Q_M$ certificates from any combination of TEE and non-TEE replicas. 
\end{packeditemize}

Given a list of certificates $\vec{\phi} = [\phi_1, ..., \phi_n]$, we define the following operations to check whether \quorum certificates have been received:
let \tmatch($\vec{\phi}, k, h, v, ph$) be true iff: (1) $n=k$; (2) all $n$ signatures have been created by different TEE replicas; and (3) $\forall i \in {1,...,  n}, h = h_i \land v =v_i \land ph = {ph}_i$.
Similarly, $\mmatch(\vec{\phi}, k, h, v, ph)$ is defined with the same conditions except that in (2) the $n$ signatures may come from different \textit{arbitrary replicas} (TEE or non-TEE).

\begin{algorithm*}[t]

\caption{The pseudocode of operations for replica $i$ in \sysname}
\label{alg:normal-op}

\noindent
\begin{minipage}[t]{0.49\linewidth}
\begin{algorithmic}[1] 
\State $pks$ \Comment{public keys}
\State $view = 0$ \Comment{current view}
\State $qc_{prep}$ \Comment{last prepared certificate}
\State
\State  \Comment{prepare phase}
\State  \textbf{as a leader} 
\State  \hspace{1em} waits for $\vec{\phi}$ s.t. \tmatch $(\vec{\phi}, Q_T, \perp, view, \nv) $
\Statex \hspace{6em} $ \lor ~ \mmatch (\vec{\phi}, Q_M, \perp, view, \nv)$
\State  \hspace{1em} $\phi_{nv} :=$ certificate $\phi \in \vec{\phi}$ with highest $\phi._{Vjust}$ 
\State  \hspace{1em} $b :=$ \creatleaf$(\phi_{nv}._{Hjust}, txs)$
\State  \hspace{1em} \aif $\istee(i)$ \athen
\State  \hspace{1em} \hspace{1em} $\phi := \teeprepare(b, H(b),\phi_{nv}, \vec{\phi})$
\State  \hspace{1em} \hspace{1em} send $\langle \phi_{nv}, b, \phi \rangle$ to all
\State  \hspace{1em} \aelse
\State  \hspace{1em} \hspace{1em} \priobroadcast ($\langle \phi^{\prime}, b, \langle  \hash(b),  view, \prep\rangle_{\sigma} \rangle$)
\State  \hspace{1em} \aendif
\State  \textbf{all replicas} 
\State  \hspace{1em} waits for $\langle \langle  h^\prime, v^\prime, \nv \rangle_{\sigma}, view, b, \prep \rangle_{\sigma^\prime}$ from the leader

\State  \hspace{1em} $\phi_{nv} := \langle h^\prime, v^\prime, \nv \rangle_{\sigma}$
\State  \hspace{1em} $\phi_{prep} := \langle H(b), view, h^\prime, v^\prime, \prep \rangle_{\sigma^\prime}$
\State  \hspace{1em} \aif $\istee(i)$ \athen
\State  \hspace{2em} $\phi^\prime := \teeprepare(H(b), \phi_{prep}, \phi_{nv})$
\State  \hspace{1em} \aelse
\State  \hspace{2em} \textbf{abort if} $\neg(\verify (\phi_{prep}) \land b \succ h^\prime )$
\State  \hspace{2em} \aif $\istee(\leader(view))$ \athen 
\State  \hspace{3em} $\phi^\prime := \langle \hash(b), view, \com \rangle_{\sigma_i}$ 
\State  \hspace{1em}  \hspace{1em} \aendif

\State  \hspace{1em} \aelse $\phi^\prime := \langle  \hash(b), view, \prep \rangle_{\sigma_i}$
\State  \hspace{1em}  \hspace{1em} \aendif
\State  \hspace{1em} \aendif
\State  \hspace{1em} send $\phi^\prime$ to leader

\State
\State  \Comment{pre-commit phase}
\State  \textbf{as a leader} 
\State  \hspace{1em} waits for $\vec{\phi}$ s.t. \tmatch $(\vec{\phi}, Q_T, \perp, view, \prep) $
\Statex \hspace{6em} $ \lor ~ \mmatch (\vec{\phi}, Q_M, \perp, view, \prep)$
\State  \hspace{1em} \priobroadcast($qc_{prep} :=$ \combine($\vec{\phi}$))

\State  \textbf{all replicas}
\State  \hspace{1em} waits for $\langle h, view, \perp, \prep \rangle_{\vec{\phi}}$ from the leader
\State  \hspace{1em}  \textbf{abort if} $\neg(\verify (\langle  h, view, \perp, \prep \rangle_{\vec{\phi}}) )$
\State  \hspace{1em} \aif $\istee(i)$ \athen $\phi^\prime := \teestore(H(b), \phi_{prep})$
 
\State  \hspace{1em} \aelse $\phi^\prime := \langle  \hash(b), view, \preco \rangle_{\sigma_i}$

\algstore{mysplit}              
\end{algorithmic}
\end{minipage}\hfill
\begin{minipage}[t]{0.49\linewidth}
\begin{algorithmic}[1]          
\algrestore{mysplit}   
\State  \hspace{1em} \aendif
\State  \hspace{1em} send $\phi^\prime$ to leader
\State
\State  \Comment{commit phase}
\State  \textbf{as a leader} 
\State  \hspace{1em} waits for $\vec{\phi}$ s.t. \tmatch $(\vec{\phi}, Q_T, \perp, view, \preco) $
\Statex \hspace{6em} $ \lor ~ \mmatch (\vec{\phi}, Q_M, \perp, view, \preco)$
\State  \hspace{1em} \priobroadcast ($qc_{prec} :=$ \combine($\vec{\phi}$))

\State  \textbf{all replicas}
\State  \hspace{1em} waits for $\langle h, view, \perp, \preco \rangle_{ \vec{\phi}}$ from the leader
\State  \hspace{1em}  \textbf{abort if} $\neg(\verify (\langle h, view, \preco \rangle_{ \vec{\phi}}) )$
\State  \hspace{1em} \aif $isTEE$ \athen $\phi^\prime := \teestore(H(b), \phi_{prec})$
\State  \hspace{1em} \aelse $\phi^\prime := \langle  \hash(b), view, \com \rangle_{\sigma_i}$
\State  \hspace{1em} \aendif
\State  \hspace{1em} send $\phi^\prime$ to leader

\State
\State  \Comment{decide phase}
\State  \textbf{as a leader} 
\State  \hspace{1em} waits for $\vec{\phi}$ s.t. \tmatch $(\vec{\phi}, Q_T, \perp, view, \com) $
\Statex \hspace{6em} $ \lor ~ \mmatch (\vec{\phi}, Q_M, \perp, view, \com)$
\State  \hspace{1em} \priobroadcast ( $qc_{com}  :=$ \combine($\vec{\phi}$))
\State  \textbf{all replicas}
\State  \hspace{1em} waits for $\langle h, view, \perp, \com \rangle_{\vec{\phi}}$ from the leader
\State  \hspace{1em} \textbf{abort if} $\neg(\verify (\langle  h, view, \perp, \com \rangle_{\vec{\phi}}) )$
\State  \hspace{1em} execute $b$ corresponding to $h$ and reply to clients

\State 
\State  \Comment{new-view phase}
\State  \textbf{upon timeout}
\State  \hspace{1em} $view$++
\State  \hspace{1em} \aif $\istee(i)$ \athen $\phi_{nv} = \teeview()$ 
\State  \hspace{1em} \aelse $\phi_{nv} := \langle \nv, view, qc_{prep}\rangle_{\sigma_i}$
\State  \hspace{1em} \aendif
\State  \hspace{1em} send $\phi_{nv}$ to $view$'s leader

  \State 
  \State \textbf{function} \priobroadcast(msg)
  \State \hspace{1em} \aif $|S_{TEE}|\geq Q_T$ \athen
  \State \hspace{2em} select $S_{prio} \subseteq S_{TEE}$
  \State \hspace{2em} send $msg$ to $S_{prio}$
  \State \hspace{2em} send $msg$ to all other replicas
  \State \hspace{1em} \aelse
  \State \hspace{2em} send $msg$ to all
  \State \hspace{1em} \aendif
\end{algorithmic}
\end{minipage}
\end{algorithm*}

\subsection{Trusted Components} \label{subsec:Trustcompon}
In \sysname, TEE replicas invoke trusted logic inside TEEs through a trusted component called \counter. 
The role of \counter is to maintain trusted protocol state and enforce consistency constraints across certified protocol steps. 
In particular, \counter binds certified messages to monotonically increasing protocol identifiers and prevents conflicting certified protocol states from being generated by the same TEE-enabled replica.

\sysname builds upon the trusted components originally used in \damysus, namely \textsc{Checker} and \textsc{Accumulator}~(\ssecref{app:damysus}). 
At a high level, \textsc{Checker} validates proposal safety and binds certificates to monotonically increasing $(view, phase)$ identifiers, while \textsc{Accumulator} tracks the latest prepared state carried in leader-transition messages. 
In \sysname, these functionalities are tightly coupled in the protocol execution path. 
Therefore, we consolidate them into a single trusted module, called \counter, and adapt it to support the dual-quorum construction and leader-dependent fast paths. 
This consolidation reduces redundant enclave invocations and minimizes context switches between trusted and untrusted execution.

\counter maintains two trusted protocol states: the latest prepared block and the latest locked block. 
The locked block (\ie, the highest block for which the replica holds a valid QC) preserves safety by preventing conflicting executions, while the prepared block supports liveness by allowing replicas to relay their latest safe state during leader transitions.
By integrating the functionality of \textsc{Accumulator}, \counter also enables a TEE-enabled leader to safely select and extend the highest prepared block carried in leader-transition messages.

\bheading{\counter\ state.}
The state maintained by replica $p_i$'s \\  \counter consists of three components:

\begin{packeditemize}
   \item $\{sk_i, pk_1, \ldots, pk_n\}$, where $sk_i$ is the enclave-protected private key of replica $p_i$, and $\{pk_1, \ldots, pk_n\}$ are public keys;

   \item $(view, phase)$, where $view$ is the current protocol view and $phase$ is the current protocol phase. 
   This identifier monotonically increases whenever \counter generates a certified protocol message;

   \item $(prep_v, prep_h)$ and $(lock_v, lock_h)$, which record the view number and hash of the latest prepared block and latest locked block, respectively.
\end{packeditemize}

\bheading{\counter\ interface.}
The \counter component exposes three trusted operations:

\begin{packeditemize}
    \item \teeprepare($b, h, \phi_{nv}, \vec{\phi}$): 
    takes a proposed block $b$ with hash $h$, a highest-view leader-transition certificate $\phi_{nv}$, and a set of leader-transition certificates $\vec{\phi}$. 
    It verifies that $\vec{\phi}$ satisfies the quorum rule and that $b$ safely extends the prepared block certified by $\phi_{nv}$. 
    The function then generates a certified proposal tagged with the current $(view, phase)$ and increments the local protocol identifier.

    \item \teestore($\phi$): 
    takes a certified protocol message $\phi$, verifies its validity, and updates the trusted prepared or locked state according to the protocol step associated with $\phi$. 
    It then outputs a certified confirmation of the state update.

    \item \teesign($\phi$): 
    generates a certified message for the currently stored prepared state, which is used during leader transitions and quorum formation.
\end{packeditemize}

\begin{algorithm}[!t]

\caption{TEE code for operations}
\label{alg:tee}
  \begin{algorithmic}[1]
  \State $(sk, pks)$ \Comment{private and public key}
  \State $(view, phase) = (0, 0)$  \Comment{current view and phase}
  \State $(prepv, preph)=(0, H(\mathcal{G}))$  \Comment{latest prepared block}
  \State $(lockv, lockh)=(0, H(\mathcal{G}))$  \Comment{latest locked block}
  
  \State
  \State \textbf{function} \teesign$(h, h^{\prime}, v^{\prime})$
  \State \hspace{1em}$\phi := \langle  h, view, h^{\prime}, v^{\prime}, phase \rangle_{\sigma}$
  \State \hspace{1em} $(view, phase)++$ \Comment{increased to avoid  equivocation}
  \State \hspace{1em} \return $\phi := \langle  h, view, h^{\prime}, v^{\prime}, phase \rangle_{\sigma}$
  
  \State
  \State \textbf{function} \teeprepare$(b, h, \phi_{nv}, \vec{\phi})$
  \State \hspace{1em} \aif $\left(\begin{array}{l} 
       \big( |\{\phi \in \vec{\phi_n} : (\phi._{signer}) \in S_{TEE} \}| \ge Q_T \big) \\
       \lor \big( |\vec{\phi_n}| \ge Q_M \big) \wedge~ \phi_n \in \vec{\phi_n} \wedge\\
       ~ (\forall \phi^{\prime} \in \vec{\phi_n} \text{ where } \phi^{\prime} \equiv 
       \langle \tilde{h}, \tilde{v}, \tilde{v'} \rangle \wedge \tilde{v'} = view\\
       \wedge v \geq \tilde{v} ) )
    \end{array}\right)$ \athen

  \State \hspace{1em} $\langle \perp, v, h^{\prime}, v^{\prime},ph\rangle_{\sigma} := \phi_{nv}$
  \State \hspace{1em} \textbf{abort if} $\neg ( \verify(\sigma) \wedge v = view \wedge ph = \nv)$ 
  \State \hspace{1em} \textbf{abort if} $\neg ( \hash(b) = h \wedge b.h_p = h^{\prime})$ 
  \State \hspace{1em} \textbf{abort if} $\neg ( h^{\prime} = lockh \lor v^{\prime} > lockv)$
  \State \hspace{1em} \aif \istee(\leader(view)) $\land \leader(view) \neq i$ \athen 
  \State \hspace{2em} $prepv = v$; $preph = h$
  \State \hspace{1em} \aendif
  \State \hspace{1em} \return $\phi := \teesign(h, h^{\prime}, v^{\prime})$

  \State 
  \State \textbf{function} \teestore$(\phi)$ 
  \State \hspace{1em} \textbf{abort if} $\neg (\verify (\phi)_{pks} \wedge v \geq view)$
  \State \hspace{1em} $prepv = v$; $preph = h$
  \State \hspace{1em} \aif $ph = \preco$ \athen $lockv=v$; $lockh=h$
  \State \hspace{1em} \return $\phi := \teesign(h, \perp)$
    
  \State
  \State \textbf{function} \teeview$()$ 
  \State \hspace{2em} \return $\phi := \teesign(\perp, preph, prepv)$
  \label{algo}
  \end{algorithmic}
\end{algorithm}

\subsection{The Algorithm}\label{subsec:fast}
Algorithm~\ref{alg:normal-op} and Algorithm~\ref{alg:tee} present the pseudocode of replica operations, where Algorithm~\ref{alg:tee} is executed only by TEE replicas. 
At a high level, \sysname follows two execution paths depending on the leader type. 
Under a TEE-enabled leader, the protocol leverages trusted state and certified proposal consistency enforced by \counter\ to shorten the commit path. 
Under a non-TEE leader, the protocol falls back to a standard BFT-style quorum path while preserving safety through quorum intersection.

\bheading{View change.}
When entering a new view, replicas send new-view certificates carrying their latest prepared state to the new leader (Algorithm~\ref{alg:normal-op}, lines 66--71). 
The leader collects a valid quorum of certificates and selects the highest prepared block as the safe extension point for the new proposal.

\bheading{TEE-leader fast path.}
When the leader is TEE-enabled, it invokes \teeprepare\ to generate a certified proposal extending the highest prepared block carried in the new-view certificates. 
The trusted component binds the proposal to the current protocol step (Algorithm~\ref{alg:tee}, line 18) and prevents conflicting certified proposals from being generated by the same TEE-enabled leader.

The certified proposal 
$\langle \phi^{\prime}, b, \langle \hash(b), view, \prep \rangle_{\sigma} \rangle$
is then broadcast to replicas. 
Upon receiving the proposal, replicas verify that the proposed block safely extends the justified block carried in the certificate. 
TEE replicas invoke \teeprepare\ to generate certified responses, while non-TEE replicas return signed prepare certificates (Algorithm~\ref{alg:normal-op}, lines 20--25).
Since certified proposals generated by the TEE-enabled leader are uniquely constrained by \counter, the protocol can safely use a shortened commit path under this execution mode. 
Thus, backups directly generate commit certificates without proceeding through the intermediate \precommit and \commit phases used under non-TEE leaders (Algorithm~\ref{alg:normal-op}, line 25). 
Once the leader collects a valid \quorum\ of commit certificates, it combines them into a commit QC and broadcasts the QC to all replicas. 
Upon receiving the commit QC, replicas execute the block.

\bheading{Non-TEE leader fallback path.}
When the leader is not TEE-enabled, \sysname follows a standard HotStuff-style commit path. 
The leader first broadcasts a proposal extending the highest prepared block collected during the leader transition. 
Replicas validate the proposal and generate prepare responses according to their trust configuration. 
TEE replicas invoke \teeprepare\ to generate certified prepare responses, while non-TEE replicas return signed prepare certificates (Algorithm~\ref{alg:normal-op}, lines 20--23, 27).
The remaining protocol phases follow the same quorum-collection pattern. 
The leader collects a valid quorum of certificates, combines them into a quorum certificate, and broadcasts the QC to replicas. 
TEE replicas update their trusted prepared or locked states through \teestore, while non-TEE replicas generate signed protocol votes locally (Algorithm~\ref{alg:tee}, lines 24, 25). 
This process repeats across the \prepare, \precommit, and \commit phases until the leader gathers a valid commit quorum certificate and broadcasts the final commit certificate for execution.

\bheading{Dual-quorum processing.}
Throughout the protocol, quorum formation follows the dual-quorum construction introduced in \ssecref{subsec:dualqc}. 
A valid quorum certificate may therefore be formed either from a TEE-Quorum consisting exclusively of TEE replicas or from a Mixed-Quorum consisting of arbitrary replicas (Algorithm~\ref{alg:normal-op}, lines 7, 34, 46, and 58). 

When sufficient TEE replicas are available, leaders prioritize collecting TEE-backed certificates to accelerate quorum formation. 
To maximize the likelihood of forming a TEE-Quorum, leaders invoke \priobroadcast(msg), which opportunistically disseminates proposals to TEE replicas before extending the broadcast to all replicas.

\subsection{Correctness Analysis}

We provide a sketch of \sysname's safety here, while leaving the full proofs of safety and liveness in Appendix~\ref{app:correctproof}.

\smallskip
\noindent\textbf{Safety.}
We show that no two honest replicas execute conflicting blocks. Under a TEE leader, non-equivocation ensures that only one block is proposed per view, while under a non-TEE leader, quorum intersection guarantees that at least one non-equivocating replica carries forward the latest prepared block; therefore, every prepared block extends previously prepared blocks, ensuring safety.

\section{\csysname Design}

\subsection{Overview} 
\csysname extends \sysname with a chained pipeline structure inspired by Chained-HotStuff~\cite{HotStuffYin2019} and \damysus~\cite{damysus}. 
Instead of completing all protocol phases for a block within a single view, \csysname pipelines justification and commitment across consecutive views. 
As a result, leaders of adjacent views overlap in responsibility: each leader both proposes a new block and helps justify or commit blocks proposed in earlier views.

At a high level, blocks carry quorum certificates that justify their parent blocks. 
When a leader proposes a new block, the QC embedded in the proposal simultaneously certifies the predecessor block. 
This pipelined structure allows commitment decisions to propagate continuously along the chain while reducing coordination overhead across views.

Most importantly, hybrid trust introduces asymmetric commit depth in the chained pipeline. 
Blocks proposed by TEE-enabled leaders can be committed after forming a one-chain, while blocks proposed by non-TEE leaders require a three-chain to preserve safety. 
As a result, the commit latency of a block depends on the trust guarantees of the leader who proposed it.

\bheading{Fast commit.} The chained structure also enables cross-view acceleration. 
When a non-TEE leader’s block is immediately followed by a block proposed by a TEE-enabled leader, the TEE-backed proposal can safely accelerate commitment of its predecessor. 
For example, in \figref{fig:chained}, when block $b_2$ is committed, its predecessor $b_1$ is simultaneously committed without waiting for the commit phase of the next view. 
Consequently, some non-TEE blocks may also be finalized earlier under mixed-trust leader sequences.
If a leader fails to gather sufficient votes, replicas fall back to standard new-view processing. 
Replicas send new-view messages carrying their latest prepared certificates, allowing the next leader to safely reconstruct the highest prepared chain and continue progress.

\begin{figure}[t]
    \centering
    \includegraphics[width=8cm]{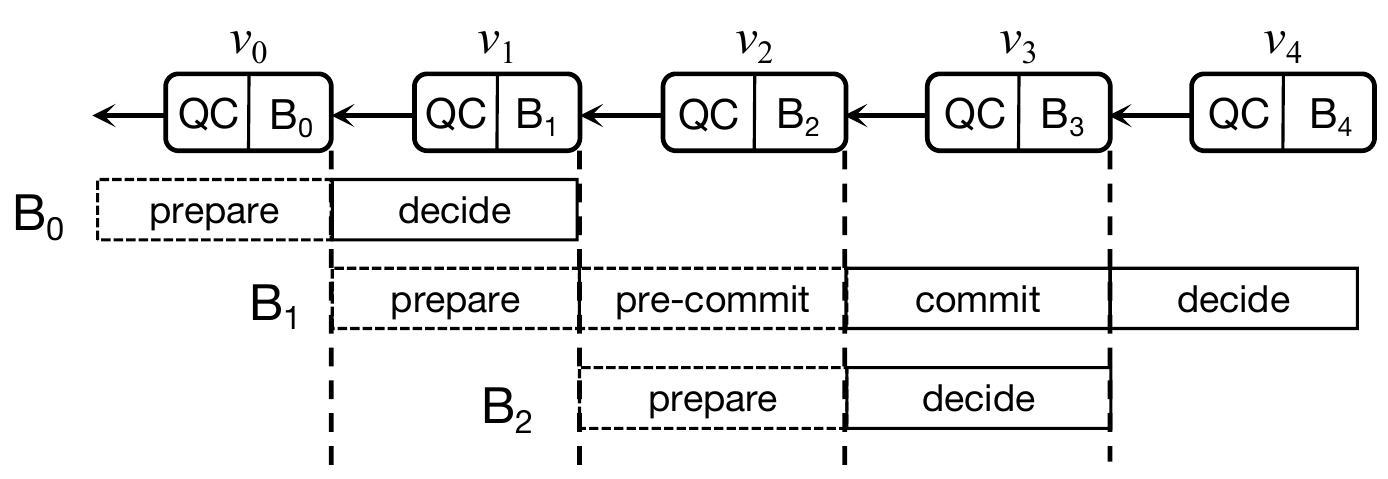}
    \caption{\csysname allows replicas to pre-commit the predecessor block $b_2$ of a block $b_3$ when preparing $b_3$, while simultaneously committing $b_2$'s predecessor block $b_1$.}
    \label{fig:chained}
\end{figure}

\subsection{Definitions}
We now detail how we adapt some of the concepts introduced previously to handle a chained version.

\bheading{Blocks and quorum certificates.}
As in Chained-HotStuff\\~\cite{HotStuffYin2019}, blocks are organized into a chained structure. 
We replace \creatleaf\ with \creatchain, which generates chained blocks while filling missing gaps in the chain if necessary.
Each block stores:
(1) a hash pointer to its parent block, denoted by $b.parent$; and
(2) a quorum certificate stored in $b.just$, which justifies the parent block.

A quorum certificate has the form
$qc=\langle v,h\rangle_{\vec{\sigma}}$,
where $(v,h)$ identify the justified block and $\vec{\sigma}$ is the set of partial signatures collected for that block. 
As in \sysname, quorum formation follows the dual-quorum construction and may therefore use either TEE-Quorum or Mixed-Quorum certificates.

\bheading{Pipeline structure.}
The chained protocol contains only two protocol phases: \prepare and \newview, identified by tags \prep\ and \nv, respectively. 
Leaders continuously pipeline proposals across views, allowing justification and commitment to overlap between consecutive blocks.
A block forms a \textit{one-chain} if it directly extends the block referenced by its QC. 
Similarly, a block forms a \textit{three-chain} if three consecutive parent-linked blocks carry consecutive justifications.

\subsection{Commit Semantics}

The commit rule in \csysname depends on both the chain structure and the trust guarantees of the leader.

\bheading{TEE-enabled leaders.}
If a block proposed by a TEE-enabled leader forms a one-chain, the block and all of its ancestors are committed immediately. 
This optimization is enabled by the trusted proposal consistency enforced by \counter, which prevents conflicting certified proposals from being generated for the same protocol step.

\bheading{Non-TEE leaders.}
If a block proposed by a non-TEE leader forms a three-chain, the head of the three-chain and all of its ancestors are committed. 
This rule follows the standard chained BFT commit logic and preserves safety through quorum intersection across consecutive views.

\bheading{Mixed-trust acceleration.}
The chained pipeline also enables mixed-trust commit acceleration. 
When a non-TEE leader’s block is immediately followed by a block proposed by a TEE-enabled leader, the TEE-backed proposal may safely accelerate commitment of its predecessor. 
As a result, some non-TEE blocks can be finalized earlier than under the standard three-chain rule without compromising safety.

\subsection{The Algorithm}

Algorithm~\ref{alg:chained} in Appendix~\ref{appen:codeChainraftel} presents the pseudocode of \csysname. 
The protocol follows a pipelined \prepare--\newview workflow in which consecutive leaders overlap in both proposal generation and predecessor justification.

\bheading{Prepare phase.}
At the beginning of each view, the leader constructs a new chained block extending the highest prepared certificate currently known to the replica. 
If the leader already holds the latest prepared QC from the previous view, it directly extends that chain. 
Otherwise, it reconstructs the latest prepared chain from new-view messages.

The leader then broadcasts the proposal together with its justification certificate. 
TEE-enabled leaders invoke \teeprepare\ to generate certified proposals, while non-TEE leaders attach signed prepare messages. 
Replicas validate the proposal chain and return prepare responses according to their trust configuration.

\bheading{New-view phase.}
At the end of a view, replicas forward their latest prepared certificates to the next leader. 
If the current leader fails to gather a valid quorum, replicas increment their local views and continue protocol execution using the highest prepared certificate carried in new-view messages.

\bheading{Pipelined commitment.}
Unlike \sysname, commitment decisions are not completed within a single view. 
Instead, quorum certificates carried by newly proposed blocks continuously justify and finalize predecessor blocks along the chain. 
As the chain grows, replicas evaluate the one-chain or three-chain commit conditions according to the trust guarantees of the corresponding leaders.

\subsection{Correctness Analysis}
The proof of \csysname's safety and liveness properties is similar to \sysname's except for the difference caused by the pipelining structure. Due to space constraints, we leave the detailed proof of safety and liveness in Appendix~\ref{appen:chainProof}.

\section{Performance Evaluation} \label{sec:evaluation}
We implement a prototype of \sysname and \csysname\footnote{Code Availability:
\url{https://github.com/Artifact2026/Raftel}.}, and evaluate their performance in both LAN and WAN. 
We compare \sysname and \csysname with \hotstuff, Basic-\damysus, and \achilles to show the performance improvements. 
We aim to answer the following questions:

\begin{packeditemize}
    \item \textbf{Q1:} How does \sysname perform with varying replicas in WAN and LAN compared to its counterparts?  
    
    \item \textbf{Q2:} How does \sysname perform under different cases? 
    
    \item \textbf{Q3:} How much performance overhead is introduced by SGX-related operations? 
\end{packeditemize}

\begin{figure}[t]
    \centering
    \begin{subfigure}[b]{0.48\linewidth}
        \centering
        \includegraphics[width=\linewidth]{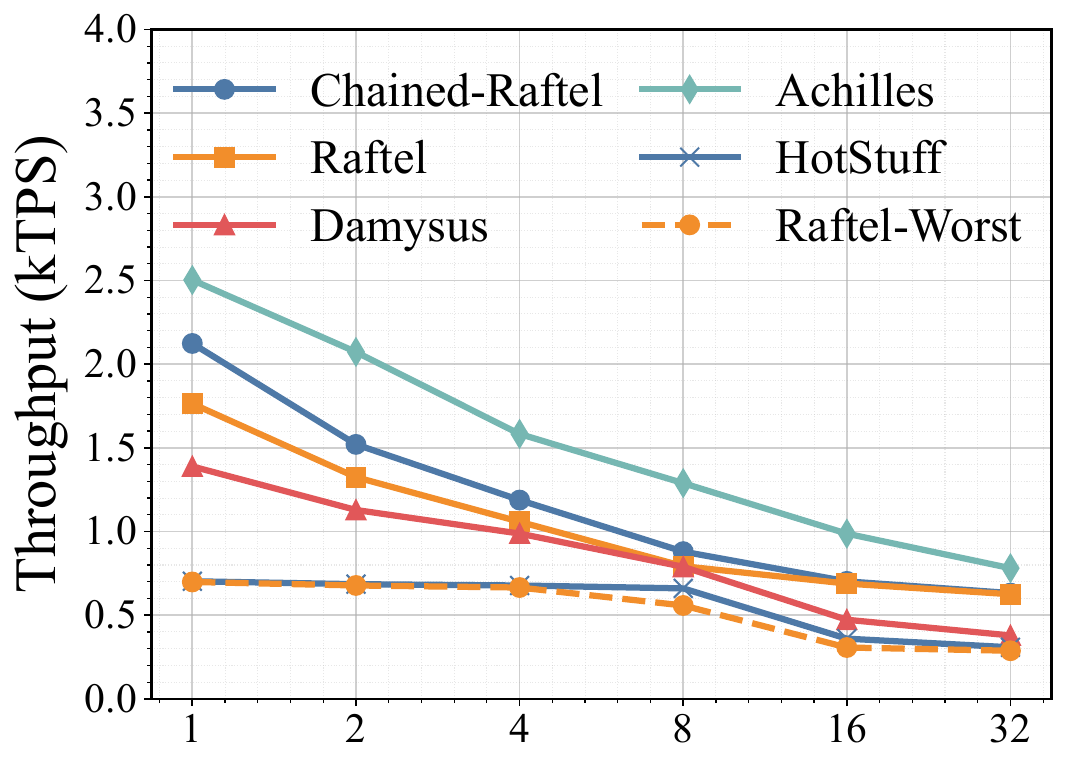}
        \caption{Throughput, WAN}
        \label{fig:t-WAN-replica}
    \end{subfigure}
    \hfill
    \begin{subfigure}[b]{0.48\linewidth}
        \centering
        \includegraphics[width=\linewidth]{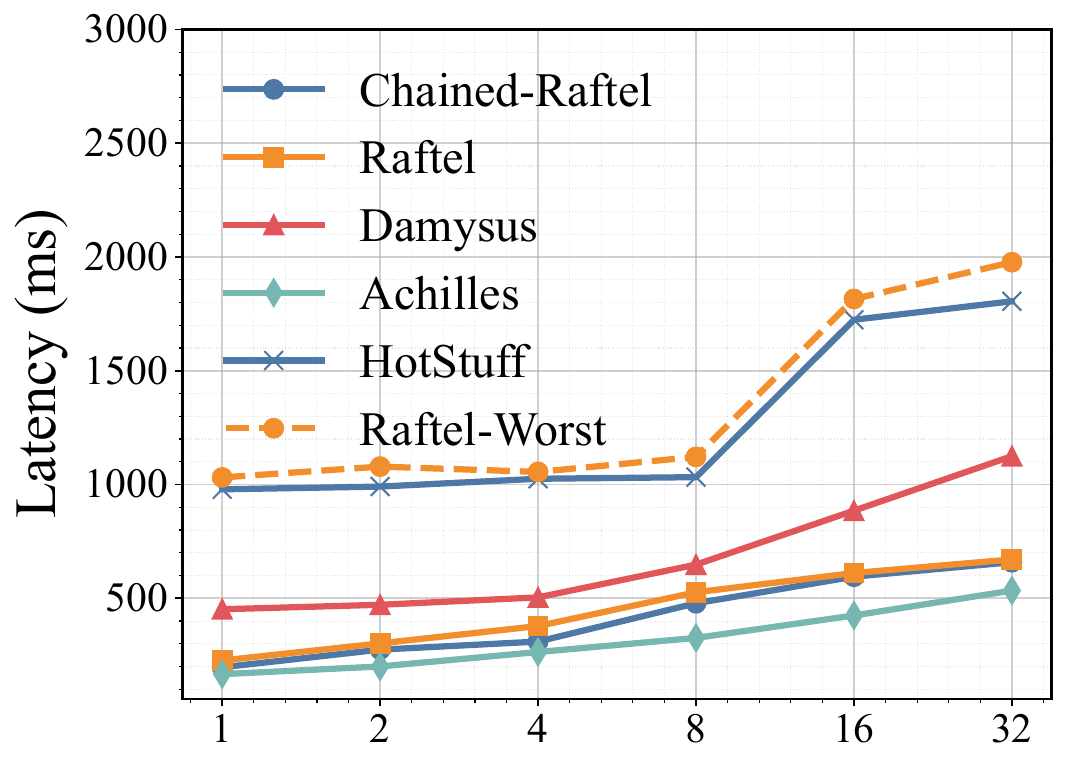}
        \caption{Latency, WAN}
        \label{fig:l-WAN-replica}
    \end{subfigure}

    \caption{Throughput and latency of \sysname with varying faults in WAN.}
    \label{fig:WAN}
\end{figure}

\subsection{System Implementation and Setup}\label{sec:imple}
\bheading{Implementation.} We use Intel SGX to provide trusted services and develop atop the \damysus implementation\footnote{Available at https://github.com/vrahli/damysus.}. 
All protocols are implemented in C++. 
We build on \damysus’s trusted components, \textsc{Checker} and \textsc{Accumulator}, and consolidate their functionalities into a unified trusted module, \counter, which is adapted to support our dual-quorum construction and fast-path designs. 
The detailed design is presented in \ssecref{subsec:Trustcompon}.
We use the OpenSSL library~\cite{OpenSSL} to realize ECDSA signatures with \textit{prime256v1} elliptic curves and Salticidae~\cite{salticidae} for replica connections. 
Following the prior work~\cite{damysus, achilles}, we do not implement remote attestation in the prototype, which is needed during initialization and is outside the steady-state performance path.

\bheading{Experimental setup.}\label{sec:expset}
We conduct all experiments on a public cloud platform using up to 97 SGX-enabled instances, with each replica deployed on a dedicated virtual machine. Each VM is provisioned with 8 vCPUs, 32 GB of RAM, and runs Ubuntu Linux 20.04. All instances are connected through a private network interface with a bandwidth of 10 Gbps.
We evaluate two deployment scenarios: a local area network (LAN) and a wide area network (WAN). In the LAN setting, the average inter-replica RTT is $0.1 \pm 0.02$ ms. Because SGX-enabled instances are only available in a limited set of regions, we emulated WAN conditions using NetEm~\cite{netem}, configuring the inter-replica RTT to $100 \pm 5$ ms.

\bheading{Metrics.} We consider two types of performance metrics: (1) commit throughput measures the number of committed transactions per second (TPS), while commit latency measures the average delay from when a leader proposes transactions to when they are executed; (2) end-to-end throughput measures the number of client replies received per second (TPS), while end-to-end latency measures the average delay from when clients create transactions to when replies are received.
In most experiments, following \damysus, we use commit throughput/latency, since these metrics reduce the impact of client-side overheads and enable fairer protocol-level comparisons. We only use end-to-end throughput/latency in \figref{fig:redis} and \figref{fig:tvl} to evaluate overall system scalability and client-perceived performance.

\begin{figure}[t]
    \centering
    \begin{subfigure}[b]{0.49\linewidth}
        \centering
        \includegraphics[width=\linewidth]{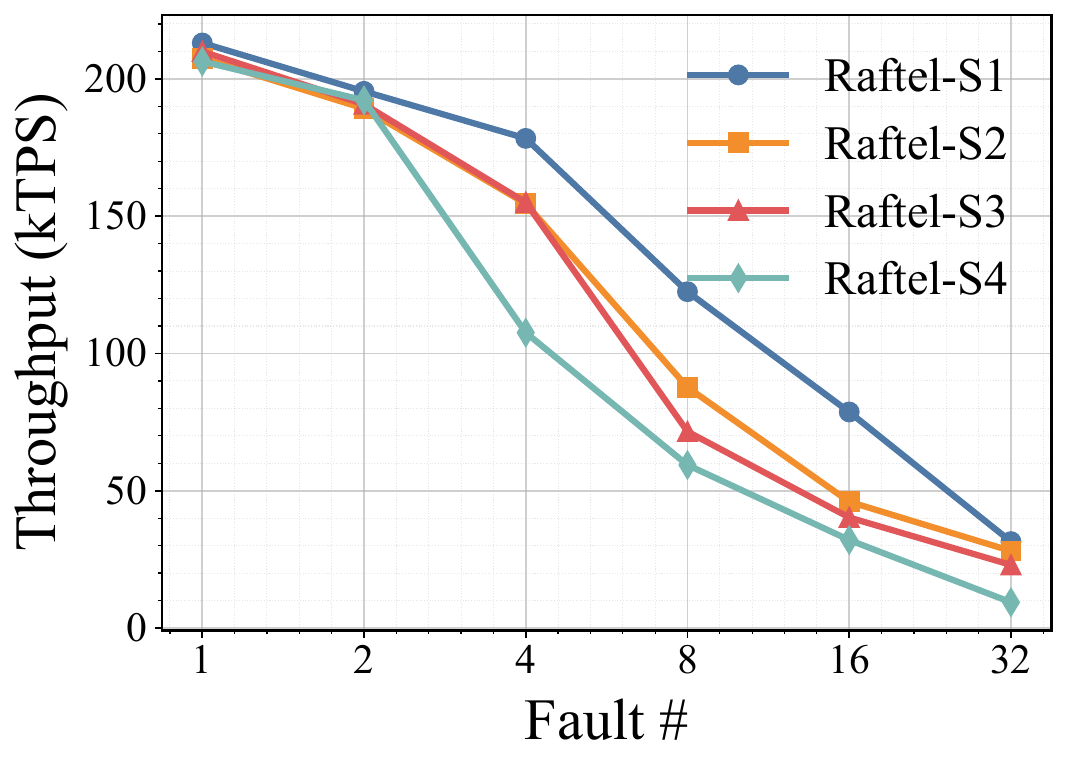}
        \caption{Throughput, LAN}
        \label{fig:sets}
    \end{subfigure}
    \begin{subfigure}[b]{0.49\linewidth}
        \centering
        \includegraphics[width=\linewidth]{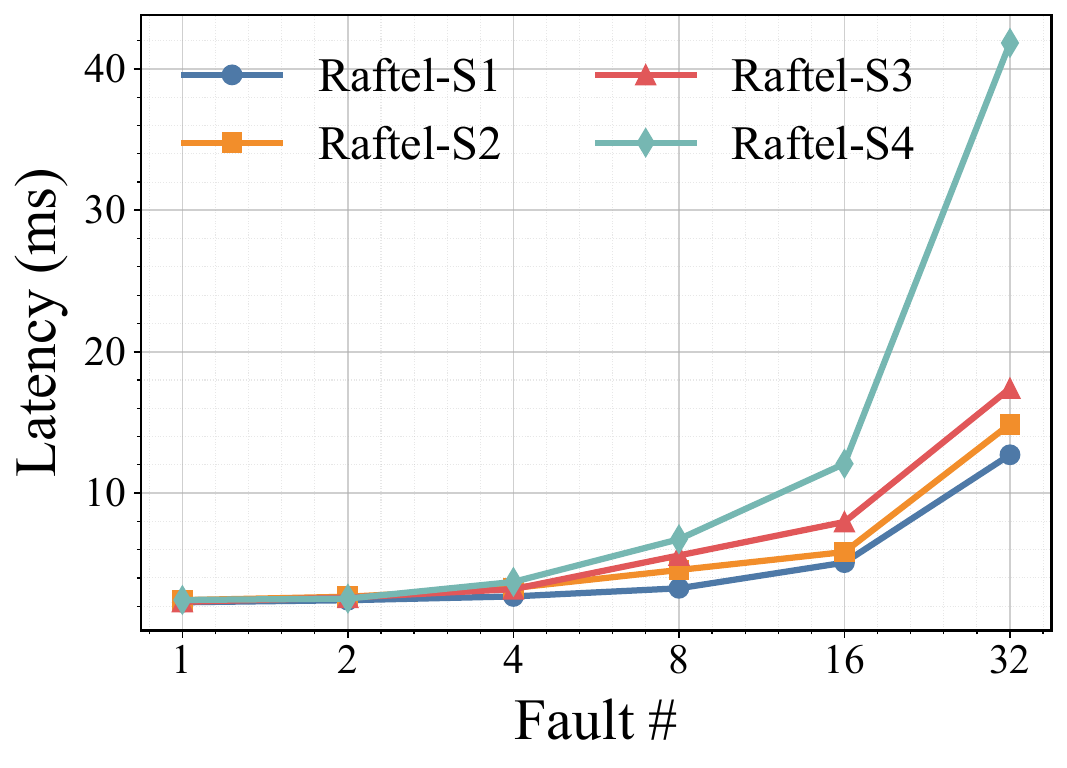}
        \caption{Latency, LAN}
        \label{fig:lcounter}
    \end{subfigure}
    \caption{\sysname's performance under different scenarios.}
    \label{appenfig:sets}
\end{figure}

\subsection{Scalability Evaluation} \label{sec:wholeperformane}
We evaluate the performance of \sysname and \csysname in the best case, \ie, when the number of TEE replicas is larger than $f$ to enable dual-quorum, and the leader is TEE-enabled to support the fast path, with the system parameters set to $m = f + 1$ and $n = 3f + 1$.
We also evaluate the performance of \sysname (referred to as \sysname-Worst) in the worst case, \ie, when the number of TEE replicas is zero with the system parameters set to $m = 0$ and $n = 3f + 1$.
We evaluate \sysname, \csysname, and its counterparts in WAN with varying fault thresholds $f \in$ \{1, 2, 4, 8, 16, 32\}. 
All protocols adopt 400 transactions per block and 256 B payload for each transaction. 
Due to space constraints, we defer the results of \sysname under varying parameters in both LAN and WAN (low-latency WAN setting), as well as the throughput vs. latency evaluation, to Appendix~\ref{appen:addExp}.

\figref{fig:t-WAN-replica} and \figref{fig:l-WAN-replica} show the throughput and latency of the six protocols under varying fault thresholds $f$ in a WAN setting. 
Since network communication is the dominant cost in WAN environments, the number of commit phases determines the latency and also the throughput due to serialized block generation. 
The throughput of \damysus and \hotstuff is lower than \sysname, with \hotstuff having the lowest throughput because it has the most commit phases.
\sysname-Worst exhibits performance comparable to \hotstuff, since both protocols have the same number of commit phases, while \sysname-Worst additionally introduces extra protocol-side decision logic.
\sysname and \csysname achieve up to $1.02\times$ and $1.05\times$ higher throughput than \hotstuff, with latency reduced by 61.46\% and 62.84\% respectively at $f=32$. 
However, \csysname lags behind \achilles, showing 18.9\% lower throughput and 21.2\% higher latency under the same conditions. 
This gap stems from the system size: \sysname requires $3f+1$ replicas, compared to $2f+1$ in \achilles. 
Although only $f+1$ votes are needed in \sysname, its leader must still broadcast proposals to $3f+1$ replicas, incurring higher communication overhead.

\begin{figure}
    \centering
    \includegraphics[width=0.9\linewidth]{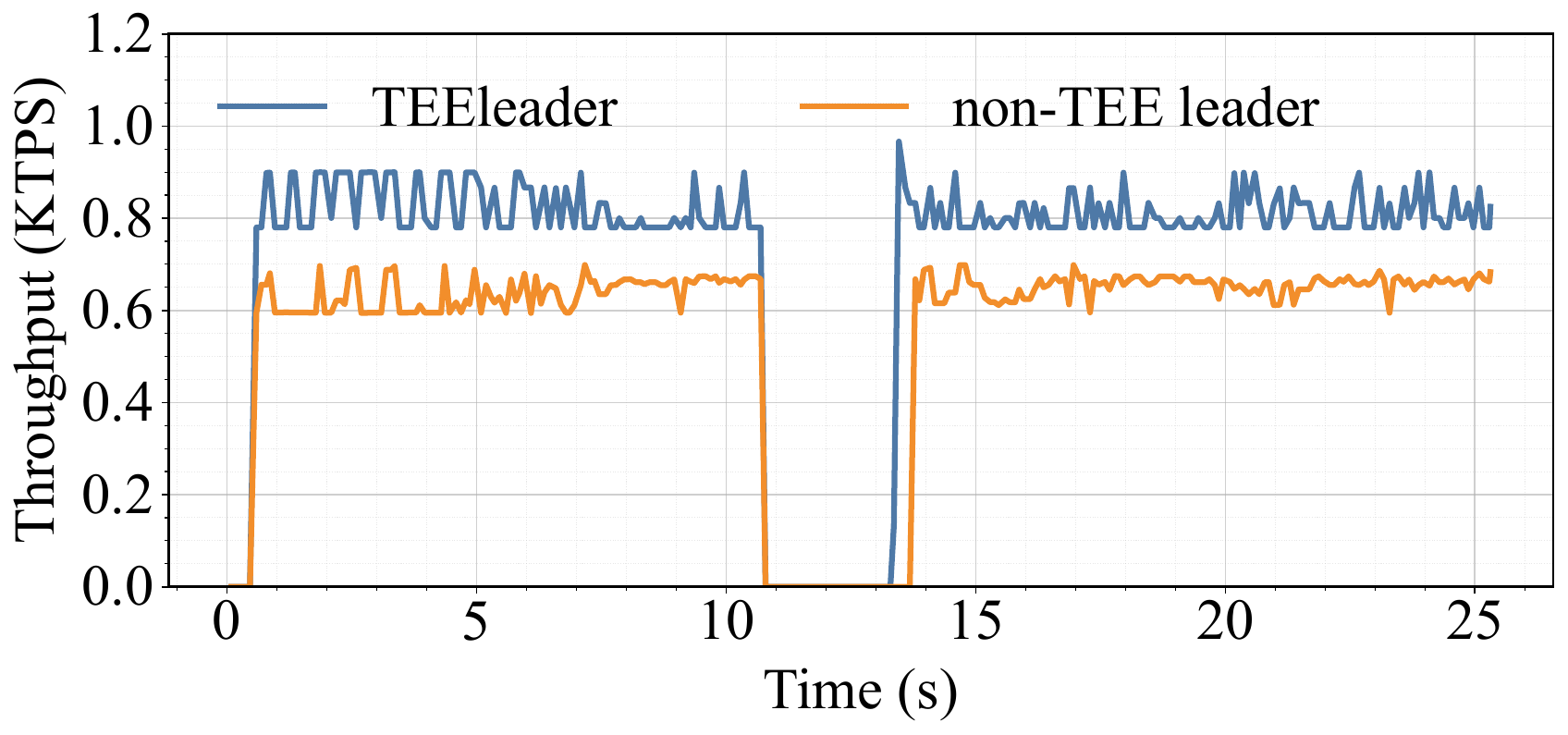}
    \caption{Performance of \sysname under faults.}
\label{fig:failure}
\end{figure}

\subsection{Microbenchmark Evaluation} \label{subsec:ablation}
To evaluate the impact of performance acceleration caused by TEE replicas, we measure throughput and latency of \sysname under four scenarios:

\begin{packeditemize}
\item \textbf{\sysname-S1}: Leaders rotate among at least $f+1$ TEE replicas (with $m = f+1$) with both TEE-leader fast path and TEE-Quorum enabled (\ssecref{subsec:tee-leader}), yielding the best performance (as shown in prior experiments).

\item \textbf{\sysname-S2}: Leaders rotate among TEE replicas, but fewer than $f+1$ are available (\ie, $m = f$). In this case, only the TEE-leader fast path is enabled.

\item \textbf{\sysname-S3}: Leaders rotate among non-TEE replicas, while at least $f+1$ TEE replicas exist in the system (\ie, $m = f+1$). Here, only the TEE-Quorum is enabled.

\item \textbf{\sysname-S4}: Leaders rotate among non-TEE replicas, with fewer than $f+1$ TEE replicas (\ie, $m = f$). No optimization is enabled, and performance matches classical BFT protocols.
\end{packeditemize}

\figref{appenfig:sets} illustrates the throughput and latency of \sysname under these four scenarios in LAN. 
The payload is 256 B, and the batch size is 400.
\sysname-S1 achieves the highest throughput of 31.5 kTPS with 32 faults. 
Performance degrades in \sysname-S2 and \sysname-S3, with \sysname-S2 slightly outperforming \sysname-S3, indicating that the cost of a larger quorum size is smaller than the cost of an additional consensus phase. 
\sysname-S4 is the worst case, yielding the lowest throughput. 
Overall, the performance gap across the four scenarios is not large, and the difference becomes even less noticeable when the system size is small.

\subsection{Performance Under Faults}
We evaluate \sysname under crash faults in a WAN deployment with 25 replicas and a fault threshold of 8. We do not evaluate Byzantine equivocation faults, since \sysname handles equivocation in the same manner as conventional BFT protocols and introduces no additional protocol-level differences in this case.
We study the impact of crash faults on real-time throughput under two scenarios: crashes of the TEE leader and of a non-TEE leader. The view-change timeout is set to 2 seconds. \figref{fig:failure} shows the average committed throughput over time.
In both cases, the leader crashes at 10 seconds, causing the throughput to drop to nearly zero at 10.8 s. The view-change procedure is then triggered to recover from the failure. Recovery completes at 13.3 seconds for the TEE-leader crash and at 13.7 seconds for the non-TEE leader crash, after which throughput returns to the steady state.
The faster recovery in the TEE-leader case is enabled by \sysname's  TEE-leader fast view-change.

\subsection{Overhead Profiling} \label{subsec:overhead}
To understand the overhead of using SGX, we implement a variant of \sysname, called \sysname-C, which operates the trusted components outside the SGX enclave. 
On the critical path, a round of \sysname incurs 2 enclave interactions when the leader is TEE-enabled, and 4 enclave interactions when the leader is non-TEE, where each enclave interaction consists of one ecall and one ocall. 

Table~\ref{tab:overhead} shows the performance of \sysname and \sysname-C in a LAN setting with varying fault threshold $f \in$ \{2, 4, 8\}.
{Each value is averaged over 5 runs; we report one decimal digit for readability.}
\sysname-C consistently outperforms \sysname, \eg, $205.4$ kTPS vs. 192.4 kTPS and $2.0$ ms vs. $2.1$ ms latency given $f=2$. 
The gap remains modest across all settings (3.4\% to 6.7\% throughput reduction and up to 0.2 ms latency increase), indicating that SGX introduces slight overhead.

\begin{table}[t]
    \centering
    \renewcommand{\arraystretch}{1.1}
    \caption{Overhead profiling for \sysname in LAN.}
    \label{tab:overhead}
    \scalebox{0.97}{
    \begin{tabular}{c|ccc|ccc}
    \toprule
     \multirow{2}{*}{Protocols} & \multicolumn{3}{c}{Throughput (kTPS)} & \multicolumn{3}{c}{Latency (\msecs)}  \\ 
        & $f=2$ & $f=4$ & $f=8$ & $f=2$ & $f=4$ & $f=8$\\
    \midrule
    \sysname & 192.4 &151.6 & 89.7 &2.1  &2.6  &4.5\\
    \sysname-C & 205.4 &158.5 &92.9 &2.0 &2.5 &4.3\\ 
    \bottomrule
\end{tabular}}
\end{table}

\begin{figure}
    \centering
    \includegraphics[width=0.9\linewidth]{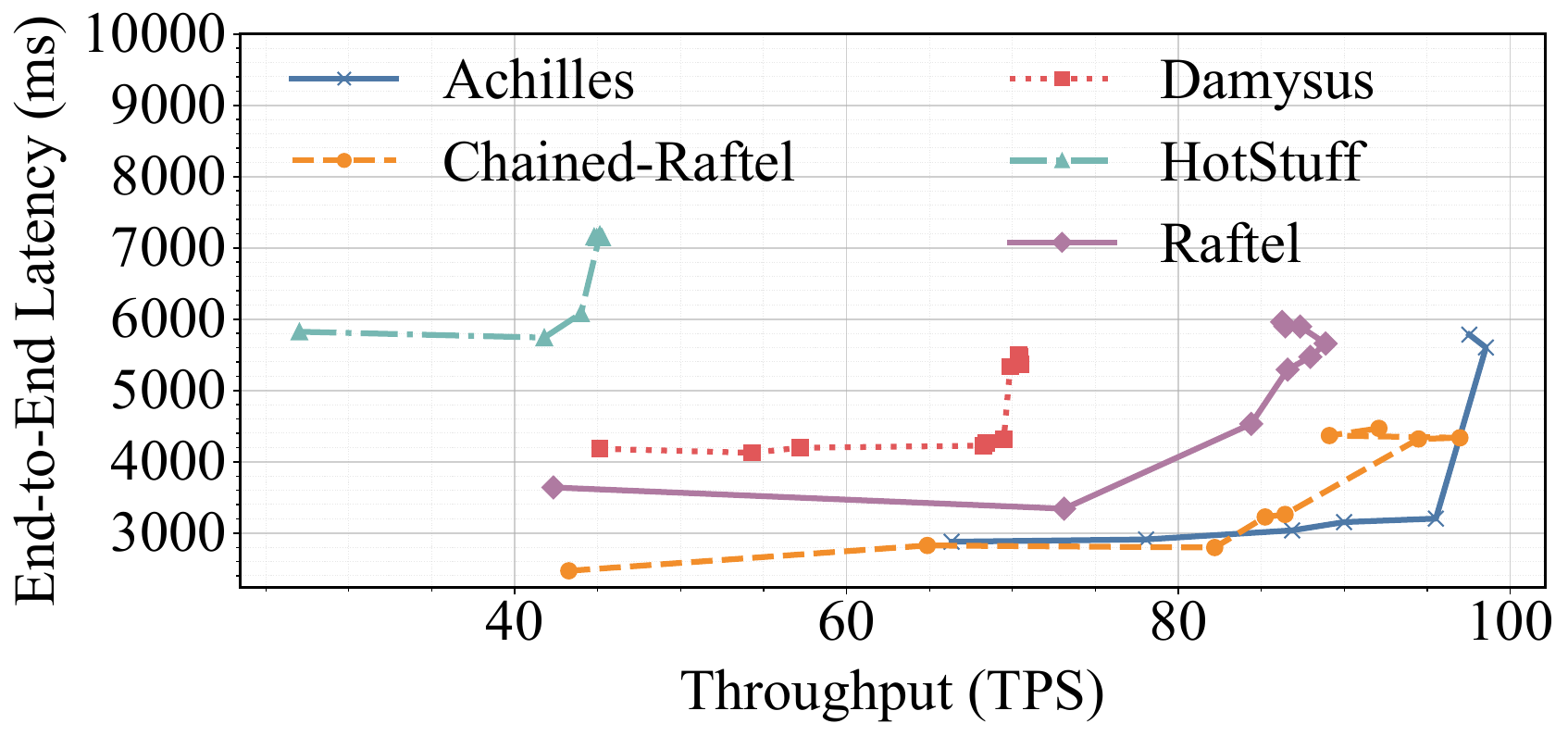}
    \caption{End-to-end Throughput vs. Latency.}
\label{fig:redis}
\end{figure}

\subsection{KV-Store Application}
We deploy Redis~\cite{redis} on top of the protocols and use a workload consisting of 100\% \texttt{SET} operations with 1 KB values.
The fault threshold $f$ is set to 8, and the batch size is 400.
\figref{fig:redis} illustrates the end-to-end latency (\ie, from when clients issue requests to when replies are received) and the corresponding throughput of the five protocols as the offered load increases until system saturation.
The results show that the maximum throughput of \sysname and \csysname is 84 TPS and 92 TPS, respectively.
\sysname significantly outperforms \damysus and \hotstuff due to its quorum size of $f+1$ and two-phase commit design.
\hotstuff achieves the lowest throughput (44 TPS) because it requires $3f+1$ replicas and a quorum size of $2f+1$.
\achilles achieves the highest throughput (95 TPS) due to its minimized commit phase and smaller committee size.
Overall, these results remain consistent with the micro-benchmark results, showing that the protocol-level optimizations of \sysname effectively translate into end-to-end application performance gains.

\section{Conclusion} \label{sec:conclusion}

This paper revisits BFT consensus under heterogeneous trust. Instead of assuming that either all replicas are TEE-enabled or none are, we consider a partial-TEE setting where only a subset of replicas are equipped with TEEs. We formalize this setting through a universal partial-TEE model and show that heterogeneous trust fundamentally changes quorum formation and fault tolerance.

Our analysis establishes a tight resilience bound and reveals a threshold phenomenon: partial TEE deployment improves protocol efficiency immediately, but increases fault tolerance only after TEE replicas exceed a critical fraction of the system. Based on this insight, we design a dual-quorum construction and TEE-aware fast paths that safely exploit hardware-enforced non-equivocation.
Building on these principles, we design and implement \sysname and \csysname, the first HotStuff-style protocols for universal partial-TEE deployments. Evaluation on Intel SGX shows that they significantly outperform classical BFT protocols and approach the performance of fully TEE-assisted designs in both LAN and WAN settings.

\clearpage
\normalem
\bibliographystyle{ACM-Reference-Format}
\bibliography{references}

\clearpage
\appendix

\section{BFT consensus Background} \label{appen:smallTru}

\bheading{Classical BFT consensus.}
BFT consensus protocols (\eg, PBFT~\cite{pbft} and Zyzzyva~\cite{Zyzzyva}) have been cornerstone primitives for building Byzantine state machine replication (SMR) for decades. 
They can tolerate up to one-third Byzantine faults and typically require at least two rounds of communication among nodes to reach agreement. To reduce the communication phase, protocols like FaB~\cite{fab} explore reducing the number of rounds to one, but this comes at the cost of stricter resilience requirements (\eg, needing $5f+1$ nodes to tolerate $f$ faults).
The rise of blockchains has fueled a new wave of BFT protocols with a focus on scalability and security. Chain-based BFT designs, such as Tendermint~\cite{buchman2016tendermint} and HotStuff~\cite{HotStuffYin2019}, streamline leader rotation and pipeline block commitments, becoming the backbone of many blockchain platforms~\cite{kwon2016cosmos, hentschel2002flow, cypherium, baidu}. Due to its promising features, we chose to customize HotStuff~\cite{HotStuffYin2019} when TEE nodes are not available. 
We also note that some state-of-the-art protocols, such as Multi-BFT consensus~\cite{stathakopoulou2022state, gupta2021rcc, dqbft, Ladon2025, Orthrus} and DAG-based BFT consensus~\cite{Bullshark, DAGRider}, have better performance. Our results may also be extended to them. 
More recently, Multi-BFT consensus~\cite{stathakopoulou2022state, gupta2021rcc, dqbft, Ladon2025, Orthrus} and DAG-based BFT consensus~\cite{Bullshark, DAGRider} further replace linear chains with multiple parallel chain structures or directed acyclic graphs, allowing replicas to confirm transactions along multiple paths, reducing latency, or even better tolerating network asynchrony.

\bheading{BFT consensus Using Small Trusted Hardware.}
Unlike TEEs that support computing arbitrary functions, small trusted hardware~\cite{yandamuri} provides small trusted abstractions such as an append-only log and a monotonic counter. Small trusted hardware with a small Trusted Computing Base (TCB) can be realized by Trusted Platform Modules (TPMs)~\cite{Ariadne, ICE, Memoir} and YubiKey~\cite{Yubikey}.  Chun et al.~\cite{Chun:2007:AAM} pioneered the usage of trusted logs to prohibit Byzantine behaviors (\eg, proposal and vote equivocation), which improves the fault tolerance of corrupted replicas from one-third to the minority. Levin~\etal~\cite{Levin:2009:Trinc} simplify the trusted log abstraction to a trusted persistent counter within the same security guarantee. Later, MinBFT~\cite{Levin:2009:Trinc} and CheapBFT~\cite{Kapitza:2012:CheapBFT} further advance system performance by optimizing the fast path and happy path. Recently, Yandamuri~\etal~\cite{yandamuri} improve the resilience of HotStuff from one-third to $1/2-\epsilon$, while keeping a total of $O(n)$ communication per view in a partially synchronous network. 

\section{Upper Bound of Faults}\label{app:proof}

In this section, we establish an upper bound on the number of Byzantine faults tolerable in the universal partial-TEE model.

Let $n = m + k$, where $m$ is the number of TEE replicas and $k$ is the number of non-TEE replicas. Up to $f$ replicas may be Byzantine. We consider \emph{single-shot Byzantine agreement} (BA) under partial synchrony.
A Byzantine agreement protocol must satisfy the following properties:
\begin{packeditemize}
    \item \textbf{Agreement.} No two honest replicas decide differently.
    \item \textbf{Validity.} If all honest replicas start with the same input $v$, then any honest replica that decides must decide $v$.
    \item \textbf{Termination.} Every honest replica eventually decides.
\end{packeditemize}
\figref{fig:bound} shows the impossibility result for BA.

\begin{theorem}\label{theo:bound}
There exists no protocol $\Pi$ that solves Byzantine agreement in a partially synchronous network under the universal partial-TEE model if
\[
f \;\ge\; \max\!\left\{\frac{n}{3}, \frac{m}{2}\right\}.
\]
\end{theorem}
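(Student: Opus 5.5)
We argue by contradiction with a partition-and-indistinguishability argument in the style of the classical $n \le 3f$ lower bound for partial synchrony, adapted so that every replica that must equivocate is non-TEE. Suppose $\Pi$ solves BA with $f \ge n/3$ and $f \ge m/2$. Equivalently $n \le 3f$ and $m \le 2f$, so $k = n-m$ may be anything.

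\textbf{Step 1: the partition.} Split the replicas into three groups $A$, $B$, $C$ with the following constraints:
\begin{itemize}
\item $|A| \le f$, $|B| \le f$, $|C| \le f$ (possible since $n\le 3f$);
\item all TEE replicas lie in $A \cup B$ (possible since $m \le 2f$);
\item $C$ consists only of non-TEE replicas.
\end{itemize}
Concretely, fill $A$ and then $B$ with TEE replicas up to $f$ each. This absorbs all $m$ of them. Distribute the remaining non-TEE replicas so that no group exceeds $f$. This is feasible because $n \le 3f$ and $m \le 2f$. If $C$ ends up empty the argument only simplifies. If $k<f$, $C$ may be smaller, and non-TEE replicas can also be placed in $A$ or $B$ as capacity allows.

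\textbf{Step 2: three executions.} All messages between $A$ and $B$ are delayed until after both groups decide, which is allowed before GST.
\begin{itemize}
\item In $E_1$, all honest replicas have input $0$ and $B$ is silent or crashed, which a Byzantine TEE replica can always do. Since $|B|\le f$, Termination and Validity force $A\cup C$ to decide $0$.
\item In $E_2$, symmetrically, inputs are $1$ and $A$ crashes, so $B\cup C$ decides $1$.
\item In $E_3$, $A$ has input $0$, $B$ has input $1$, and $C$ is Byzantine. Toward $A$, $C$ runs the honest code with input $0$ exactly as in $E_1$; toward $B$, it runs the honest code with input $1$ as in $E_2$.
\end{itemize}
Because $C$ contains no TEE replica, this double-simulation is within the fault model: non-TEE Byzantine replicas may equivocate and forge local state arbitrarily. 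The $A$–$B$ messages are delayed, so $A$'s view in $E_3$ is identical to $E_1$ up to its decision, and $B$'s view is identical to $E_2$. Hence $A$ decides $0$ and $B$ decides $1$, violating Agreement. One must also check that $A$ and $B$ are nonempty so that there are honest deciders. If $m=0$ this is the standard proof with $A,B$ non-TEE; in general $f\ge1$ and $n\ge 3$ can be assumed, or degenerate small cases handled separately.

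\textbf{Main obstacle.} The delicate point is justifying that TEE replicas in $A$ and $B$ cannot break the indistinguishability. Their messages are attested and non-equivocating, but in $E_3$ they are honest and each sends only the messages it sends in $E_1$ or $E_2$ respectively, so no TEE output is ever forged or duplicated. The one thing to verify is that $C$'s simulated behavior never needs to relay TEE-signed messages from $B$ to $A$. This holds because in $E_1$ the group $B$ is silent, so $C$'s $E_1$-behavior depends only on messages from $A\cup C$. I would state this explicitly as a lemma on the views of $A$ and $B$, noting also that attestation only lets replicas recognize TEE status, which is fixed and identical across the three executions. Finally, since $f \ge \max\{n/3,m/2\}$ is exactly the condition needed for the partition, this also explains the tightness claimed in \S\ref{subsec:bound}.
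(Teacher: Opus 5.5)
Your proof is correct and is essentially the paper's argument. Both partition the replicas into three groups of size at most $f$ whose equivocating group contains only non-TEE replicas, and both then run the classical three-execution indistinguishability argument. The paper packages this as an $n=3$ base case (Lemma~\ref{lem:base}) lifted by a ``virtual replica'' reduction (Lemma~\ref{lem:partition}), builds the partition through a two-regime case split, and cites the known pure-TEE bound when $k=0$. Your single greedy partition, with $C=\emptyset$ covering $k=0$, and your direct group-level argument reach the same conclusion slightly more cleanly. They also sidestep the paper's loose claim that replicas within one set evolve identically.
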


We prove the theorem via a reduction to a minimal three-replica impossibility.

\begin{figure}[t]
\centering
\includegraphics[width=4cm]{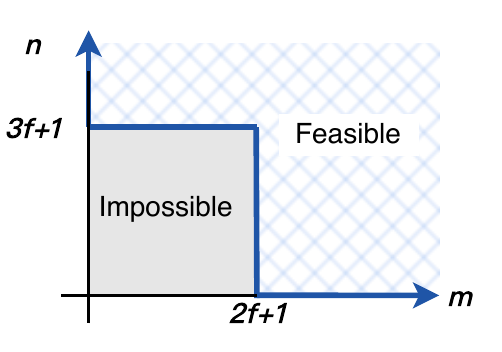}
\caption{Feasible and infeasible regions for Byzantine agreement, where $m$ denotes the number of TEE replicas and $n$ denotes the total number of replicas in the system.}
\label{fig:bound}
\end{figure}

\subsection{Base Impossibility}

\begin{lemma}
\label{lem:base}
There exists no protocol that solves Byzantine agreement in a partially synchronous network with $n=3$, $f=1$, and at least one non-TEE replica.
\end{lemma}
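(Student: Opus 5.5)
The plan is the classical three-world indistinguishability argument of Lamport, Shostak and Pease / Fischer, Lynch and Merritt, adapted to partial synchrony. The single non-TEE replica will serve as the equivocator.

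\textbf{Setup.} Name the replicas $A$, $B$, $C$, where $C$ is a non-TEE replica and $A,B$ are arbitrary (TEE or not). Because $C$ has no trusted component, a Byzantine $C$ can run any number of independent copies of its honest code, each with a different input. It can send each copy's messages to a different peer, since nothing forces its outputs to be consistent across recipients. A Byzantine $A$ or $B$ does not need to equivocate in the constructions below. It only delays messages, which the adversary may do even to a TEE replica by scheduling its host. TEE guarantees therefore never obstruct the adversary here.

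\textbf{Three executions.} We use partial synchrony, where messages between honest replicas may be delayed arbitrarily before GST.
\begin{itemize}
\item \emph{Execution $E_0$}: $A$ and $C$ are honest with input $0$. $B$ is Byzantine and stays silent (equivalently, its messages are delayed past any finite horizon). Termination and validity imply that $A$ and $C$ decide $0$ by some finite time $T_0$. Since GST can be placed after $T_0$, the silence of $B$ cannot be distinguished from a slow honest $B$.
\item \emph{Execution $E_1$}: symmetrically, $B$ and $C$ are honest with input $1$ and $A$ is silent. Then $B$ decides $1$ by some finite time $T_1$.
\item \emph{Execution $E_2$}: $A$ has input $0$, $B$ has input $1$, and both are honest. $C$ is Byzantine. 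Toward $A$ it simulates $C$'s behavior from $E_0$ with input $0$; toward $B$ it simulates $C$'s behavior from $E_1$ with input $1$. All messages between $A$ and $B$ are delayed until after $\max(T_0,T_1)$, with GST set later than that.
\end{itemize}
Up to time $T_0$, $A$'s view in $E_2$ is identical to its view in $E_0$, so $A$ decides $0$. Likewise $B$'s view matches $E_1$ up to $T_1$, so $B$ decides $1$. This violates agreement in $E_2$.

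\textbf{Main obstacle.} The delicate step is justifying that $C$'s simulation is legitimate and that the views really coincide. Three points need checking.
\begin{itemize}
\item In $E_0$, the messages $C$ sends to $A$ depend only on what $C$ received, which comes from $A$ alone. In $E_2$, $C$ does receive $A$'s messages and can feed them to its input-$0$ simulation, so the simulation is faithful.
\item Non-TEE signatures are $C$'s own. $C$ therefore holds the keys it needs to sign both simulated message streams, and unforgeability is not violated.
\item The possibility that $A$ or $B$ is a TEE replica does not matter. Neither equivocates, and $A$'s trusted component receives exactly the inputs it would receive in $E_0$, so its attested outputs are identical.
\end{itemize}
Finally, the finiteness of $T_0$ and $T_1$ must be fixed before GST is chosen in $E_2$. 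This is permitted because in partial synchrony the adversary picks GST after the protocol is fixed.

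This base case will later be lifted to Theorem~\ref{theo:bound} by partitioning the $n$ replicas into three groups, each simulated by one of $A$, $B$, $C$.
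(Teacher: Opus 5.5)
Your proposal is correct and is essentially the paper's proof: it uses the same three executions, with the non-TEE replica running an input-$0$ copy of itself toward one honest replica and an input-$1$ copy toward the other, pre-GST delays isolating the two honest replicas, and GST placed after both decisions. Your extra checks (feeding $A$'s messages into the input-$0$ simulation, $C$ signing with its own key, and honest TEEs receiving identical inputs in both worlds) simply make explicit what the paper's indistinguishability step leaves implicit.
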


\begin{proof}
Assume for contradiction that such a protocol $\Pi$ exists.

Consider three replicas $1,2,3$, where replica $3$ is a non-TEE replica and may be Byzantine. We construct three executions.

\textbf{Execution $E_A$.}
Replicas $1$ and $3$ have input $A$, and replica $2$ is Byzantine and sends no messages.
By \textbf{Validity} and \textbf{Termination}, replica $1$ must eventually decide $A$.

\textbf{Execution $E_B$.}
Replicas $2$ and $3$ have input $B$, and replica $1$ is Byzantine and sends no messages.
By \textbf{Validity} and \textbf{Termination}, replica $2$ must eventually decide $B$.

\textbf{Execution $E$.}
Replica $1$ has input $A$, replica $2$ has input $B$, and replica $3$ is Byzantine.
Before GST, the adversary delays all messages between replicas $1$ and $2$.
Replica $3$ behaves as follows:
toward replica $1$, it simulates an honest replica with input $A$;
toward replica $2$, it simulates an honest replica with input $B$.

Then, the local view of replica $1$ in $E$ is indistinguishable from its view in $E_A$, so replica $1$ must decide $A$.
Similarly, the local view of replica $2$ in $E$ is indistinguishable from its view in $E_B$, so replica $2$ must decide $B$.

Let GST be any time after both decisions occur. This yields a valid partially synchronous execution in which two honest replicas decide differently, violating \textbf{Agreement}. This is a contradiction.
\end{proof}

\subsection{Reduction via Partitioning}

\begin{lemma}
\label{lem:partition}
Suppose the $n$ replicas can be partitioned into three non-empty disjoint sets $S_1, S_2, S_3$ such that:
\begin{packeditemize}
    \item $|S_i| \le f$ for all $i \in \{1,2,3\}$, and
    \item one set (say $S_3$) contains only non-TEE replicas.
\end{packeditemize}
Then no protocol can solve Byzantine agreement in this system.
\end{lemma}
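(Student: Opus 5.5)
We prove Lemma~\ref{lem:partition} by the standard simulation reduction to Lemma~\ref{lem:base}. Suppose a protocol $\Pi$ solves Byzantine agreement for the $n$-replica system with partition $S_1,S_2,S_3$. From it we build a three-replica protocol $\Pi'$ on replicas $1,2,3$. Replica $i$ simulates every member of $S_i$. Messages inside $S_i$ are delivered locally. Messages from $S_i$ to $S_j$ are bundled and sent over the channel from $i$ to $j$. All simulated members of $S_i$ start with replica $i$'s input, and replica $i$ decides what a fixed designated member of $S_i$ decides.

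We give TEE status in the three-replica system as follows. Replica $3$ is non-TEE, which matches the hypothesis that $S_3$ contains only non-TEE replicas. Replicas $1$ and $2$ are TEE-enabled whenever their groups contain TEE members, so they can host the simulated trusted components of the TEE replicas in $S_1$ and $S_2$. In the lower-bound setting this only gives the adversary fewer options. In any case, Lemma~\ref{lem:base} needs only that replica $3$ be non-TEE.

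We then check that the three properties transfer. Take any execution of $\Pi'$ in which one replica $j$ is Byzantine. It induces an execution of $\Pi$ in which exactly the members of $S_j$ are faulty. Since $|S_j|\le f$, this is within the fault budget of $\Pi$. If $j=3$, the faulty set lies entirely among non-TEE replicas, so Byzantine replica $3$ can equivocate exactly as arbitrary non-TEE replicas may. If $j\in\{1,2\}$, the lemma's executions only need the faulty party to stay silent, and a crashed TEE host is allowed in our fault model.

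Partial synchrony is preserved: delays on the channel between $i$ and $j$ map to delays between $S_i$ and $S_j$, with the same GST and $\Delta$, since local delivery is instantaneous. Because non-empty groups each contain an honest simulated member, Agreement, Validity and Termination of $\Pi$ give the same properties for $\Pi'$ with $n=3$, $f=1$ and a non-TEE replica. This contradicts Lemma~\ref{lem:base}.

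The main obstacle is justifying the simulation of TEE replicas inside a single host. Two things must be argued. First, honest replica $i$ can run several trusted components faithfully, because TEE guarantees only restrict faulty behaviour and honest simulation is unaffected. Second, the Byzantine simulator of $S_3$ needs no trusted hardware, so it may equivocate, which is precisely what the execution $E$ in Lemma~\ref{lem:base} uses.

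Afterwards, Theorem~\ref{theo:bound} follows by exhibiting such a partition. Take $f\ge n/3$ and $f\ge m/2$; in particular $n-m\ge1$ holds whenever $f<n/2$ matters. First put $\min(f,n-m)$ non-TEE replicas into $S_3$. Then split the remaining $m$ TEE replicas, plus any leftover non-TEE replicas, into $S_1,S_2$ of size at most $f$ each. This is possible because the remainder has size at most $2f$. One needs a short case check that this holds when $n-m<f$, using $m\le 2f$.
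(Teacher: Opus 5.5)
Your proof is correct. Like the paper's, it reduces to Lemma~\ref{lem:base}, but the mechanism of the reduction is different.

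The paper never builds a new protocol. It stays in the $n$-replica system, has the adversary delay all cross-group traffic and give every member of a group the same input, and then asserts, using determinism of $\Pi$, that the members of each $S_i$ evolve identically. Each group therefore looks like a single virtual replica, and $S_3$ is free to equivocate because it has no TEEs.

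You instead use the classical simulation argument. You compile $\Pi$ into a three-replica protocol $\Pi'$ in which replica $i$ runs all of $S_i$. You show that every execution of $\Pi'$ with one faulty replica $j$ is an execution of $\Pi$ in which only the $|S_j|\le f$ members of $S_j$ are faulty. You then apply Lemma~\ref{lem:base} as a black box.

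Your route needs neither determinism nor symmetry among group members. That is an advantage: members of a group have distinct identities and enclave keys, so they do not literally send identical messages, and the paper's argument glosses over this. The price is the obstacle you correctly identify. Replicas $1$ and $2$ of $\Pi'$ must host several trusted components, with their attested keys, in one TEE, and you must check that this gives a faulty host no more power than the corresponding faulty group has in $\Pi$. Your treatment suffices, since Lemma~\ref{lem:base} only uses that replica $3$ is non-TEE and that a faulty replica $1$ or $2$ can stay silent.

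You could avoid the hosting question entirely by replaying the executions $E_A$, $E_B$, $E$ of Lemma~\ref{lem:base} directly with the groups $S_1,S_2,S_3$ in place of single replicas. That is what both arguments ultimately amount to.

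There is one small loose end in your closing paragraph on Theorem~\ref{theo:bound}: the regime $f\ge n/2$ is left implicit. When $m=n$ this lemma cannot apply at all, because there is no non-TEE replica to form $S_3$. That case needs the separate majority lower bound, as the paper uses for $k=0$.
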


\begin{proof}
Assume for contradiction that such a protocol exists.

We construct a reduced three-party execution in which each set $S_i$ behaves as a single \emph{virtual replica}.
Before GST, the adversary delays all messages between different sets arbitrarily. In particular, all replicas within the same set receive identical external messages from other sets.
Moreover, replicas within the same set start from the same local state and input. Since protocol $\Pi$ is deterministic, replicas within the same set evolve identically and send identical messages to replicas outside the set.
Therefore, from the perspective of the other sets, each set $S_i$ is indistinguishable from a single replica executing protocol $\Pi$.

In particular, the adversary corrupts $S_3$. Because $S_3$ contains only non-TEE replicas, it can equivocate arbitrarily and send inconsistent messages to $S_1$ and $S_2$.
Thus, the system behaves like a three-replica system with one Byzantine replica capable of equivocation, which exactly matches the setting of Lemma~\ref{lem:base}. This is impossible, yielding a contradiction.
\end{proof}

\subsection{Proof of Theorem~\ref{theo:bound}}

\begin{proof}
We consider two regimes depending on which term dominates
\[
\max\left\{\frac n3,\frac m2\right\}.
\]

\paragraph{Case 1: $m/2 \le n/3$ (mixed-dominated regime).}

In this regime,
\[
\max\left\{\frac n3,\frac m2\right\}=\frac n3,
\]
so the theorem assumption implies
\[
f \ge n/3,
\]
or equivalently,
\[
n \le 3f.
\]

Moreover,
\[
m \le 2n/3,
\]
which implies
\[
k=n-m \ge n/3.
\]
Hence, there are at least $n/3$ non-TEE replicas.

When $n=3$, we have $m\le2$, so Lemma~\ref{lem:base} directly applies.

Now consider $n>3$.
Construct a set $S_3$ consisting of $\lceil n/3\rceil$ non-TEE replicas, which is possible since $k\ge n/3$.
Partition the remaining replicas into two non-empty sets $S_1$ and $S_2$ such that
\[
|S_1|,|S_2|\le \lceil n/3\rceil.
\]

Since $f\ge n/3$, all three sets satisfy $|S_i|\le f$.
By construction, $S_3$ contains only non-TEE replicas.
Therefore, the conditions of Lemma~\ref{lem:partition} hold, and Byzantine agreement is impossible.

\paragraph{Case 2: $m/2 > n/3$ (TEE-dominated regime).}

In this regime,
\[
\max\left\{\frac n3,\frac m2\right\}=\frac m2,
\]
so the theorem assumption implies
\[
f \ge m/2,
\]
or equivalently,
\[
m \le 2f.
\]

Moreover,
\[
\frac m2 > \frac{m+k}{3}
\]
implies
\[
m>2k.
\]

If $k=0$, the system reduces to the pure-TEE setting, where Byzantine agreement is impossible under $f\ge n/2$ by known lower bounds~\cite{clement2012limited}.

Now consider $k>0$.
Since $m>2k$, we have $m\ge3$.

Let $S_3$ be the set of all non-TEE replicas, so that
\[
|S_3|=k<m/2\le f.
\]

Partition the TEE replicas into two non-empty sets $S_1$ and $S_2$ such that
\[
|S_1|,|S_2|\le f,
\]
which is always possible since $m\le2f$.

Thus, all three sets are non-empty, each has size at most $f$, and $S_3$ contains only non-TEE replicas.
Therefore, the conditions of Lemma~\ref{lem:partition} hold, and Byzantine agreement is impossible.
\end{proof}

\section{Proof of Correctness}
\subsection{\sysname}\label{app:correctproof}
Recall that the system consists of $n$ replicas with $m$ TEE replicas and $k$ non-TEE replicas, where $n=m+k$. There are at most $f$ faulty replicas, bounded by Theorem~\ref{theo:bound}.
The \textit{TEE-Quorum} size is defined as $Q_T = \max \left\{ \left\lfloor \tfrac{m}{2} \right\rfloor +1 ,\, f+1 \right\}$,
and the \textit{Mixed-Quorum} size is defined as $Q_M = n-f$. We generalize the 
quorum-intersection guarantee in the following lemma. 

\begin{lemma}[Quorum Intersection]\label{lem:qc} 
Under the settings of $m \ge 2f+1$ or $n \ge 3f+1$, any two valid quorums—whether two TEE-Quorums, two Mixed-Quorums, or one of each—must intersect in at least one replica that does not equivocate (\ie, never votes for two conflicting messages).
\end{lemma}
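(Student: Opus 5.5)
The plan is a direct counting argument, split into the three quorum pairings named in Lemma~\ref{lem:qc}. First I will fix what ``does not equivocate'' means in the fault model of Section~\ref{subsec:model}. Every honest replica is non-equivocating. Every TEE replica, \emph{including a Byzantine one}, is also non-equivocating, because its votes are produced by \counter, which binds each certified message to a strictly increasing $(view, phase)$ identifier. So the only replicas that can equivocate are Byzantine non-TEE replicas, and there are at most $\min\{f,k\}$ of them. Each case then reduces to showing that the intersection of two quorums is strictly larger than the number of equivocators it could contain.

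\textbf{TEE/TEE.} Both quorums are subsets of $\mathcal{S}_{\TEE}$, and each has size at least $Q_T \ge \lfloor m/2\rfloor+1$. Hence their intersection has size at least $2(\lfloor m/2\rfloor+1)-m \ge 1$. Every member of the intersection is a TEE replica, so it is non-equivocating. This case needs neither of the two hypotheses.

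\textbf{TEE/Mixed.} A Mixed-Quorum has $n-f$ members, so it omits at most $f$ replicas. In particular it omits at most $f$ TEE replicas. The TEE-Quorum contains at least $Q_T \ge f+1$ TEE replicas, so at least one of them lies in the Mixed-Quorum. That replica is a TEE replica and hence non-equivocating. Again no hypothesis is needed; this is exactly why the $f+1$ term appears in $Q_T$.

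\textbf{Mixed/Mixed.} This is the only case that uses the hypotheses, and it is the main obstacle, because it must also hold in the TEE-dominated regime where $n\le 3f$. Two Mixed-Quorums intersect in at least $n-2f = m+k-2f$ replicas, and it suffices to show $m+k-2f > \min\{f,k\}$.
\begin{itemize}
\item If $n\ge 3f+1$, then $n-2f\ge f+1 > \min\{f,k\}$, so the intersection contains an honest replica.
\item If instead $m\ge 2f+1$, I split on $k$. When $k\le f$, the requirement $m+k-2f>k$ is equivalent to $m>2f$, which holds. When $k>f$, we have $n = m+k \ge (2f+1)+(f+1) > 3f$, so $n-2f > f$.
\end{itemize}
In every subcase the intersection exceeds the maximum possible number of equivocators, so it contains a non-equivocating replica. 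This completes the plan.
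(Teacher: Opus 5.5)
Your proof is correct and follows the paper's argument: the same three-way case split, the bound $2(\lfloor m/2\rfloor+1)-m\ge 1$ for TEE/TEE, the same $(f+1)+(n-f)-n\ge 1$ counting for TEE/Mixed, and the $n-2f$ bound for Mixed/Mixed under the two regimes $n\ge 3f+1$ and $m\ge 2f+1$. The only difference is cosmetic: your split on $k$ in the $m\ge 2f+1$ regime is unnecessary, because $n-2f=m+k-2f\ge k+1>\min\{f,k\}$ holds directly, which is exactly the paper's observation that the intersection must then contain a TEE replica.
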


\begin{proof}
Let $S$ $(|S|=n)$ be the set of all the replicas, $S_{\mathrm{TEE}}$ be the set of all the TEE replicas. We have $|S_{\mathrm{TEE}}|=m$. 
We consider the following three cases: 

\medskip
\noindent\emph{Case 1: Mixed vs.\ Mixed.} For any two sets that form a Mixed-Quorum $S_{M1},S_{M2}\subseteq S$, we have
\[
|S_{M1}\cap S_{M2}| \;\ge\; |S_{M1}|+| S_{M2}|-n \;=\; n-2f.
\]

When $n\ge 3f+1$, the intersection size is at least $f+1$.
Since there are at most $f$ {faulty} replicas, $S_{M1}\cap S_{M2}$ must contain at least one honest replica. 

When $n < 3f+1$, the classical honest-intersection argument no longer applies.
However, in this regime we have $m\ge 2f+1$.
The number of non-TEE replicas is $k=n-m$. Since $m\ge 2f+1$, we have
\[
k=n-m \le n-(2f+1)=n-2f-1.
\]
Therefore,
\[
|S_{M1}\cap S_{M2}| \;\ge\; n-2f > k,
\]
which implies that $S_{M1}\cap S_{M2}$ must contain at least one TEE replica.

\medskip
\noindent\emph{Case 2: TEE vs.\ Mixed.}
For any two sets that forms a TEE-Quorum and Mixed-Quorum respectively, \ie,  $S_T \subseteq S_{TEE}, S_M\subseteq S$, we have
\[
|S_T \cap S_M| \;\ge\; |S_T|+| S_M|-n \;\ge\; (f+1)+(n-f)-n = 1.
\]
Since all the replicas in $S_T$ are TEE replicas, $S_T\cap S_M$ contains at least one TEE replica that cannot equivocate. 

\medskip
\noindent\emph{Case 3: TEE vs.\ TEE.}
Let $S_{T1},S_{T2}\subseteq S_{\mathrm{TEE}}$ be two TEE-Quorums. We have 
\[
|S_{T1} \cap S_{T2}| \;\ge\; |S_{T1}|+| S_{T2}|-m \;\ge\; 2\left\lfloor \tfrac{m}{2} \right\rfloor +2 -m \;\ge\; 1.
\]
Similarly, since the replicas in $S_{T1}$ and $S_{T2}$ are all TEE replicas, $S_{T1}\cap S_{T2}$ contains at least one TEE replica that cannot equivocate. 

Therefore, any two valid quorums must intersect in at least one non-equivocating replica: either an honest replica (which only signs one block per view) or a TEE replica (which is hardware-enforced to do so). 
\end{proof}

\begin{lemma}[Prepared Block Uniqueness]\label{lem:uniq-per-view}
If two blocks $b$ and $b'$ are prepared in the same view $v$, $b = b'$.
\end{lemma}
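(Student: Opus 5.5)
The plan is to argue by contradiction from Lemma~\ref{lem:qc}. Say a block $b$ is \emph{prepared in view $v$} when a valid quorum certificate $qc_{prep}=\langle \hash(b), v, \perp, \prep\rangle_{\vec\sigma}$ exists. For the TEE-leader fast path, the analogous object is the certificate formed from the votes cast in the $\prep$ step of view $v$. Suppose $b\neq b'$ are both prepared in view $v$, with vote sets $V$ and $V'$. Each of $V, V'$ is either a TEE-Quorum (at least $Q_T$ distinct TEE signers, checked by \tmatch) or a Mixed-Quorum (at least $Q_M$ distinct signers, checked by \mmatch). Lemma~\ref{lem:qc} applies under $m\ge 2f+1$ or $n\ge 3f+1$, which holds by Theorem~\ref{theo:bound}. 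It therefore yields a replica $r\in V\cap V'$ that does not equivocate. In every case-combination (TEE/TEE, TEE/Mixed, Mixed/Mixed) this replica is either honest or a TEE replica.

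Next I would show that such an $r$ cannot have produced both a $\prep$-vote for $\hash(b)$ and one for $\hash(b')$ in view $v$. I split on the type of $r$.

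\emph{$r$ is honest.} By Algorithm~\ref{alg:normal-op}, an honest replica answers the leader's proposal only once per view. It emits a single $\phi'$ in the prepare step and then advances its phase, so it signs at most one prepare vote per view.

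\emph{$r$ is a TEE replica, possibly with a Byzantine host.} Its vote must come from \teeprepare, which ends in \teesign. \teesign binds the signature to the current $(view,phase)$ and then increments that counter monotonically (Algorithm~\ref{alg:tee}). Because the signing key never leaves the enclave, the host cannot mint a second certificate carrying the same $(v,\prep)$ tag, whatever inputs it feeds in. Hence two distinct hashes for the same step cannot both carry $r$'s TEE signature.

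Either way we contradict $r\in V\cap V'$, so $b=b'$.

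I would also note the TEE-leader case separately, since it gives an independent argument. There the leader's proposal is itself produced by \teeprepare at the unique step $(v,\prep)$, so at most one certified proposal exists for view $v$. Honest and TEE backups vote only on proposals that verify against it, and non-TEE backups additionally check $\verify(\phi_{prep})$. This alone forces any two prepared blocks to coincide, even before invoking Lemma~\ref{lem:qc}.

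The main obstacle is making the "TEE replicas do not equivocate" step precise against the pseudocode. The counter in \teesign is shared across phases, and \teestore/\teeview also sign. I must argue that the phase tag inside the signed tuple, together with monotonicity, rules out two $\prep$-tagged signatures for the same view: any second call happens at a strictly larger $(view,phase)$, so it cannot carry the tag $(v,\prep)$. I would state this as a small invariant of \counter first, then plug it into the intersection argument.
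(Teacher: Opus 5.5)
Your proof is correct and rests on the same two ingredients as the paper's: uniqueness of the TEE leader's \teeprepare-certified proposal, and Lemma~\ref{lem:qc} giving a non-equivocating replica in the intersection of the two prepare quorums (note that $n\ge 3f+1$ or $m\ge 2f+1$ is a standing assumption of the analysis, not a consequence of the impossibility result Theorem~\ref{theo:bound}); the paper splits on the leader's type, and you additionally make explicit the monotonic-counter invariant of \counter that the paper takes for granted. Under a TEE leader, TEE backups record the block as prepared directly inside \teeprepare without any quorum being formed, and non-TEE backups vote with \com tags, so it is your supplementary leader argument, not the intersection argument, that actually carries that case---exactly as in the paper.
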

\begin{proof}
Suppose that  $L$ is the leader of view $v$. We consider two cases:

\medskip
\noindent\emph{Case 1: $L$ is a TEE replica.}
For a block proposed by TEE Leader to be prepared, it must be through \teeprepare, which can be invoked at most once per view.
\teeprepare binds the proposal to $(v,\textsf{phase})$ and a generate prepare certificate, enforcing non-equivocation. 
Hence, $L$ can propose at most one block in view $v$. 
Backups mark a block prepared only upon receiving the leader’s valid prepare certificate for $v$, so there is a unique prepared block in view $v$. Therefore $b=b'$.

\medskip
\noindent\emph{Case 2: $L$ is a non-TEE replica.}
Assume, for contradiction, that $b\neq b'$ and both become prepared in view $v$. Then there exist two valid prepare certificates $qc_b$ and $qc_{b'}$ for $b$ and $b'$, respectively, formed in view $v$ under the dual-quorum (each is either a TEE-Quorum or a Mixed-Quorum certificate). 
By Lemma~\ref{lem:qc} (Quorum Intersection), the two quorums underlying $qc_b$ and $qc_{b'}$ intersect in at least one replica  $n$ that cannot equivocate within view $v$. Consequently, $n$ cannot have signed prepare votes for two different block hashes in the same view, contradicting the existence of distinct $QC_b$ and $QC_{b'}$. Hence $b=b'$.

\medskip
In all cases, at most one block can be prepared in a given view, which proves the claim.
\end{proof}

\begin{lemma}[No Equivocation]\label{lem:safe-prep}
Let $v$ be a view and $v' \leq v$ be the latest view in which a block $b$ was stored as prepared. 
Then, any block/view pair $(b', v'')$ prepared in some view $v'' \geq v'$ must extend $b$, \ie $b' \succ^* b$. 
\end{lemma}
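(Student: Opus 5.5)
We would prove Lemma~\ref{lem:safe-prep} by strong induction on the view $v''$, in the style of the HotStuff/\damysus\ safety arguments. We read "stored as prepared" as: a prepare QC (or, under a TEE leader, the unique certified proposal) for $b$ in view $v'$ was formed. The statement then says every block prepared in a view $v''\ge v'$ extends $b$.

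The base case $v''=v'$ follows directly from Lemma~\ref{lem:uniq-per-view}, since the block prepared in $v'$ is $b$ itself and so $b'=b$. For the inductive step, fix $v''>v'$ and assume the claim for every view in $[v',v'')$. We examine how the leader of $v''$ chose its parent. By Algorithm~\ref{alg:normal-op}, the leader of $v''$ collects a valid quorum of new-view certificates (TEE-Quorum or Mixed-Quorum), picks $\phi_{nv}$ with the highest $\phi._{Vjust}$, and extends $\phi_{nv}._{Hjust}$. Backups check $b'\succ h'$; TEE replicas additionally run \teeprepare, which aborts unless $h'=lockh$ or $v'_{just}>lockv$.

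The key step is to show that the new-view quorum of $v''$ contains a non-equivocating replica whose reported prepared view is at least $v'$. For this we use Lemma~\ref{lem:qc}: the quorum that certified $b$ in $v'$ intersects the new-view quorum of $v''$ in a replica $r$ that cannot equivocate. There are two cases. If $r$ is a TEE replica, monotonicity of \counter's $(view,phase)$ and the fact that \teeview\ reports $(preph,prepv)$ guarantee that its new-view certificate carries $prepv\ge v'$, because $r$ stored $b$ (via \teestore, or via \teeprepare\ when the leader is a TEE) before moving past $v'$. If $r$ is an honest non-TEE replica, it sends $qc_{prep}$, which likewise is at view $\ge v'$. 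Since the leader picks the maximal $\phi._{Vjust}$, the parent $h'$ was prepared in some view $w$ with $v'\le w<v''$. By the induction hypothesis, that parent extends $b$, and hence so does $b'$.

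The main obstacle is the intersection step when the $v''$ leader is non-TEE and may lie about which certificate is highest. We would handle it using the lock check. Any quorum voting for $b'$ intersects the $v'$ quorum in a non-equivocating replica (Lemma~\ref{lem:qc} again). That replica has $lockv\ge v'$ on a block extending $b$, so it refuses any proposal whose justify view is lower or which conflicts with $b$, unless shown a higher QC, which by induction extends $b$. In this way the conclusion rests on voter behaviour rather than on leader honesty. The delicate part is checking that the TEE fast path, where prepared state is set inside \teeprepare\ without a separate pre-commit lock, still records $b$ in the intersecting replicas. This is exactly what lines 17--18 of Algorithm~\ref{alg:tee} provide.
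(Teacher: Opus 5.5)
Your route is the paper's own: induction over views, a split on the leader type, intersecting the new-view quorum of $v''$ with the quorum behind $b$, highest-view selection, then the induction hypothesis. The step that carries all the weight, however, does not follow. That step is the claim that the non-equivocating replica $r$ in the intersection reports a prepared view $\ge v'$, and the paper's proof merely asserts it, so it offers nothing to fill the hole.

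Under your reading (``a prepare QC for $b$ was formed''), $r$ has only cast a prepare vote. With a non-TEE leader, \teeprepare leaves $(prepv,preph)$ untouched, because lines 17--18 of Algorithm~\ref{alg:tee} fire only when the leader is a TEE. Meanwhile \teestore and an honest non-TEE replica's update of $qc_{prep}$ are triggered only by the leader's broadcast of that QC. So a leader of $v'$ that forms the QC for $b$ and withholds it leaves every new-view report for $v'+1$ exactly as it was before $v'$. The next leader, even an honest one, can then justify a sibling $b^\ast$ of $b$ by the same block that justified $b$, and $b^\ast$ passes every voter check because no lock has moved. Under your reading the statement is simply false, and your claim that $r$ stored $b$ before moving past $v'$ is exactly where the argument breaks. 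At the very least the hypothesis has to be that a quorum actually stored $b$.

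Your lock argument for a lying non-TEE leader of $v''$ fails one level further up. The lock $lockv$ changes only in \teestore with $ph=\preco$ (line 25 of Algorithm~\ref{alg:tee}), which is a full phase after $b$ is stored as prepared, and it never changes on the fast path. Lines 17--18 set $(prepv,preph)$, not $(lockv,lockh)$, so they do not supply the $lockv\ge v'$ your argument needs. Honest non-TEE voters keep no lock at all and check only $\verify(\phi_{prep})\land b'\succ h'$. Your escape clause ($v'_{just}>lockv$ in line 16) and your max-selection step both read the justify fields of $\phi_{nv}$; for a non-TEE signer these are a signed claim, not a verified QC.

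Concretely, take $m\le f$ and let the TEE leader of the first view decide $b$ on the fast path with a Mixed-QC. No honest non-TEE replica records anything about $b$. A Byzantine non-TEE leader of the next view can therefore justify a sibling $b'$ of $b$ by an honest non-TEE replica's genuine new-view certificate, which still reports $\mathcal{G}$. It then collects $Q_M=n-f$ votes from non-TEE replicas alone, each checking only $b'\succ h'$.

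The missing ingredient is thus not bookkeeping. You need an invariant about blocks \emph{locked} by a quorum, plus protocol support that Algorithms~\ref{alg:normal-op} and~\ref{alg:tee} do not contain: a lock taken on the fast path, a lock check for non-TEE voters, and QC-backed justify fields in new-view certificates.
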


\begin{proof}
We prove this by induction on the view number $v$.

\bheading{Base Case.} 
In view $0$, only the genesis block is prepared, and any proposed block must extend the genesis block. 
Thus, the property holds trivially.

\bheading{Inductive Case.} 
Assume that the property holds up to view $v$ (induction hypothesis, IH). 
We will show that it also holds for view $v+1$. 
Let $v' \leq v$ be the latest view in which a block $b$ was prepared, and let $L$ be the leader of view $v+1$. 
We distinguish two cases:

\medskip
\noindent\emph{Case 1: TEE leader.}
    If $L$ is a TEE replica, then during the new-view phase, it received a \quorum of new-view messages, each carrying the latest prepared block known to that replica. 
    By quorum intersection (Lemma~\ref{lem:qc}), at least one non-equivoca-ting replica in this set reported block $b$ or an extension of $b$. 
    The TEE leader must select the highest-view prepared block among the collected certificates (via the \teeprepare) and can invoke \teeprepare only once per view, which binds block $b$ with view $v+1$. 
    Thus, $L$ must therefore propose a unique block $b'$ in view $v+1$, and $b'$ necessarily extends the selected highest prepared block, which by IH extends $b$. 
    Hence $b' \succ^* b$.

\medskip
\noindent\emph{Case 2: Non-TEE leader.}
    If $L$ is a non-TEE replica, then a block becomes prepared only if it gathers \quorum votes from arbitrary replicas. 
    In the new-view phase, the leader must also collect a \quorum of new-view messages. 
    By quorum intersection (\ie, Lemma~\ref{lem:qc}), there exists at least one replica that cannot equivocate in the intersection of this set which voted for the latest prepared block $b$. 
    By the induction hypothesis, any honest replica only votes for an extension of $b$. 
    Therefore, any newly prepared block $b'$ in view $v+1$ must satisfy $b' \succ^* b$.

\noindent In both cases, any prepared block in view $v+1$ extends the latest prepared block from view $\leq v$. 
Hence, the invariant holds by induction.
\end{proof}

\begin{theorem}[Safety]\label{lem:safety}
Honest replicas never execute conflicting blocks.    
\end{theorem}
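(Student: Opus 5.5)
The plan is to reduce execution safety to the two invariants already established: Lemma~\ref{lem:uniq-per-view} (Prepared Block Uniqueness) and Lemma~\ref{lem:safe-prep} (No Equivocation). We argue by contradiction. Suppose two honest replicas execute conflicting blocks $b$ and $b'$, and let $v$ and $v'$ be the views whose commit QCs triggered these executions. Without loss of generality, assume $v \le v'$.

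\textbf{Step 1: committed blocks are prepared.} An honest replica executes $b$ only after receiving a valid commit QC for $b$ in view $v$. Such a QC is formed by a TEE-Quorum or a Mixed-Quorum of commit certificates, and we must show that $b$ was prepared in $v$. For a non-TEE leader, a commit vote is issued only after seeing a pre-commit QC, which in turn required a prepare QC for $b$. For a TEE leader, backups issue commit certificates directly on the unique proposal produced by \teeprepare, and TEE backups record it as prepared inside \teeprepare. We then invoke Lemma~\ref{lem:qc}: the commit quorum contains at least one non-equivocating replica that has stored $b$ as prepared (or locked) in view $v$.

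\textbf{Step 2: same view.} If $v = v'$, both $b$ and $b'$ are prepared in view $v$. Lemma~\ref{lem:uniq-per-view} then gives $b = b'$, which contradicts the assumption that they conflict.

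\textbf{Step 3: different views.} If $v < v'$, consider the latest view $\le v'-1$ in which a block was stored as prepared. This view is at least $v$, and by Lemma~\ref{lem:safe-prep} applied with $v$ the block stored there extends $b$. Applying Lemma~\ref{lem:safe-prep} once more, every block prepared in a view $\ge v$, in particular $b'$ in view $v'$, satisfies $b' \succ^* b$. Hence $b$ and $b'$ do not conflict, again a contradiction. Blocks are executed together with their ancestors along the hash chain, so extension at the tips yields agreement at every height. This gives the height-based Safety property from the system goals.

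\textbf{Main obstacle.} The delicate step is Step 1 in the TEE-leader fast path. Because the pre-commit and commit phases are skipped, it is not immediate that a commit QC implies that enough replicas durably recorded $b$ so that later new-view quorums will carry it forward. I would first argue that the proposal is unique, using the once-per-view binding of \teeprepare, and that every TEE backup updates $(prepv, preph)$ in \teeprepare. For non-TEE honest backups, I would rely on their reporting $b$ through $qc_{prep}$ in their new-view messages. Then I would apply Lemma~\ref{lem:qc} between the commit quorum and any later new-view quorum to obtain a non-equivocating reporter of $b$, or of an extension of $b$. That reporter is exactly the hypothesis Lemma~\ref{lem:safe-prep} needs, so this intersection argument is where I would spend the most care.
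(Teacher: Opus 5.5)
Your proposal follows the paper's route: an executed block is prepared in its view, Lemma~\ref{lem:uniq-per-view} (Prepared Block Uniqueness) settles the same-view case, and Lemma~\ref{lem:safe-prep} (No Equivocation) with transitivity handles later views. The paper packages that last step as a complete induction over views, while you apply the lemma directly; one application with $b$'s own view already gives $b' \succ^* b$. Your Step~1 spells out the ``executed implies prepared'' step that the paper only asserts, and the carry-forward concern in your last paragraph is exactly what the paper assigns to Lemma~\ref{lem:safe-prep} (via quorum intersection in its proof), so for this theorem you can cite it rather than re-prove it.
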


\begin{proof}
Suppose that an honest replica $n_1$ executes block $b_1$ in view $v_1$; thus, the block must be prepared in that view. 
According to Lemma~\ref{lem:uniq-per-view}, block $b_1$ is the only block that is prepared in view $v_1$.
Next, we will prove, by a complete induction on the number of views between $v_1$ and $v_2$, that for any view $v_2 \ge v_1$, if block $b_2$ is stored as prepared in view $v_2$, then $b_2 \succ^* b_1$.


\bheading{Base case $v_1 = v_2$.}
Since $b_1$ is prepared in $v_1$ and an honest replica only prepares (and thus executes) a unique block per view, any prepared block $b_2$ in view $v_1$ must satisfy $b_2 = b_1$, hence $b_2 \succ^* b_1$.

\bheading{Inductive step.}
Assume that the statement holds for every view strictly smaller than $v_2$ (induction hypothesis, IH). 
Let $b_2$ be a block prepared in view $v_2 > v_1$. 
Let $b$ be the (uniquely) latest block that was stored as prepared in some view $v$ with $v_1 \le v < v_2$.
By the IH, we have $b \succ^* b_1$. 
By Lemma~\ref{lem:safe-prep} (Safe Prepared Blocks), any block prepared in a view $v_2 \ge v$ must extend $b$, hence $b_2 \succ^*b$.
By IH, $b \succ^* b_1$. Transitivity of $\succ^*$ gives $b_2 \succ^* b_1$.

We have shown that any block $b_2$ prepared (hence executable) in any later view $v_2 \ge v_1$ extends $b_1$. 
Therefore, if another honest replica $i_2$ executes a block $b_2$ in some view $v_2 \ge v_1$, then $b_2 \succ^* b_1$, so $b_1$ and $b_2$ cannot be conflicting. 
This proves that honest replicas never execute conflicting blocks.
\end{proof}

\begin{theorem}[Liveness]
    Clients' transactions will eventually be included in a block committed by honest replicas. 
\end{theorem}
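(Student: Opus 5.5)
The plan is the standard HotStuff-style liveness argument after GST. The extra work is to check that both quorum types and both leader paths stay live under the bound of Theorem~\ref{theo:bound}, i.e.\ $n\ge 3f+1$ or $m\ge 2f+1$.

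\textbf{Step 1: quorums are always formable by honest replicas.} I first show that after GST at least one of \tmatch$(\cdot,Q_T,\ldots)$ or \mmatch$(\cdot,Q_M,\ldots)$ can be satisfied using only votes from honest replicas. Since $Q_M=n-f$ and at least $n-f$ replicas are honest, a Mixed-Quorum is always reachable, whatever $m$ is. The TEE-Quorum is only an optional faster path. Because the leader waits for either predicate, the existence of an honest Mixed-Quorum is enough for progress.

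\textbf{Step 2: view synchronization.} I will invoke the standard pacemaker assumption. After GST, with exponentially growing timeouts, all honest replicas eventually stay in the same view $v$ for at least a duration that covers a full round. Each phase takes at most $2\Delta$, so the bound is $8\Delta$ on the non-TEE path and $4\Delta$ on the TEE fast path. With leader rotation, infinitely many such views have an honest leader. I then fix one such view with an honest leader $L$.

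\textbf{Step 3: honest replicas accept $L$'s proposal.} This is the step I expect to be the main obstacle. $L$ collects a quorum of new-view certificates and extends the one with the highest $\phi._{Vjust}$. I must show that every honest replica passes its checks. For TEE replicas the relevant check is \teeprepare's lock condition ($h'=lockh \lor v'>lockv$); for non-TEE replicas it is $b\succ h'$. The argument is by quorum intersection (Lemma~\ref{lem:qc}). If an honest replica is locked on block $b^*$ in view $v^*$, then a pre-commit QC for $b^*$ exists, so a quorum stored $b^*$ as prepared in $v^*$. That quorum intersects $L$'s new-view quorum in a non-equivocating replica, which reports prepared view $\ge v^*$. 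Hence $L$'s chosen $\phi_{nv}$ has $v'\ge v^*$. By Lemma~\ref{lem:safe-prep}, the chosen block is either $b^*$ or a block that extends it. So the lock check passes.

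The delicate sub-case is the TEE-leader fast path. There, \teeprepare updates $prepv$ directly, without a pre-commit lock. I will need to argue that the unique TEE-certified proposal (Lemma~\ref{lem:uniq-per-view}) is inherited in the same way.

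\textbf{Step 4: the view commits and the transaction is included.} Once all honest replicas vote, $L$ gathers $Q_M$ (or $Q_T$) matching votes within each phase's $2\Delta$. It broadcasts the QC for each successive phase, and finally the commit QC, before the timeout fires. All honest replicas then execute $b$. Since the client retransmits to all honest replicas, its transaction is in the mempool of every honest leader. The first such synchronized honest-leader view therefore includes it, which proves the claim.
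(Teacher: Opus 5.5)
Your proposal is correct and follows essentially the same route as the paper's proof: fix an honest leader after GST, note that a quorum can be formed from honest votes alone, argue that honest replicas accept a proposal extending the highest reported prepared block, and commit after one or three rounds. You give more detail than the paper, which merely asserts acceptance ``since it is for the latest view,'' whereas you justify the \teeprepare lock check via Lemma~\ref{lem:qc}; moreover, the TEE fast-path sub-case you flag is vacuous for liveness, since locks are written only by \teestore upon a pre-commit certificate, so fast-path views never create a lock that a later honest leader must respect.
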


\begin{proof}
Without loss of generality, we assume the leader of view $v$ is honest after \textsf{GST}. There are two cases for the \textsc{new-view} phase. 

\begin{packeditemize}
\item If the leader of view $v$ receives a prepare certificate or a block proposed by the TEE leader from the previous view for a block $b$, then the leader proposes a block $b^{\prime}$ that extends $b$. All honest replicas will accept and store the block since it is for the latest view. The leader can collect certificates from \quorum replicas because at least the \quorum honest replicas will send theirs and can form a \quorum certificate, and then send this QC to all replicas.
After one (TEE leader) or three (non-TEE leader) rounds of this process, all honest replicas will execute the block once they have pulled all previous blocks.

\item If the leader receives \quorum view certificates, it selects the prepared block $b$ with the highest view. Then, it extends block $b$ with its block $b^{\prime}$. After that, at least all \quorum honest replicas will store and vote for $b^{\prime}$. 
When receiving \quorum certificates, the leader will prepare a \quorum certificate for $b^{\prime}$ and broadcast the certificate. 
\end{packeditemize}
In both cases, the leader can coordinate with other replicas to commit a new block including honest clients' transactions. This completes the proof. 
\end{proof}

\begin{figure*}[t]
\centering
\includegraphics[width=0.8\textwidth]{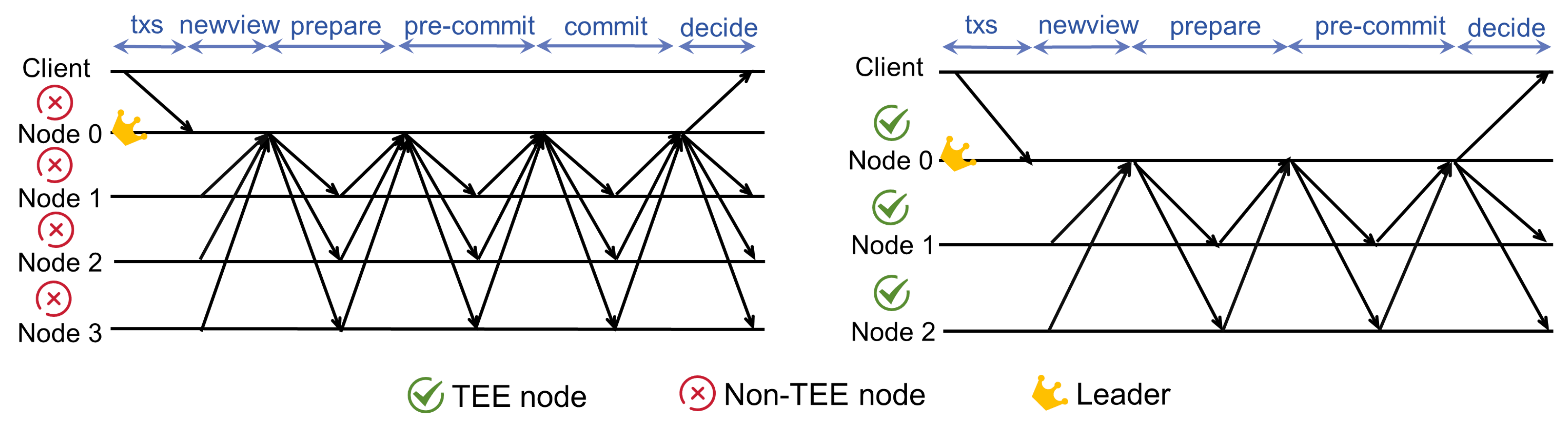}
\caption{The overviews of \hotstuff (left) and \damysus (right).}
\label{fig:appendix-overview}
\end{figure*}

\subsection{\csysname} \label{appen:chainProof}
\csysname differs from \sysname by pipelining the communication phases.
Thus, the quorum-intersection result (Lemma~\ref{lem:qc}) and the prepared block uniqueness (Lemma~\ref{lem:uniq-per-view}) also hold for \csysname because of the same Dual quorum rule and block prepare rule.

\begin{theorem}[Safety of \csysname]
In \csysname, honest replicas never execute conflicting blocks.
\end{theorem}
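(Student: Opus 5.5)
The plan is to mirror the safety proof for \sysname (Theorem~\ref{lem:safety}), adapting the invariant to the chained structure. A QC carried in $b.just$ plays the role of a prepare certificate for the parent block. The key invariant is the chained analogue of Lemma~\ref{lem:safe-prep}: if a block $b$ is the head of a committing chain, then every block certified by a QC in any later view extends $b$. Since the dual-quorum rule and the per-view uniqueness carry over, I will use Lemma~\ref{lem:qc} and Lemma~\ref{lem:uniq-per-view} directly. These give that at most one block per view obtains a QC, whether it is a TEE-QC or a Mixed-QC.

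First I fix the two commit rules and identify, for each, the locking event that protects the committed block.

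\emph{TEE-leader one-chain.} Suppose a block $b$ is proposed by a TEE leader in view $v$ and obtains a QC in view $v+1$. The TEE leader's \teeprepare call sets $(prepv,preph)$ inside every TEE voter, and non-equivocation ensures $b$ is the only certified proposal of view $v$. Any QC for $b$ therefore contains at least one non-equivocating replica whose stored prepared state is $b$ or an extension of it.

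\emph{Non-TEE three-chain.} Suppose $b\leftarrow b'\leftarrow b''$ have consecutive views and the QC for $b''$ exists. Then the QC for $b'$ locks $b$ at a quorum, exactly as in Chained-HotStuff.

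\emph{Mixed acceleration.} Here a non-TEE block $b_1$ is directly followed by a TEE-led $b_2$ that commits. Committing $b_2$ commits its ancestor $b_1$ by the ``block and all ancestors'' clause. Safety for $b_1$ therefore reduces to safety for $b_2$, and this case needs no separate argument.

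Second, I prove by induction on views $v'>v$ that any block $b^\ast$ certified in view $v'$ extends the committed head $b$. I split on whether the leader of $v'$ is a TEE replica.

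\emph{TEE leader at $v'$.} It collects $Q_T$ or $Q_M$ new-view certificates via \teeview. By Lemma~\ref{lem:qc}, this set intersects the locking quorum in a non-equivocating replica, which reports a prepared block at view $\ge v$. By the induction hypothesis, that block extends $b$. \teeprepare forces the leader to pick the highest $\phi._{Vjust}$ and to extend it, so $b^\ast\succ^* b$.

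\emph{Non-TEE leader at $v'$.} The leader may lie, but any QC for its proposal needs a quorum. That quorum intersects the locking quorum in a non-equivocating replica. This replica's lock, enforced by the $h'=lockh \lor v'>lockv$ check in \teeprepare or by the honest voting rule, only lets it vote for extensions of $b$ or for justifications of higher view. Those higher-view justifications extend $b$ by the induction hypothesis.

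Finally, I conclude as in Theorem~\ref{lem:safety}. Two executed blocks are each ancestors of committed heads, and by the invariant one head extends the other. Hence they do not conflict.

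The main obstacle is the TEE one-chain rule. Nothing in the algorithm locks $lockv/lockh$ for the TEE fast path. Only $prepv/preph$ is updated, and only inside TEE voters; non-TEE voters in a Mixed-QC store nothing trusted. My first attempt will argue that the intersecting replica from Lemma~\ref{lem:qc} is either honest or a TEE replica. If it is a TEE replica, its \teeview output reports $prepv\ge v$. If it is honest, it forwards its $qc_{prep}$. Combined with the leader's highest-view selection, this should suffice. If that fails for Mixed-QCs whose only intersecting replica is an honest non-TEE replica that never recorded $b$, I would instead restrict the one-chain commit to the case where the next QC also certifies the extension, which is effectively a two-chain, and argue via the lock set at that point.
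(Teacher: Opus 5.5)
There is a genuine gap, and it is the step you flag yourself: the TEE one-chain rule. Your outline (induction over later views, a split on the leader's type, Lemma~\ref{lem:qc} plus per-view uniqueness) matches the paper's sketch. The paper simply asserts that \teeprepare\ forces later TEE leaders to extend the executed block, so it does not supply the missing piece either.

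Your ``first attempt'' rests on two claims that are false for Algorithm~\ref{alg:chained}:
\begin{enumerate}
\item In the chained \teeprepare, each voter's own enclave sets $(prepv,preph)$ to the block certified by $b.just$, \ie the \emph{parent} of $b$. Setting it to the proposed block is what the non-chained version in Algorithm~\ref{alg:tee} does. So after voting for a TEE-led block $b$ of view $v$, every TEE voter's \teeview\ still reports a block of view below $v$.
\item An honest non-TEE replica assigns $qc_{prep}$ only when it is the next leader, so it does not forward a certificate for $b$ either.
\end{enumerate}

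Meanwhile, the one-chain rule fires at a single replica as soon as it receives a direct child $b^+$ carrying the QC for $b$. So no quorum is guaranteed to have recorded $b$. Concretely, a Byzantine leader of view $v+1$ can form that QC, deliver $b^+$ to one honest replica (which executes $b$), and then stall. On the new-view path your TEE-leader case relies on, a view-$(v+2)$ quorum that excludes this replica reports at most $b$'s parent. Highest-view selection then extends the parent with a block conflicting with $b$. Since $lockv/lockh$ are never updated in the chained code, nothing prevents votes for that block.

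Hence ``the intersecting replica reports a prepared block at view $\ge v$'' fails for TEE replicas too, not only for honest non-TEE members of a Mixed-QC. Even if it held, ``at least one non-equivocating replica of the QC stored $b$'' would not suffice. The intersection argument needs \emph{every} non-equivocating member of the locking quorum to have recorded $b$.

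Your fallback is the right repair: commit a TEE-led $b$ only once a QC for its child has also formed. Votes on the child record $b$ in every TEE voter, which is essentially chained \damysus's two-chain rule. Honest non-TEE voters would still have to be made to record and report $b$. But this establishes safety for a modified commit rule, not for \csysname\ as specified.

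Similarly, your three-chain case invokes a HotStuff-style lock that Algorithm~\ref{alg:chained} never sets. That lock has to be imported as an explicit assumption about the protocol rather than read off the pseudocode.
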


\begin{proof}[Proof Sketch]
Suppose an honest replica executes block $b''_1$ in view $v_1$, and another executes block $b''_2$ in view $v_2 \ge v_1$. In \csysname, a block can only be executed once it forms a one-chain (with TEE leader) or three-chain (with Non-TEE leader).
Then we consider two cases:
If $v_1 = v_2$, according to Lemma~\ref{lem:safe-prep}, $b''_1 = b''_2$. If $v_2 > v_1$, we consider two cases:

\medskip
\noindent\emph{Case 1: TEE leader.}
\teeprepare  function ensures that the TEE leader extends block $b''_1$ and proposes a unique block in this view. 
By induction across views, every subsequently executed block extends $b''_1$, and thus two conflicting blocks can never both be executed.

\medskip
\noindent\emph{Case 2: Non-TEE leader.}
Quorum intersection guarantees that the leader of $v_2$ receives information from at least one honest replica that prepared $b''_1$ (or its successor), forcing the new block to extend it. By induction across views, every subsequently executed block extends $b''_1$, and thus two conflicting blocks can never both be executed.
\end{proof}

\begin{theorem}[Liveness of \csysname]
    Clients' transactions will eventually be included in a block committed by correct replicas. 
\end{theorem}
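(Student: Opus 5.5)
The plan follows the liveness proof for \sysname and adapts it to the pipelined \prepare--\newview workflow of \csysname. We fix a time after \textsf{GST} at which the pacemaker has synchronized all honest replicas into the same view within bounded delay. We then look for a window of consecutive views whose leaders are honest. Leaders rotate and at most $f$ replicas are faulty, and the pacemaker is standard and unmodified, so such a window exists. It must contain one honest view in the case where the leader is a TEE replica. It must contain three consecutive honest views when the relevant leaders are non-TEE replicas (with the mixed-trust acceleration as a shortcut).

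The first step is to show that an honest leader in such a view always obtains a valid QC. The leader either holds the prepared QC from the previous view or collects new-view messages, and it extends the highest prepared certificate it sees. Two facts ensure that every honest replica accepts the proposal. First, by Lemma~\ref{lem:safe-prep}, the highest prepared certificate extends every honest replica's locked block, so the \teeprepare check $h^{\prime} = lockh \lor v^{\prime} > lockv$ and the corresponding non-TEE check both pass. Second, the resilience bound of Theorem~\ref{theo:bound} holds, i.e.\ $n \ge 3f+1$ or $m \ge 2f+1$. Hence either at least $n-f = Q_M$ honest replicas exist, or at least $m-f \ge f+1$ honest TEE replicas exist. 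For the latter to meet $Q_T$, we also need $m - f \ge \lfloor m/2\rfloor+1$, which follows from $m \ge 2f+1$. In either case, within $O(\Delta)$ the leader gathers a TEE-QC or a Mixed-QC, and it embeds that QC as $b.just$ in the next proposal.

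The second step applies the commit semantics. Consider a block proposed by an honest TEE leader that becomes certified and then carried by the next honest proposal. That block forms a one-chain, so it and all its ancestors are committed. A block proposed by a non-TEE leader needs a three-chain. That requires three consecutive honest views with consecutive justifications, and these are guaranteed inside the chosen window because no timeout fires after \textsf{GST} under honest leaders. If a TEE-led block directly follows it, the mixed-trust acceleration commits it earlier. Finally, honest clients retransmit to honest replicas, so a pending transaction eventually reaches an honest leader in the window and is included in a block, and that block is committed by the argument above.

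The main obstacle is the first step: showing that honest replicas never reject an honest leader's proposal because of a stale or higher lock, given that the dual quorum lets new-view sets be TEE-only. I would first invoke Lemma~\ref{lem:qc} to argue that any new-view quorum intersects the quorum that formed the highest locking QC in a non-equivocating replica. The leader therefore sees a certificate at least as high as any honest lock. If this intersection argument does not go through for chained justifications, the fallback is the standard HotStuff argument: wait $\Delta$ after \textsf{GST} for all honest new-view messages before proposing.
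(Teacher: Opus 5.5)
Your proposal is correct at the paper's level of rigor and follows the same route as its proof sketch. After \textsf{GST} an honest leader either holds the previous view's QC or collects a TEE- or Mixed-Quorum of new-view messages, and extends the highest prepared certificate; honest replicas vote for it because quorum intersection (Lemma~\ref{lem:qc}) makes that certificate at least as high as any honest lock, a QC forms, and the block commits by the one-chain rule under a TEE leader and by the three-chain rule otherwise.

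Your extra bookkeeping goes beyond the paper: you check that honest replicas alone can fill $Q_M$, or $Q_T$ when $m \ge 2f+1$, and that commitment needs consecutive honest leaders. The leader count is off by one, though, because the QC for a view-$v$ block is formed and carried by the leader of view $v+1$. A TEE-led block therefore needs two consecutive honest leaders, not one, and a three-chain commit needs four, not three. The existence of such a window must come from the leader schedule or pacemaker, not from the bound on $f$ alone: round-robin with $n=3f+1$ need not contain four consecutive honest leaders.
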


\begin{proof}[Proof Sketch]
After \textsf{GST}, if the leader $L$ of view $v$ is honest, it can gather a \quorum of valid new-view messages (either a TEE-Quorum or a Mixed-Quorum). 
If a prepare certificate (or a TEE-leader proposal) in the previous view is available, $L$ extends that block; otherwise, from the collected new-view messages $L$ selects the highest prepared block and extends it. The proposal is broadcast, and all honest replicas accept and vote for it since quorum intersection guarantees the proposal extends the honest chain. $L$ then aggregates a \emph{valid voting quorum} into a certificate and disseminates it.

If $L$ is a TEE leader, the proposed block is committed through the fast path of one-chain commit rule; otherwise, the block is committed following the three-chain commit rule.
In all cases, the block (containing pending transactions) is prepared and then committed. Therefore, honest clients' transactions will eventually be included in some committed block, establishing liveness.
\end{proof}

\section{HotStuff and \damysus in a Nutshell}

\subsection{HotStuff}\label{app:hotstuff}

HotStuff is a leader-based BFT protocol that adopts a chain structure and supports frequent leader rotation.
It runs with $n=3f+1$ replicas and uses quorum certificates (QCs) that aggregate $n-f$ votes.
A chained variant of HotStuff applies pipelining structures to improve throughput.

\bheading{Protocol description.}
HotStuff has three communication phases, \ie, \textsc{prepare}, \textsc{pre-commit}, and \textsc{commit}, to commit transactions.
Each phase has two communication steps: the leader broadcasts to all backups and then receives their votes to generate the QC, as shown in \figref{fig:appendix-overview}.
Specifically, a $PrepareQC$ certifies that a block has received a quorum of votes in the \textsc{prepare} phase, a $PrecommitQC$ certifies the quorum votes in the \textsc{pre-commit} phase, and a $CommitQC$ certifies the quorum votes in the \textsc{commit} phase.
In addition, HotStuff uses a \textsc{new-view} phase for leader rotation and a \textsc{decide} phase for block execution and client replies.
The detailed procedure is described as follows.

\vspace{1mm} \noindent \blackding{1}\; In the \textsc{new-view} phase, each backup increments its view and sends a new-view message carrying its highest observed $prepareQC$ to the leader.

\vspace{1mm} \noindent \blackding{2}\; In the \textsc{prepare} phase, the leader waits for $n-f$ new-view messages and computes the $highQC$ as the QC with the highest view, then it proposes a block that extends the replica justified by $highQC$ and broadcasts the block proposal together with $highQC$. Upon receipt, each backup replica checks: it votes only if the proposal either extends its locked branch or carries a QC with a view higher than its current lock; otherwise, it withholds the vote. Backup replicas that pass the check return a \textsc{prepare} vote. 

\vspace{1mm} \noindent \blackding{3}\; In the \textsc{pre-commit} phase, the leader waits for $n-f$ \textsc{prepare} votes and assembles them into a $prepareQC$ (threshold-signed). The leader then broadcasts this certificate. After seeing a valid $prepareQC$ for the current view, backup replicas send a \textsc{pre-commit} vote.

\vspace{1mm} \noindent \blackding{4}\; In the \textsc{commit} phase, the leader collects $n-f$ \textsc{pre-commit} votes to form a $precommitQC$ and broadcasts it. Upon receiving this certificate, replicas update \textit{lockedQC} to $precommitQC$, and then send a \textsc{commit} vote. This lock prevents future votes that would conflict with the locked branch.

\vspace{1mm} \noindent \blackding{5}\; In the \textsc{decide} phase, after receiving $n-f$ \textsc{commit} votes, the leader forms a $commitQC$ and broadcasts a decide message so that backups commit and execute the block and reply to the client.

\subsection{\damysus}\label{app:damysus}
\damysus is a TEE-assisted BFT protocol built atop HotStuff.
It introduces two trusted components, the \textsc{Checker} and the \textsc{Accumulator}, which raise the fault tolerance from $n = 3f + 1$ to $n = 2f + 1$, and cut one communication phase, as shown in \figref{fig:appendix-overview}. \damysus also has a chained version that uses pipelining to achieve higher performance.

\bheading{Trusted components.} The two trusted components, \textsc{Checker} and \textsc{Accumulator}, are introduced as follows.  

\begin{packeditemize}
    \item \textbf{\textsc{Checker}.} The \textsc{Checker} maintains a monotonic counter to record the current view and phase. It also stores a pair consisting of the view number and the hash of the last prepared block. This pair is included in the commitment that backup replicas (\ie, non-leader replicas) send to the leader during the \textsc{new-view} phase. The monotonic counter prevents Byzantine replicas from equivocating messages (\eg, blocks or votes).
    
    \item \textbf{\textsc{Accumulator}.} The \textsc{Accumulator} is used by the leader after collecting $f+1$ commitments from backup replicas in the \textsc{new-view} phase. It outputs the view number and hash of the prepared block with the highest view among these $f+1$ blocks. The leader then extends this block to ensure safety.
\end{packeditemize}

In \damysus, the \textsc{Checker} uses a monotonic counter to link each proposal or vote with a unique $(view, phase)$ identifier, which prevents equivocation. It further records the view number and hash of the latest prepared block; backup replicas attach this information to their new-view commitments when reporting to the leader. The \textsc{Accumulator} is used by the leader when processing new-view messages: after receiving $f{+}1$ commitments, it selects the prepared block with the highest view and outputs its $(view, hash)$ pair.

\bheading{Protocol description.}
Unlike HotStuff, which requires three phases (\textsc{prepare}, \textsc{pre-commit}, and \textsc{commit}), \damysus reduces the process to two phases: \textsc{prepare} and \textsc{pre-commit}.
Each phase still consists of two steps—the leader broadcasts to all backups and collects their votes.
Besides, \damysus retains the \textsc{new-view} phase for leader rotation and the \textsc{decide} phase for block execution and client replies.
Including the step where a client sends its transactions, the protocol requires six communication steps to finalize a commitment (excluding the \textsc{new-view} phase). In comparison, HotStuff introduces an extra \textsc{commit} phase, resulting in a total latency of eight steps.

\vspace{1mm} \noindent \blackding{1} In the \textsc{new-view} phase, each backup increments its view and sends its commitment, which contains its (view, hash) pair stored in the \textsc{Checker}, to the leader. 

\vspace{1mm} \noindent \blackding{2} In the \textsc{prepare} phase, the leader generates the latest prepared block from $f+1$ received new-view messages using the \textsc{Accumulator}. The leader then extends the latest prepared block certified by the \textsc{Checker}. Upon receiving this signed block from the leader, each backup responds with a vote generated by the \textsc{Checker}. In \damysus, a \textsc{Checker}’s counter is incremented each time a \textsc{Checker} is called. 

\vspace{1mm} \noindent \blackding{3} In the \textsc{pre-commit} phase, the leader collects $f+1$ prepare votes from backups, and broadcasts a combined version to backups. Backups consider the proposed block as prepared, store the view number and hash value associated with the block via the \textsc{Checker}, and reply with a vote it generates. 

\vspace{1mm} \noindent \blackding{4} In the \textsc{decide} phase, the leader collects $f+1$ commit votes from backups and broadcasts a combined version to backups so that they can execute the block. Every replica verifies the authenticity of the received messages signed by trusted components before processing them.

\bheading{One-phase optimization.} \achilles~\cite{achilles} demonstrates that the \textsc{prepare} phase in \damysus can be eliminated by exploiting equivocation prevention and chained commitment. Specifically, equivocation prevention is guaranteed by TEE-based trusted components, whereas chained commitment ensures that once descendant blocks are committed, their uncommitted parents are also finalized. With these mechanisms, \achilles employs customized chained commit rules to achieve reduced message complexity. In this paper, we adopt these optimizations and then propose a customized trusted component, called \counter, to reduce the one-phase from \damysus.

\begin{figure*}[t]
    \centering

    \begin{subfigure}[b]{0.24\linewidth}
        \centering
        \includegraphics[width=\linewidth]{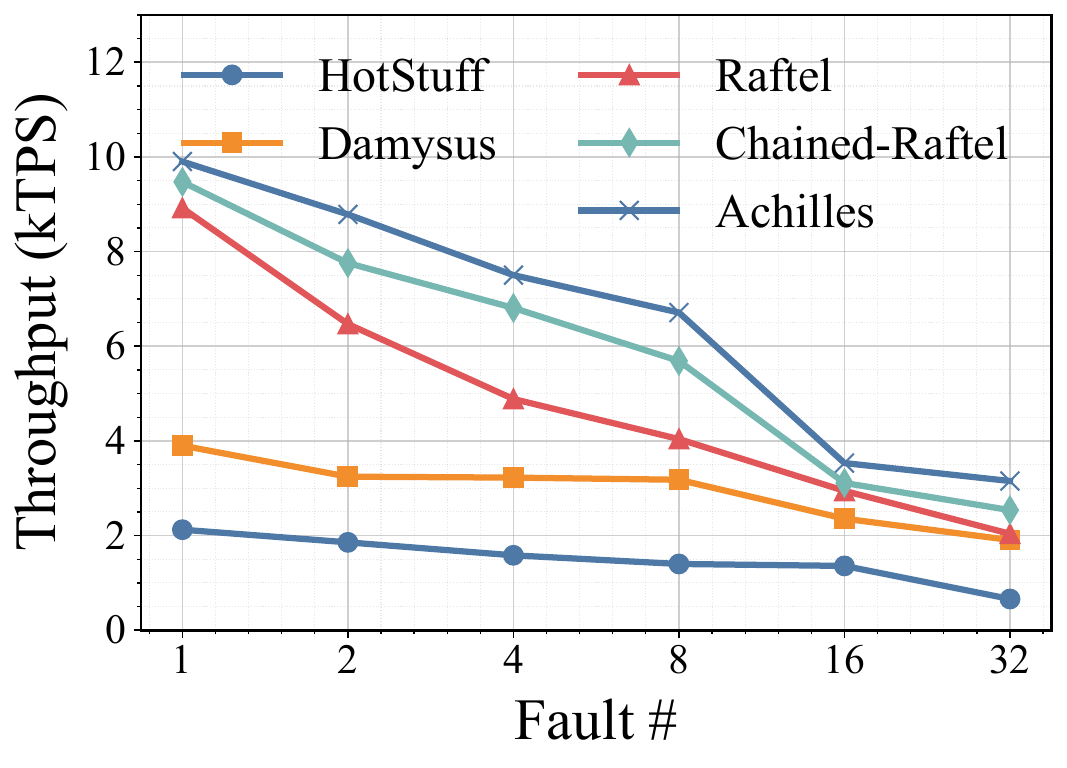}
        \caption{Faults, WAN}
        \label{fig:t-wan-replica}
    \end{subfigure}
    \hfill
    \begin{subfigure}[b]{0.24\linewidth}
        \centering
        \includegraphics[width=\linewidth]{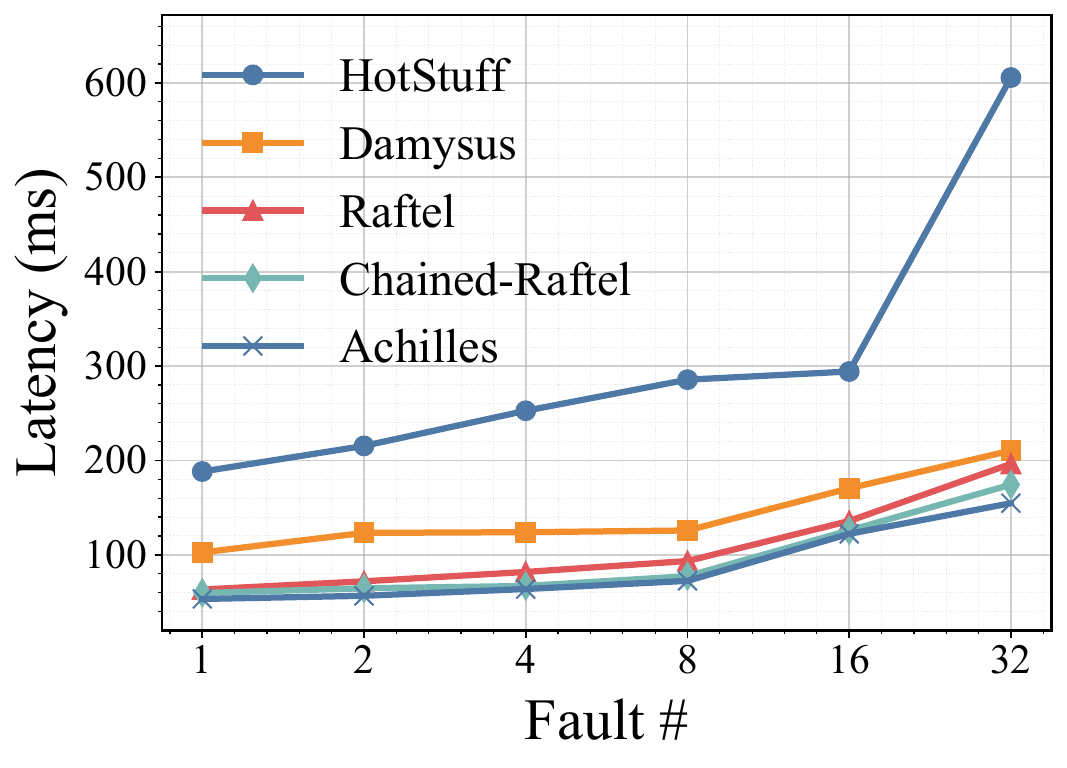}
        \caption{Faults, WAN}
        \label{fig:l-wan-replica}
    \end{subfigure}
    \hfill
    \begin{subfigure}[b]{0.24\linewidth}
        \centering
        \includegraphics[width=\linewidth]{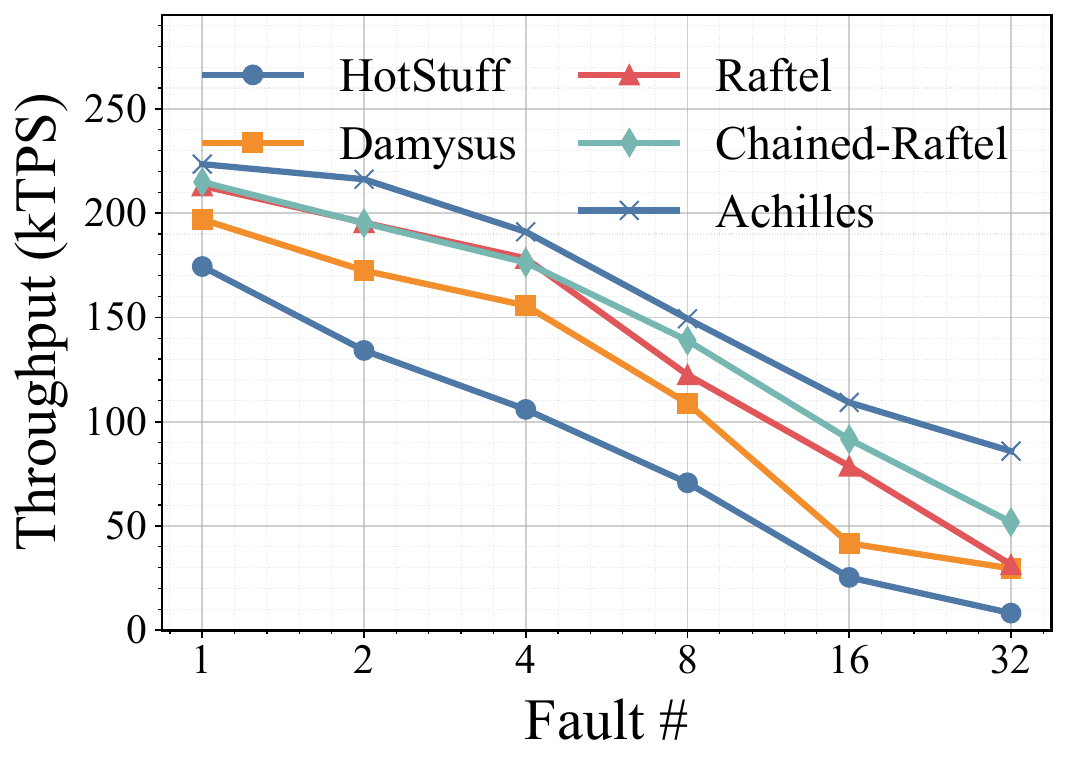}
        \caption{Faults, LAN}
        \label{fig:t-lan-replica}
    \end{subfigure}
    \hfill
    \begin{subfigure}[b]{0.24\linewidth}
        \centering
        \includegraphics[width=\linewidth]{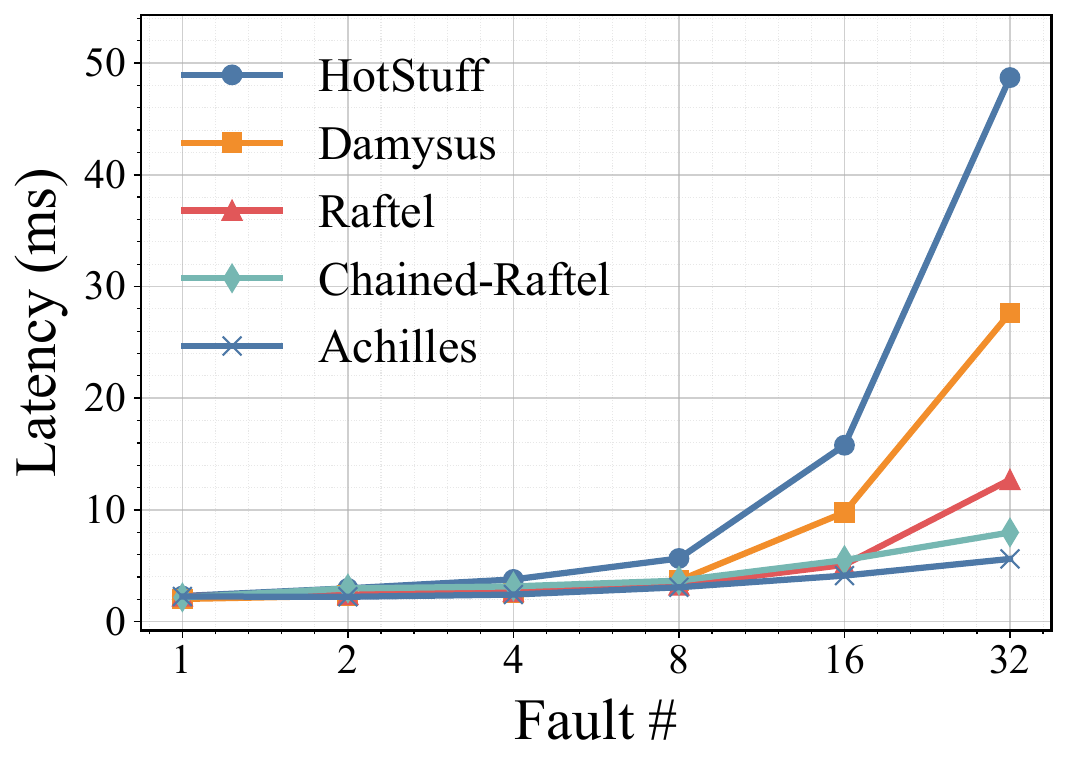}
        \caption{Faults, LAN}
        \label{fig:l-lan-replica}
    \end{subfigure}

    \vspace{2mm}

    \begin{subfigure}[b]{0.24\linewidth}
        \centering
        \includegraphics[width=\linewidth]{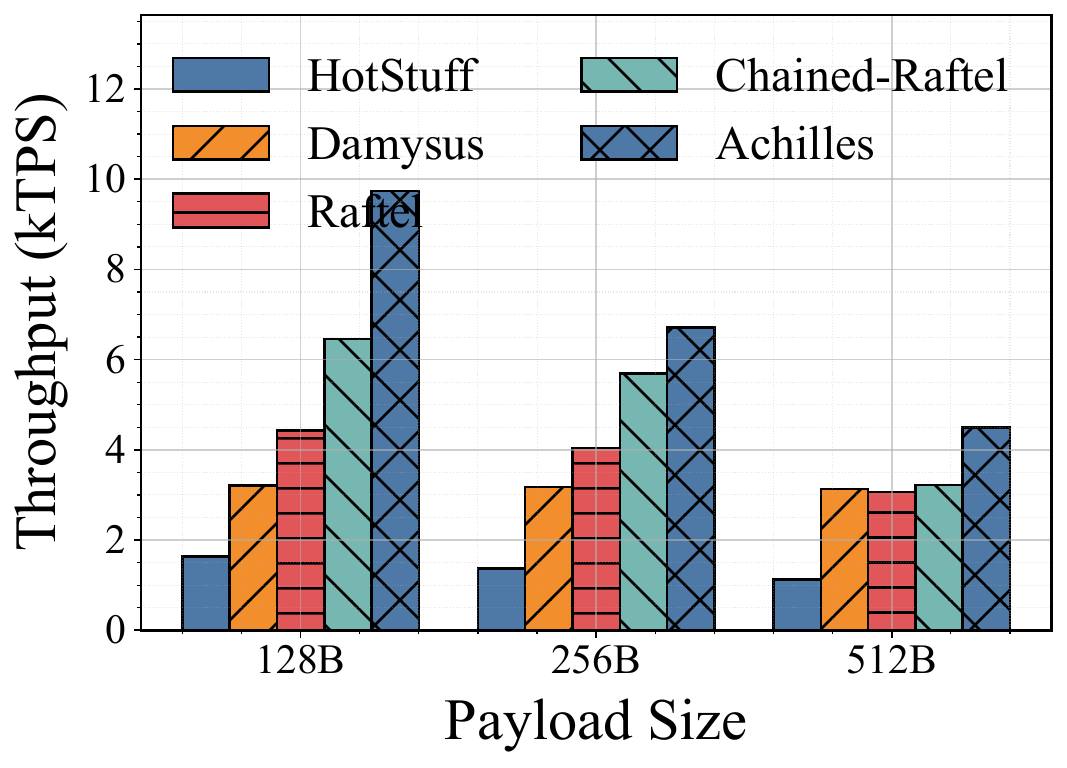}
        \caption{Payload size, WAN}
        \label{fig:t-wan-payload}
    \end{subfigure}
    \hfill
    \begin{subfigure}[b]{0.24\linewidth}
        \centering
        \includegraphics[width=\linewidth]{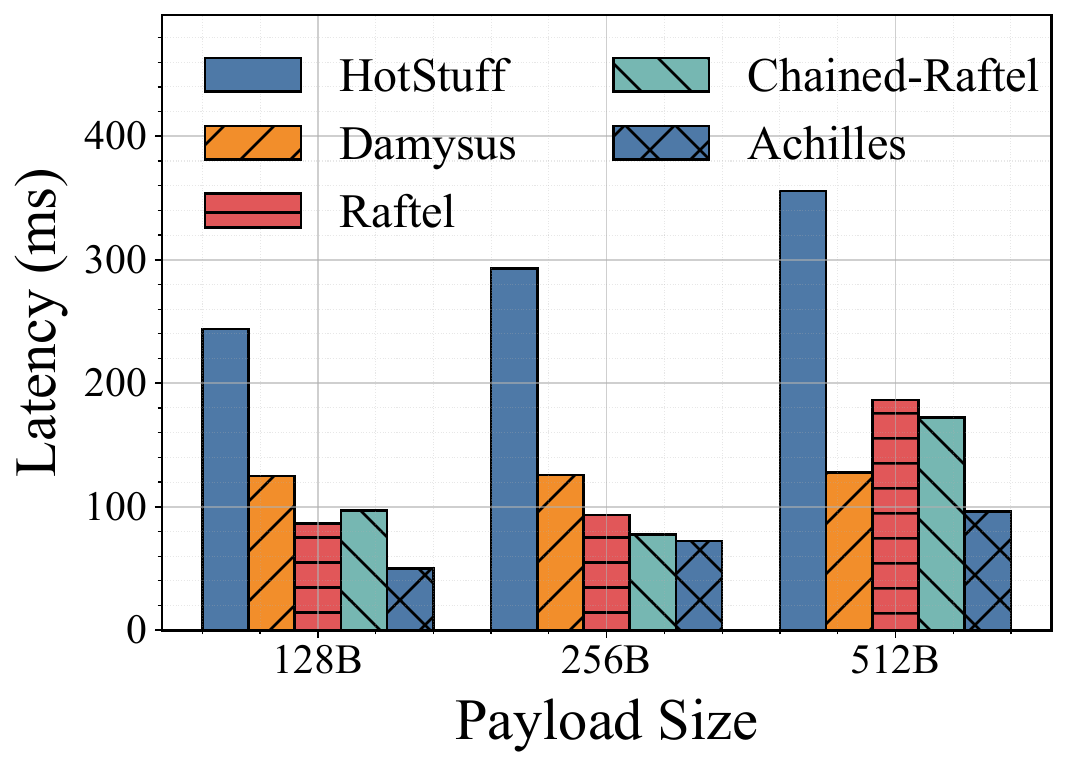}
        \caption{Payload size, WAN}
        \label{fig:l-wan-payload}
    \end{subfigure}
    \hfill
    \begin{subfigure}[b]{0.24\linewidth}
        \centering
        \includegraphics[width=\linewidth]{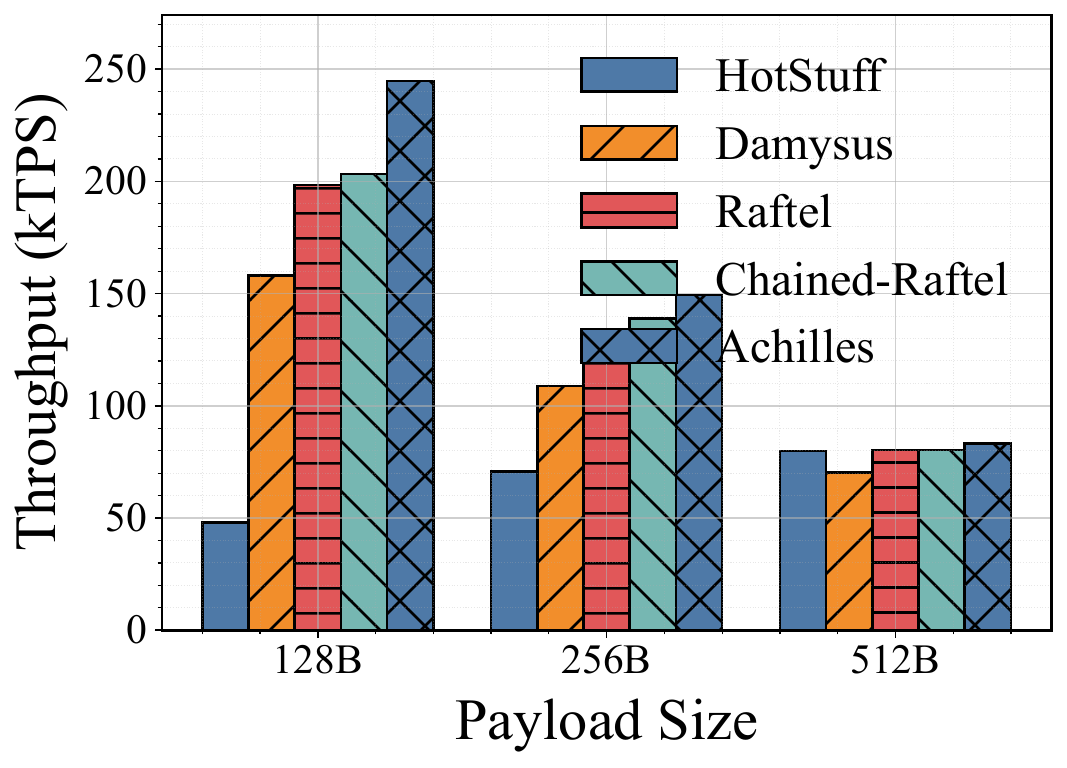}
        \caption{Payload size, LAN}
        \label{fig:t-lan-payload}
    \end{subfigure}
    \hfill
    \begin{subfigure}[b]{0.24\linewidth}
        \centering
        \includegraphics[width=\linewidth]{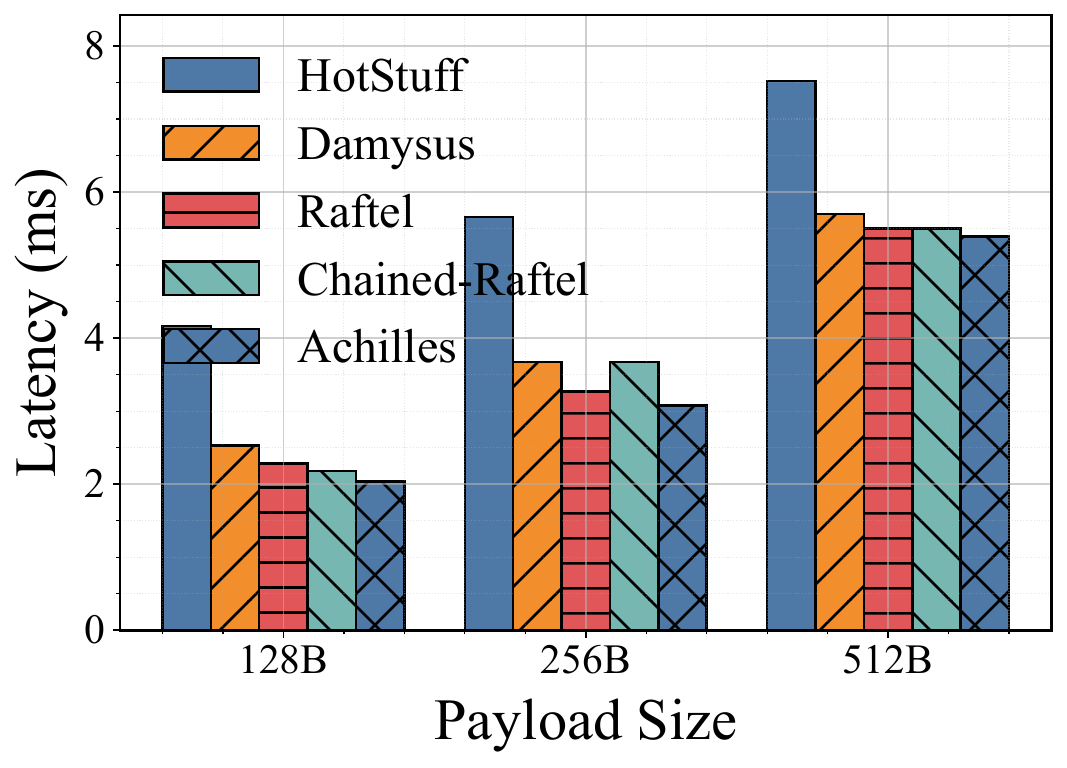}
        \caption{Payload size, LAN}
        \label{fig:l-lan-payload}
    \end{subfigure}

    \vspace{2mm}

    \begin{subfigure}[b]{0.24\linewidth}
        \centering
        \includegraphics[width=\linewidth]{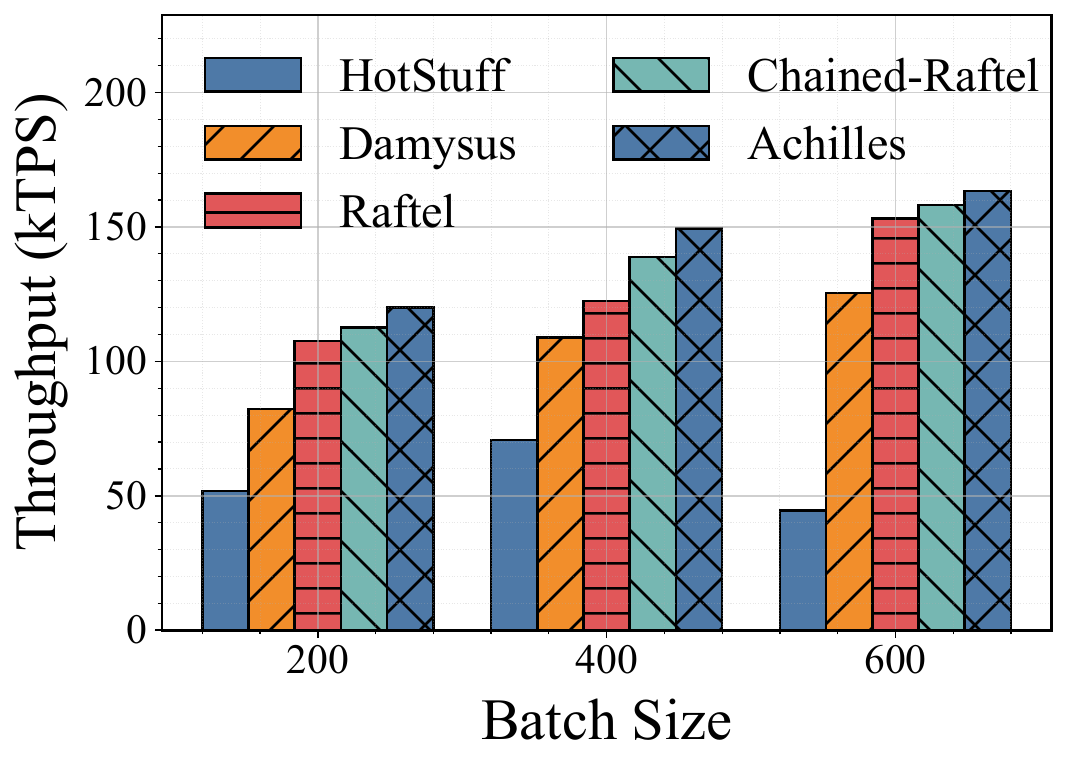}
        \caption{Batch size, LAN}
        \label{fig:t-lan-batch}
    \end{subfigure}
    \hfill
    \begin{subfigure}[b]{0.24\linewidth}
        \centering
        \includegraphics[width=\linewidth]{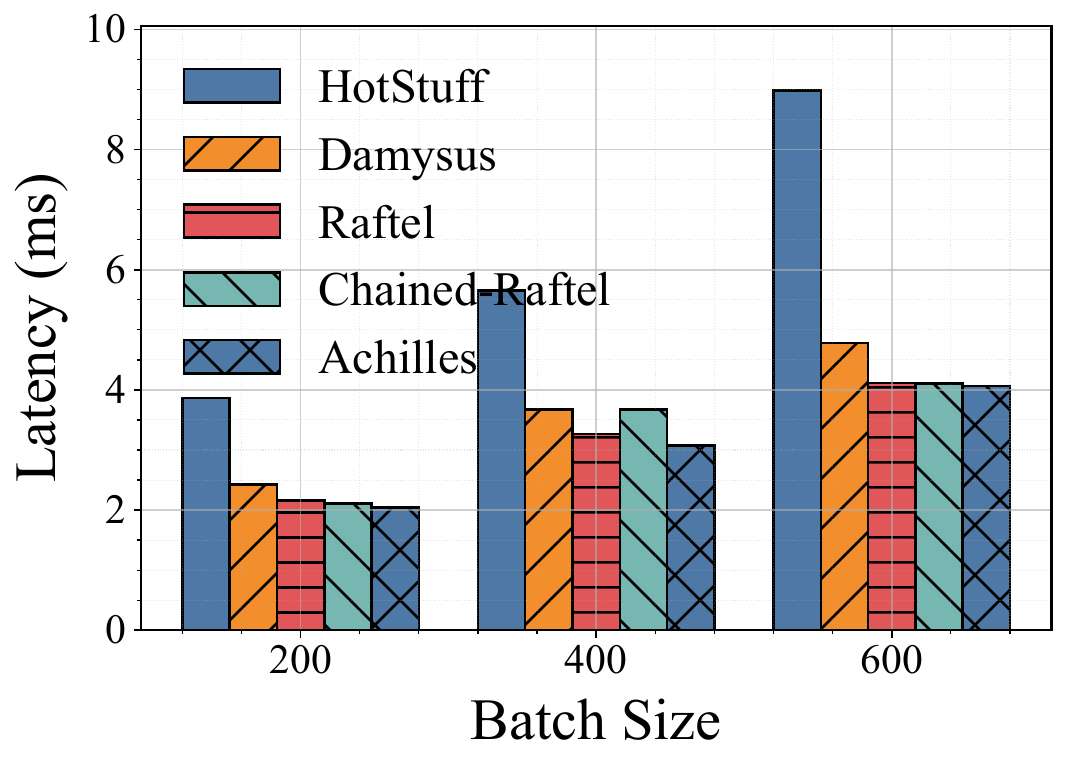}
        \caption{Batch size, LAN}
        \label{fig:l-lan-batch}
    \end{subfigure}
    \hfill
    \begin{subfigure}[b]{0.24\linewidth}
        \centering
        \includegraphics[width=\linewidth]{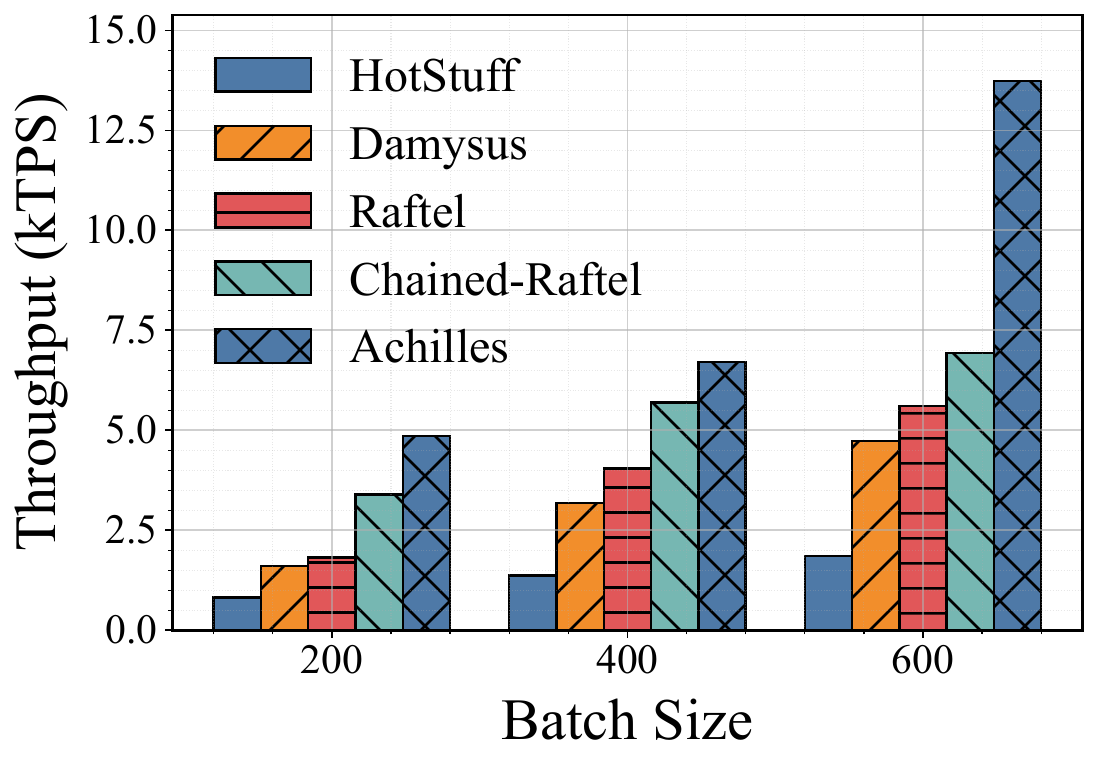}
        \caption{Batch size, WAN}
        \label{fig:t-wan-batch}
    \end{subfigure}
    \hfill
    \begin{subfigure}[b]{0.24\linewidth}
        \centering
        \includegraphics[width=\linewidth]{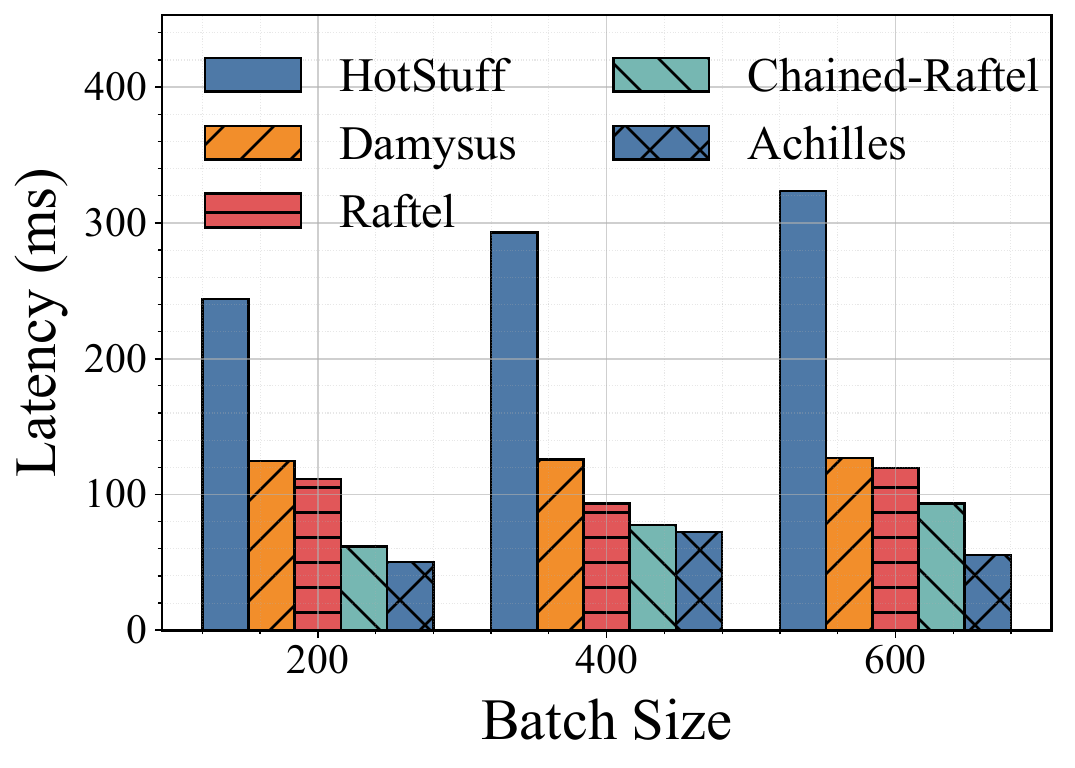}
        \caption{Batch size, WAN}
        \label{fig:l-wan-batch}
    \end{subfigure}

    \caption{Throughput and latency of \sysname with varying parameters in WAN and LAN.}
    \label{fig:wan}
\end{figure*}

\section{Additional Experiments} \label{appen:addExp}
To complement the scalability results under the standard WAN (RTT $100\pm5$\,ms) presented in \S\ref{sec:wholeperformane}, we provide additional micro-benchmarks in a \emph{low-latency WAN} setting (RTT $40\pm2$\,ms) and in LAN (RTT $0.1\pm0.02$\,ms). 
The low-latency WAN setting isolates the impact of protocol-phase reduction when network delay is less dominant. All experiments retain the same hardware setup described in \S\ref{sec:expset}. We vary batch sizes of 200, 400, and 600, transaction payloads of 0~B, 256~B, and 512~B, and fault thresholds 
$f \in \{1, 2, 4, 8, 16, 32\}$.
Payloads of 0 B and batch size of 400 transactions are used to evaluate the protocols' overhead, while other sets of payloads and transaction numbers have been selected to observe the trend when increasing the size of blocks.

\subsection{Performance in WAN} We evaluate \sysname in WAN with varying parameters.

\iheading{1) Varying fault thresholds.}
Figs.~\ref{fig:t-wan-replica} and~\ref{fig:l-wan-replica} show throughput and latency in the low-latency WAN setting. 
Compared with the standard WAN (\S\ref{sec:wholeperformane}), absolute throughput is higher across all protocols because the smaller RTT reduces the cost of each communication phase. 
Notably, the \emph{relative} speed-up of \sysname and \csysname over \hotstuff increases to $2.0\times$ and $2.8\times$ at $f=32$, respectively, with latency reduced by 71.2\% and 67.6\%. 
The larger gap stems from the same system-size effect: \sysname's leader must broadcast proposals to all $3f+1$ replicas regardless of quorum size. 
Under lower RTT, this communication penalty shrinks, making its fewer commit phases more impactful.

\iheading{2) Varying payload size.} Fig.~\ref{fig:t-wan-payload} and~\ref{fig:l-wan-payload} show the performance results of five protocols with varying payload sizes. The payloads are 0 B, 256 B, and 512 B. The number of faults is 8, and the batch size is fixed at 400. 
The experimental results indicate that as the payload increases from 0 B to 512 B, the throughput of \csysname and \achilles decreases by approximately 50\%, while \sysname and \hotstuff decrease by about 30\%. In contrast, \damysus shows only a negligible reduction of 2\%. With respect to latency, \csysname increases by 66\%, \sysname increases by 122\%, \damysus increases by merely 2\%, \achilles increases by 96\%, and \hotstuff increases by 47\%.

\iheading{3) Varying batch size.} Fig.~\ref{fig:t-wan-batch} and \ref{fig:l-wan-batch} illustrate the impact of varying batch sizes on the performance of five protocols. The number of faults is 10, the payload is 256 B, and the batch size varies from 200, 400, to 600. As the batch size increases from 200 to 600, the throughput improves substantially: \csysname increases by 100\%, \sysname by 205\%, \damysus by 193\%, \achilles by 180\%, and \hotstuff by 125\%. Latency also shows a slight upward trend, with the latency of \csysname increasing by 60\%, \sysname first decreasing by 16\% and then increasing by 28\%, \damysus rising by about 2\%, \achilles first decreasing by 40\% and then increasing by 20\%, and \hotstuff increasing by 33\%. This shows that the increase in batch size significantly boosts the throughput of the five protocols while also causing a slight increase in latency.

\begin{algorithm*}[t]
\caption{The pseudocode of operations for replica $i$ in \csysname}
\label{alg:chained}
\noindent
\begin{minipage}[t]{0.49\linewidth}
\begin{algorithmic}[1]
\Statex
\Statex  \textbf{(a) Non-trusted code of replica $i$}
\Statex
\State $pks$ \Comment{public keys}
\State $view = 1$ \Comment{current view}
\State $qc_{prep}$ \Comment{latest prepared certificate)}
\State blocks \Comment{mapping from views to proposed blocks}

\State
\State  \Comment{prepare phase}
\State  \textbf{as a leader}
\State  \hspace{1em} \aif $qc_{prep}._{cview} \neq view-1$ \athen
\State  \hspace{2em} \Comment{don't have the latest certificate}
\State  \hspace{2em} waits for $\vec{\phi}$ s.t. \tmatch $(\vec{\phi}, Q_T, \perp, view-1, \nv) $
\Statex \hspace{7em} $ \lor ~ \mmatch (\vec{\phi}, Q_M, \perp, view-1, \nv)$
\State  \hspace{2em} $\phi^{\prime} :=$ certificate $\phi \in \vec{\phi}$ with highest $\phi._{VJust}$ 
\State  \hspace{1em} \aendif
\State  \hspace{1em} $b := \creatchain(\phi^{\prime}, txs)$
\State  \hspace{1em} $blocks[view] := b$
\State  \hspace{1em} $b_0 := blocks[b._{just}._{view}]$
\State  \hspace{1em} \textbf{abort if} $\hash(blocks[b._{just}._{view}]) \neq  b._{just}._{hash}$
\State  \hspace{1em} \aif $\istee(i)$ \athen
\State  \hspace{2em} send $\phi_{prep}:= \teeprepare(b, b_0)$ to all
\State  \hspace{2em} send $\phi_{nv} := \teesign()$ to $\leader(view+1)$
\State  \hspace{1em} \aelse
\State  \hspace{2em} send $\langle \prep, b, \hash(b),  view\rangle_{\sigma}$ to all
\State  \hspace{2em} send $\phi_{nv} := \langle \nv, \hash(b), view\rangle_{\sigma}$ to replica $\leader(view+1)$
\State  \hspace{1em} \aendif

\State
\State  \textbf{all replicas} 
\State  \hspace{1em} waits for $\langle \prep, b, h, view\rangle_\sigma$ from the leader
\State  \hspace{1em} \textbf{abort if} $view \neq b._{just}._{view} +1$
\State  \hspace{1em} $b_0 := blocks[b._{just}._{view}]$
\State  \hspace{1em} \textbf{abort if} $\hash(blocks[b._{just}._{view}]) \neq  b._{just}._{hash}$
\State  \hspace{1em} $b_1 := blocks[b_0._{just}._{view}]$
\State  \hspace{1em} \textbf{abort if} $\hash(blocks[b_0._{just}._{view}]) \neq  b_0._{just}._{hash}$
\State  \hspace{1em} $b_2 := blocks[b_1._{just}._{view}]$
\State  \hspace{1em} \textbf{abort if} $\hash(blocks[b_1._{just}._{view}]) \neq  b_1._{just}._{hash}$
\State  \hspace{1em} \aif $i \neq \leader(view)$
\State  \hspace{2em} $\phi_{prep} := \langle \prep, \hash(b), view, \perp,\perp\rangle_\sigma$
\State  \hspace{2em} \textbf{abort if} $\neg(\verify (\phi_{prep}) \land b \succ b._{just}._{hash} )$
\State  \hspace{2em} $blocks[view] := b$
\State  \hspace{2em} \aif $\istee(i)$ \athen
\State  \hspace{3em} send $\phi^{\prime} := \teeprepare(b, b_0)$ to replica $\leader(view+1)$
\State  \hspace{3em} send $\phi_{nv} := \teesign()$ to replica $\leader(view+1)$
\State  \hspace{2em} \aelse
\State  \hspace{3em} send $\langle \prep, b, \hash(b),  view\rangle_{\sigma}$ to all
\State  \hspace{3em} send $\phi_{nv} := \langle \nv, \hash(b),  view\rangle_{\sigma}$ to replica $\leader(view+1)$
\State  \hspace{2em} \aendif
\State  \hspace{1em} \aendif

\algstore{mysplit}              
\end{algorithmic}
\end{minipage}\hfill
\begin{minipage}[t]{0.49\linewidth}
\begin{algorithmic}[1]          
\algrestore{mysplit}

\State  \hspace{1em} \aif $b._{parent} = H(b_0) \land \istee(\leader(b_0._{view}))$
\State  \hspace{2em} execute $b_0$ (and previous block) and reply to the client
\State  \hspace{1em} \aendif
\State  \hspace{1em} \aif $b._{parent} = H(b_0) \land b_0._{parent} = H(b_1)$
\Statex \hspace{2em} $\land b_1._{parent} = H(b_2) \land \neg \istee(\leader(b_0._{view})$
\State  \hspace{2em} execute $b_2$ (and previous block) and reply to the client
\State  \hspace{1em} \aendif
\State  \hspace{1em} \aif $i \neq \leader(view+1)$ \athen $view ++$
\State  \hspace{1em} \aelse 
\State  \hspace{2em} waits for $\vec{\phi}$ s.t. \tmatch $(\vec{\phi}, Q_T, \perp, h, view, \prep) $
\Statex \hspace{7em} $ \lor ~ \mmatch (\vec{\phi}, Q_M, \perp, h, view, \prep)$
\State  \hspace{2em} $qc_{prep} = \langle view, h, \vec{\sigma}\rangle; view++$

\State 
\State  \Comment{new-view phase}
\State  \textbf{upon timeout}
\State  \hspace{1em} $(v, ph):=(0,\prep); view++$
\State  \hspace{1em} \textbf{while} $(v,ph)\neq (view, \nv)$ \textbf{do}
\State  \hspace{2em}  $(v,ph):= (\phi._{view}, \phi._{phase})$
\State  \hspace{2em}  \aif $\istee(i)$ \athen
\State  \hspace{3em} $\phi := \teeview()$; 
\State  \hspace{2em} \aelse
\State  \hspace{3em} $\phi_v := \langle \nv, view, qc_{prep}\rangle_{\sigma_i}$
\State  \hspace{2em} \aendif
\State  \hspace{1em} \textbf{end while}
\State  \hspace{1em} send $\phi_v$ to $view$'s leader

\Statex
\Statex  \textbf{(b) TEE code if replica $i$ has TEE }
\Statex
\State $sk, pks$ \Comment{private and public key}
\State $(view, phase) = (0, 0)$  \Comment{current view and phase}
\State $(prepv, preph)=(0, H(\mathcal{G}))$  \Comment{latest prepared block}
\State $(lockv, lockh)=(0, H(\mathcal{G}))$  \Comment{latest locked block}
  
\State
  \State \textbf{function} \teesign$(h, h^{\prime}, v^{\prime})$, \teeview()
  \State \hspace{1em}\Comment{Same as in Algorithm 1}
  
  \State
  \State \textbf{function} \teeprepare$(b, b_0)$
  \State \hspace{1em} $qc := b.just$
  \State \hspace{1em} \aif $\left(\begin{array}{l} 
     \verify(qc) \wedge view = qc._{cview}+1
    \\ \wedge qc._{hash} = \hash(b_0)
    \end{array}\right)$ \athen
  \State \hspace{1em} \textbf{abort if} $\neg ( \verify(\sigma) \wedge v = view \wedge ph = \nv)$ 
  \State \hspace{1em} \textbf{abort if} $\neg ( \hash(b) = h \wedge b.h_p = h^{\prime})$ 
  \State \hspace{1em} \textbf{abort if} $\neg ( h^{\prime} = lockh \lor v^{\prime} > lockv)$
  \State \hspace{2em} \aif $b._{parent} = \hash(b_0)$ \athen
  \State \hspace{3em} $preph := qc._{hash}; prepv := qc._{view}$
  \State \hspace{2em} \aendif
  \State \hspace{2em} \return $\phi^{\prime} := \teesign(\hash(b), \perp, \perp)$
  \State \hspace{1em} \aendif

  \State 
  \State \textbf{function} \teestore$(\phi_{nv}, \vec{\phi}_n)$ 
  \State \hspace{1em}\Comment{Same as in Algorithm 1}

\end{algorithmic}
\end{minipage}

\end{algorithm*}

\subsection{Performance in LAN} To minimize the effect of network communication, we also evaluate \sysname in LAN.

\iheading{1) Varying fault thresholds.}
\figref{fig:t-lan-replica} and \figref{fig:l-lan-replica} report the throughput and latency results. Compared to the WAN setting, the performance gap between \hotstuff/\damysus, and the more optimized protocols narrows, since the cost of additional commit phases is less pronounced.
For $f=32$, \csysname achieves throughput $1.1\times$--$2.2\times$ higher than \damysus and $1.2\times$--$6.3\times$ higher than \hotstuff. As the fault threshold increases from $1$ to $32$, throughput decreases for all protocols but at different rates: \sysname and \damysus degrade by $5.8\times$ and $5.7\times$, respectively, while \csysname shows a more moderate $3.1\times$ decrease. By contrast, \hotstuff suffers the steepest decline ($20.2\times$), whereas \achilles is the most resilient ($1.6\times$).
Thus, \sysname and \csysname are more scalable than \hotstuff and \damysus. Although \achilles still delivers the best performance---reaching 85.8 kTPS throughput and 5.6 ms latency at $f=32$---the performance gap between \csysname and \achilles is notably smaller in LAN than in WAN.

\iheading{2) Varying payload size.}
Fig.~\ref{fig:t-lan-payload} and~\ref{fig:l-lan-payload} show the performance results of five protocols with varying payload sizes in the LAN setting. The experimental configuration is identical to that in WAN. Overall, the trends observed in LAN are consistent with those in WAN: \csysname and \achilles exhibit the most significant degradation, while \sysname, \hotstuff, and \damysus also show performance reductions, though to a lesser extent. Compared with WAN, however, the magnitude of variation in LAN is notably smaller, indicating that the impact of payload size is more limited in the LAN setting.

\iheading{3) Varying batch size.}
Fig.~\ref{fig:t-lan-batch} and \ref{fig:l-lan-batch} illustrate the performance of five protocols with varying batch sizes in LAN. The settings are the same as those in WAN. Similar to the WAN results, increasing the batch size improves throughput while causing only slight changes in latency. Nevertheless, the improvements in LAN are relatively less pronounced, and the latency fluctuations are more modest, reflecting the reduced sensitivity to batch size in low-latency environments.

\begin{figure}
    \centering
    \includegraphics[width=0.95\linewidth]{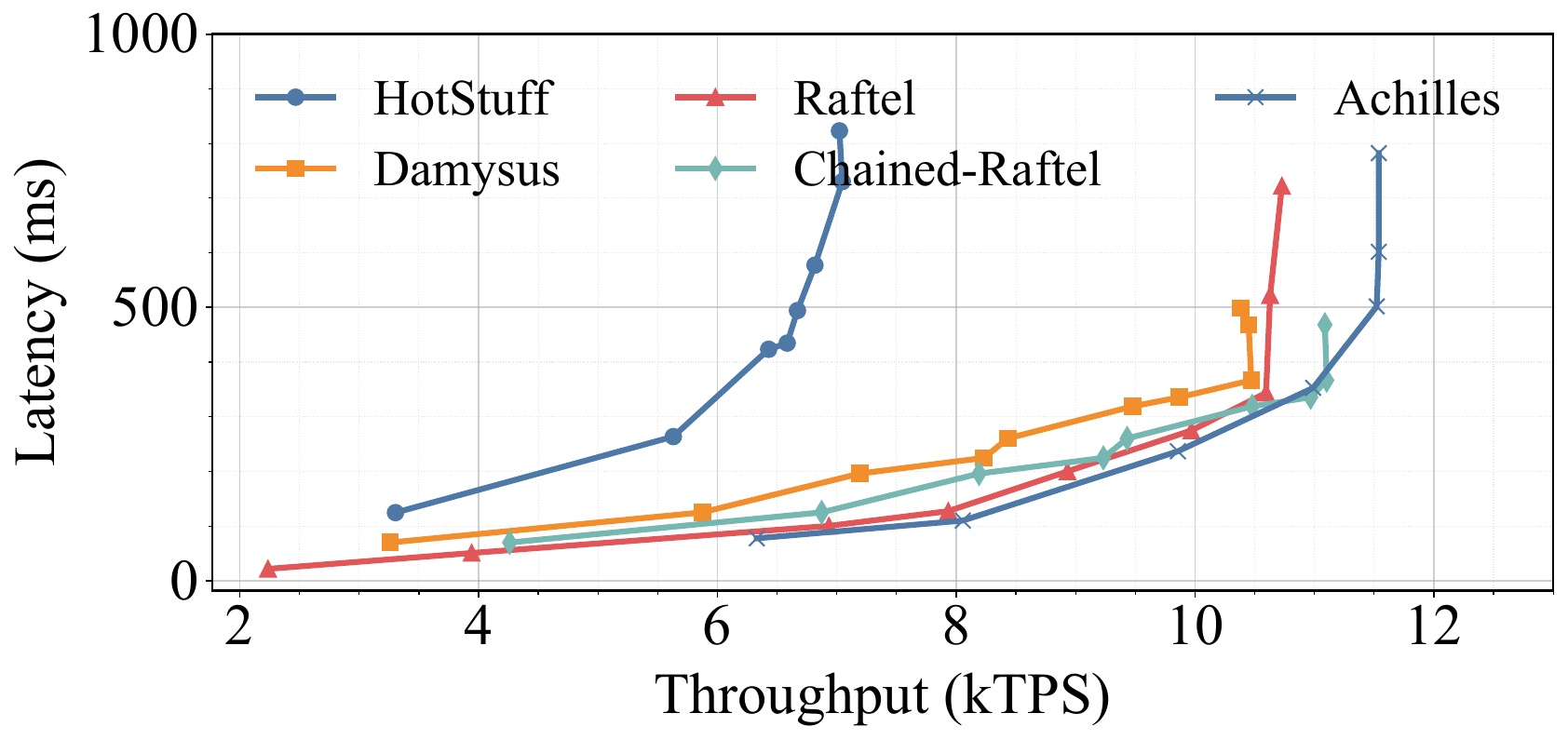}
    \caption{End-to-end Throughput vs. Latency of \sysname in LAN.}
\label{fig:tvl}
\end{figure}

\subsubsection{Throughput vs. latency}
\figref{fig:tvl} illustrates the {end-to-end latency} (\ie, from when clients create transactions to when replies are received) and the corresponding  throughput of the five protocols as the offered load increases until system saturation.
The fault threshold $f$ is set to 8, the payload is 256 B, and the batch size is 400.
The results show that the maximum throughput of \sysname and \csysname is 10.7 kTPS and 11.1 kTPS, respectively.
\sysname achieves significantly better performance compared to \damysus and \hotstuff due to its optimal quorum size of $f+1$ and two-phase commit.
\hotstuff performs worse than \damysus, with a maximum throughput of 7.0 kTPS, since it requires $3f+1$ replicas and $2f+1$ quorum size. 
\achilles shows the highest throughput with a maximum of 11.5 kTPS because of its minimized commit phase and smallest replica size.

\section{Discussion}\label{sec:discussion}

\subsection{Dynamic Membership and Reconfiguration}
While our design assumes a static configuration, it can be extended to support dynamic membership through reconfiguration~\cite{duan2022dynamicbft}. 
Replica joins and leaves are reflected via configuration updates, with quorum formation always based on the active configuration.
Our approach also accommodates changes in TEE availability: replicas can be reclassified between TEE and non-TEE roles (\eg, upon attestation or loss of trust), which updates the corresponding quorum conditions. 
Upon reconfiguration, the protocol adapts accordingly—enabling TEE-based optimizations (\eg, dual-quorum and TEE-leader acceleration) when sufficient TEE guarantees are available, and otherwise falling back to standard BFT behavior. 
This ensures safety under dynamic conditions while preserving performance benefits when TEE assumptions hold.

\subsection{Extensibility to Other BFT Protocols}
Although \sysname is instantiated on a HotStuff-style protocol, its key ideas are not specific to chained consensus. 
Our use of TEE-based non-equivocation to optimize quorum formation and certification can potentially benefit other BFT protocols as well, including Multi-BFT consensus~\cite{stathakopoulou2022state, gupta2021rcc, dqbft, Ladon2025, Orthrus} and DAG-based BFT consensus~\cite{Bullshark, DAGRider}. 
In particular, many Multi-BFT and DAG-based systems internally instantiate multiple single-leader BFT consensus instances. In principle, \sysname can serve as the underlying consensus component for these instances, enabling TEE-aware quorum construction and fast-path execution. 

However, fully integrating \sysname into these protocols is non-trivial. Their global ordering logic, voting dependencies, and commit rules are often tightly coupled with DAG structures or cross-instance interactions. Supporting TEE-aware quorum optimizations in these settings would therefore require protocol-specific redesign and new formal safety analysis. We leave such extensions to future work.

\section{Pseudocode of \csysname} \label{appen:codeChainraftel}
We append the pseudocode of \csysname in Algorithm~\ref{alg:chained}.

\end{document}